\let\NAT@parse\undefined
\newif\ifhideproofs
\title{\LARGE \bf Local Synchronization of Sampled-Data Systems on Lie Groups}
\date{\vspace{-5ex}}
\author{Philip James McCarthy \qquad Christopher Nielsen
  \thanks{This research is supported by the Natural Sciences
    and Engineering Research Council of Canada (NSERC). } \thanks{The
    authors are with the Dept. of Electrical and Computer
    Engineering, University of Waterloo, Waterloo ON, N2L 3G1 Canada. \combined}
}
\begin{document}

\maketitle
\thispagestyle{empty}
\pagestyle{empty}



\begin{abstract}
We present a smooth distributed nonlinear control law for local synchronization of identical driftless kinematic agents on a Cartesian product of matrix Lie groups with a connected communication graph. If the agents are initialized sufficiently close to one another, then synchronization is achieved exponentially fast. We first analyze the special case of commutative Lie groups and show that in exponential coordinates, the closed-loop dynamics are linear. We characterize all equilibria of the network and, in the case of an unweighted, complete graph, characterize the settling time and conditions for deadbeat performance. Using the Baker-Campbell-Hausdorff theorem, we show that, in a neighbourhood of the identity element, all results generalize to arbitrary matrix Lie groups.
\end{abstract}

\section{Introduction}

The sampled-data setup is ubiquitous in applied control. In the LTI case, the plant may be exactly discretized, and a discrete-time controller can be designed such that closed-loop stability is achieved for non-pathological sampling-periods. Such stability guarantees cannot generally be enforced for nonlinear plants, as nonlinear ODEs generally do not have closed-form solutions. The standard approach to nonlinear sampled-data control design, emulation, is to approximate the discretized plant dynamics. If the sampling-period is sufficiently small, then the actual closed-loop system is stable. This technique has two key shortcomings~\cite{Nesic2004}: 1) it may not be possible for a given approximate discretization method, e.g., Euler's method; 2) it relies on fast sampling, which may not be possible due to hardware limitations.

The limitations of emulation do not necessarily pose a problem for the class of systems on matrix Lie groups, which are nonlinear, yet have dynamics that yield exact closed-form solutions~\cite{Elliott2009}, thereby enabling direct design. To our knowledge, sampled-data control of systems on Lie groups has not yet been explored in the literature. However, the closely related class of bilinear systems has been studied in the discrete-time~\cite{Elliott2009} and sampled-data settings~\cite{Sontag1986}.

Many engineering systems are modelled on Lie groups. The motion of robots in a plane is modelled on $\SE{2}$~\cite{Justh2004}, and their motion in space, such as that of UAVs', is modelled on $\SE{3}$~\cite{Roza2014}. Quantum systems evolve on the unitary groups $\mathsf{U}(n)$~\cite{Lohe2009} and $\mathsf{SU}(n)$~\cite{Altafini2012,Albertini2015}. Some circuits can be modelled using Lie groups~\cite{Willsky1976} while oscillator networks~\cite{Dorfler2014} evolve on~$\SO{2}$.

Synchronization of networks on $\SE{3}$ was achieved using passivity in~\cite{Igarashi2009}. Synchronization under sampling was studied for a network of Kuramoto-like oscillators in~\cite{Giraldo2013} and harmonic oscillators with a time-varying period in~\cite{Sun2014}, and path following nonlinear agents in~\cite{Ihle2007}, but the analyses in these works were not conducted from a Lie-theoretic perspective. The Kuramoto network model was extended from $\SO{2}$ to $\SO{n}$ in~\cite{Lohe2009}. A framework for coordinated motion on Lie groups was developed in~\cite{Sarlette2010}, where the synchronization problem that we consider is a special case of what the authors call bi-invariant coordination. In~\cite{Dong2013}, linear consensus algorithms were applied to systems on Lie groups in continuous-time. The most salient difference between the current paper and~\cite{Sarlette2010,Dong2013} is the consideration of the sampled-data setup. Further, in contrast to~\cite{Dong2013}, we take a global perspective, and explore the geometry of the problem to much greater depth.

Lie groups are not vector spaces, but their structure facilitates analysis and control design in global coordinates. The Lie structure has been leveraged, for example, for motion tracking on $\SE{3}$~\cite{Park2014}, and the control of UAV~\cite{Forbes2013} and spacecraft~\cite{Egeland1994} orientation on $\SO{3}$. We too take a global perspective in our control design and, when possible, analysis. 

We present a control law that achieves synchronization for a network of identical agents on any matrix Lie group with driftless dynamics with a connected communication graph. The current paper generalizes and extends the results of~\cite{McCarthy2017}, which considered only unweighted graphs with agents on one-parameter Lie subgroups. The controller requires that each agent have access to its relative state with respect to each of its neighbours. For example, on $\SE{3}$, relative position and orientation can be attained using machine vision~\cite{Mahboubi2011}. We examine the special case where the error dynamics evolve on a Cartesian product of one-parameter subgroups -- a ``generalized cylinder'' -- and the general case. We prove that in both cases, that if the agents are initialized sufficiently close to one another, that synchronization is achieved exponentially fast. For a generalized cylinder, we characterize the performance in the case of an unweighted, complete graph.

\subsection{Notation and Terminology}
If $N \in \Nat$, then $\Nat_N \coloneqq \set{1, \ldots, N}$. Given a matrix $M \in \Complex^{n \times n}$, $M^\top$ is its (non-Hermitian) transpose, and $\lambda_\mathrm{max}(M)$ and $\lambda_\mathrm{min}(M)$ are its its eigenvalues of greatest and least magnitude, respectively. If $x \in\Complex^n$, then $\|x\|$ is its Euclidean norm; if $M \in \Complex^{n \times n}$, then $\|M\|$ is its induced Euclidean norm. Let $\mathbf 1_n \in \Real^n$ and $\mathbf 0_n \in \Real^n$ denote the column vector of ones and zeros, respectively. Let $\mathbf 0_{m \times n} \in \Real^{m \times n}$ denote the matrix of zeros. Let $\Real^-$ denote the set of nonpositive real numbers. Given an equivalence relation $\sim$ on a set $R$, and an element $x \in R$, let $[x] \in R/\sim$ be the coset containing $x$.

Weighted, directed graphs are used to model communication constraints between agents. A graph $\mathcal G$ is a triple $(\mathcal{V}, \mathcal{E},w)$ consisting of a finite set of vertices $\mathcal{V} = \Nat_N$, a set of edges $\mathcal E \subseteq \mathcal{V} \times \mathcal{V}$, and a weight function $w : \mathcal{E} \to [0, 1] \subset \Real$. The weight $w_{ij} \coloneqq w((i,j))$ is nonzero only if $(i, j) \in \mathcal{E}$. If agent $i$ has access to its relative state with respect to agent $j$, then $(i,j) \in \mathcal{E}$. Define vertex $i$'s neighbour set as $\mathcal N_i \coloneqq \{j \in \Nat_N: (i,j) \in \mathcal{E}\}$. We assume that $\mathcal G$ has no self-loops. If $\mathcal{G}$ is unweighted, then for all $i \neq j \in \Nat_N$, $w_{ij} \in \{0,1\}$. Associated with $\mathcal G$ is the Laplacian $L \in \Real^{N \times N}$, defined elementwise as
\begin{equation*}
	L_{ij} = 
		\left\{\begin{array}{rl}
			-w_{ij}, & i \neq j, \\
			\sum_{j \in \mathcal N_i}w_{ij}, & i = j.
		\end{array}\right.
\end{equation*}
The $i$th row of the Laplacian $L$ is denoted $\ell_i$.

\section{Sampled-Data Synchronization Problem}

We consider a network of $N$ controlled agents, each modelled by the differential equation
\begin{equation}\label{eq:ctplant}
  \dot X_i = X_i\left(\sum_{j=1}^mB_{i,j}u_{i,j}\right), \qquad i \in \Nat_N.
\end{equation}
Here $X_i \in \G$ where $\G \subset \GL{n,\Complex}$ is an $m$-dimensional connected matrix Lie group over the complex field $\Complex$ which includes, as a special case, real matrix Lie groups. The matrices $B_{i,j}$ are elements of the Lie algebra $\g$, which is a vector space over a field $\Field$ equal to either $\Complex$ or $\Real$, associated with $\G$, and $u_i \coloneqq (u_{i,1}, \ldots, u_{i,m}) \in \mathbb F^m$ is the control input. Note that the Lie algebra of a complex Lie group $\G$ may in fact be a real vector space. For example, the Lie algebra $\mathfrak{su}(2)$ of the complex Lie group $\SU{2}$ is a vector space over the field of reals despite its vectors being matrices with possibly complex entries. Equation~\eqref{eq:ctplant} is a kinematic model of a system evolving on a matrix Lie group $\G$. Each agent is assumed to be fully actuated in the sense that
\begin{equation*}
  \left(\forall i \in \Nat_N\right) \ \Span_{\mathbb F}\set{B_{i,1}, \ldots, B_{i,m}} = \g.
\end{equation*}
Under this assumption, without loss of generality, we take the
system~\eqref{eq:ctplant} to be driftless since the inputs $u_i$,
$i \in \Nat_N$, can be chosen to cancel any drift vector field. We are
interested in the sampled-data control of this multi-agent system in
which each agent's control law is implemented on an embedded computer,
which we explicitly model using the setup in Figure~\ref{fig:SD}. The
blocks $H$ and $S$ in Figure~\ref{fig:SD} are, respectively, the ideal
hold and sample operators. Sample and hold are, respectively,
idealized models of A/D and D/A conversion.
\begin{figure}[h!]
	\centering
	\includegraphics[width=\textwidth]{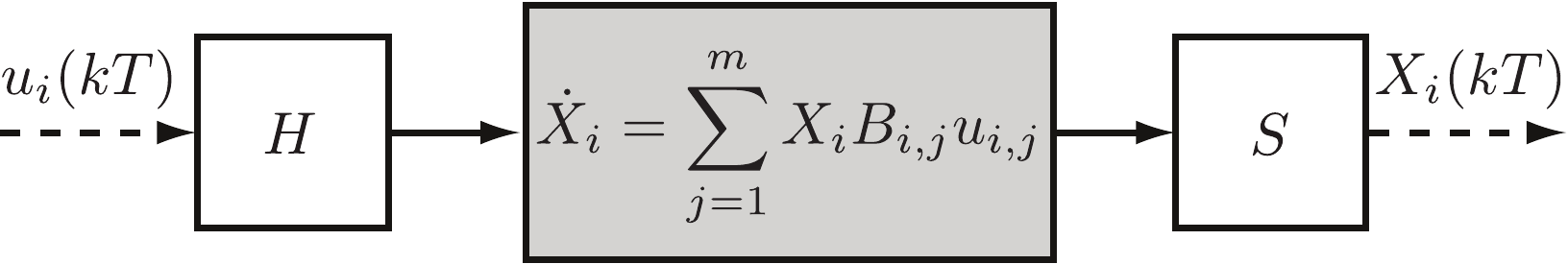}
	\caption{Sampled-data agent on a matrix Lie group $\G$.}
	\label{fig:SD}
\end{figure}
The following assumption is made throughout this paper.
\begin{assumption}\label{ass:HS}
All sample and hold blocks operate at the same period $T > 0$ and the blocks are synchronized for the multi-agent system~\eqref{eq:ctplant}. \hfill $\blacktriangleleft$
\end{assumption}

Under Assumption~\ref{ass:HS}, letting $X_i[k] \coloneqq X_i(kT)$ and $u_i[k] \coloneqq u_i(kT)$, the discretized dynamics of each agent are given by
\begin{equation}\label{eq:dtplant}
  X_i^+ = X_i\exp{\left(T\sum_{j=1}^mB_{i,j}u_{i,j}\right)}, \qquad i \in \Nat_N
\end{equation}
which is an exact discretization of~\eqref{eq:ctplant}. For each $i \in \Nat_N$, define $\Omega_ i \coloneqq \sum_{j=1}^mB_{i,j}u_{i,j} \in \g$. Then the discrete-time dynamics can be compactly expressed as
\begin{equation}\label{eq:dtplant2}
	X_i^+ = X_i\exp{\left(T\Omega_i\right)}, \qquad i \in \Nat_N.
\end{equation}
%

\subsection{The Synchronization Problem}

Given a network of $N$ agents with kinematic dynamics~\eqref{eq:dtplant2}, we define the error quantities $E_{ij} \coloneqq X_i^{-1}X_j$, $i,j \in \Nat_N$. Observe that $E_{ij} = I$ if, and only if, $X_i = X_j$. The error matrix $E_{ij}$ is called left-invariant~\cite{LagTruMah10}, since for all $X \in \G$, $(XX_i)^{-1}(XX_j) = X_i^{-1}X_j$. The class of systems considered has $\G$ for its state space, which is generally not a vector space, so we do not use $X_i - X_j$ as a measure of error.

\begin{myproblem}{Local Synchronization on Matrix Lie Groups}\label{prob:KN}
Given a network of $N$ agents with continuous-time dynamics~\eqref{eq:ctplant}, sampling period $T > 0$ and an unweighted, connected communication graph $\mathcal G = (\mathcal{V}, \mathcal{E})$, find, if possible, distributed control laws $\Omega_i$, $i \in \Nat_N$, such that for all initial errors in a neighbourhood of the identity in $\G^N$, for all $i,j \in \Nat_N$, $E_{ij} \to I$ as $t \rightarrow \infty$.
\end{myproblem}

By a distributed control law we mean that for each agent $i$, the control signal $\Omega_i$ can depend on $E_{ij}$ only if $(i, j) \in \mathcal{E}$. In this paper we propose the distributed feedback control law
\begin{equation}\label{eq:Controller}
	\Omega_i \coloneqq \frac{1}{T}\log\left(\left(\prod_{j \in \mathcal N_i}E_{ij}^{\w_{ij}}\right)^{\frac{1}{K}}\right)
\end{equation}
where $K \in \Real$ is a gain and the matrix logarithm need not be the
principal logarithm. The control law~\eqref{eq:Controller} does not
require agent $i$ to know agent $j$'s state $X_j$, nor its own state
$X_i$, but instead requires knowledge of the relative state
$E_{ij}$. The expression~\eqref{eq:Controller} is well-defined so long
as the product $\prod_{j \in \mathcal N_i}E_{ij}$ has no eigenvalues
in $\Real^-$, as discussed in Section~\ref{sec:Preliminaries}. This
control law is expressed in global coordinates but we only prove local
exponential stability of the synchronized state. When the control
law~\eqref{eq:Controller} is well-defined, the closed-loop
discrete-time dynamics are
\begin{equation}\label{eq:dtplant3}
	X_i^+ = X_i\left(\prod_{j \in \mathcal N_i}E_{ij}^{\w_{ij}}\right)^{\frac{1}{K}}, \qquad i \in \Nat_N
\end{equation}
and the synchronization error dynamics are
%
\begin{align}
	E_{ij}^+ &= \left(X_i^+\right)^{-1}X_j^+ \nonumber \\
           &= \left(\prod_{p \in \mathcal N_i}E_{ip}^{\w_{ip}}\right)^{-\frac{1}{K}}X_i^{-1}X_j\left(\prod_{q \in \mathcal N_j}E_{jq}^{\w_{jq}}\right)^{\frac{1}{K}} \nonumber\\
           &= \left(\prod_{p \in \mathcal N_i}E_{ip}^{\w_{ip}}\right)^{-\frac{1}{K}}E_{ij}\left(\prod_{q \in \mathcal N_j}E_{jq}^{\w_{jq}}\right)^{\frac{1}{K}} \label{eq:error}.
\end{align}

\begin{remark}\label{rem:Order}
The order of multiplication in~\eqref{eq:Controller} need not be common to all agents or even constant.
\hfill $\blacktriangle$
\end{remark}


The control law~\eqref{eq:Controller} is motivated by exponential coordinates for Lie groups, classical consensus algorithms in $\Real^n$, and the notion of Riemannian mean of rotations on $\SO{3}$, which on a one-parameter subgroup thereof can be explicitly computed as $\prod_{i=1}^NR_i^{\frac{1}{N}}$~\cite{Moakher2002}.
 
A key advantage of direct design over emulation, is that stability can be guaranteed at the sampling instants. As mentioned in the Introduction, on $\SE{3}$, the relative error $E_{ij}$ can be computed using machine vision, where the speed of sampling is limited by the frame rate of the camera, for example, $25$ Hz~\cite{Mahboubi2011}. This limits the feasibility of emulation-based design. However, direct design does not guarantee good performance between sampling instants. But in the specific case of the plant and problem discussed in this paper, achieving synchronization at the sampling instants implies synchronization between the sampling instants.

\begin{proposition}\label{prop:Stability}
If $E_{ij}[k] = X_i[k]^{-1}X_j[k]$ asymptotically approaches $I$ as $k \rightarrow \infty$, then $E_{ij}(t) = X_i(t)^{-1} X_j(t)$ asymptotically approaches $I$ as $t \rightarrow \infty$, where $X_i(t)$ and $X_j(t)$ evolve according to~\eqref{eq:ctplant}.
\end{proposition}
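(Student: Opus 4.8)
The plan is to show that, between consecutive sampling instants, the relative error $E_{ij}(t)$ is determined by $E_{ij}[k]$ through a continuous (indeed smooth) map that fixes the identity, so that convergence of the sampled error to $I$ forces convergence of the continuous-time error to $I$. First I would write out $X_i(t)$ explicitly on the interval $[kT, kT+T)$: because the control is held constant at $\Omega_i[k]$ over this interval, the solution of~\eqref{eq:ctplant} is $X_i(t) = X_i[k]\exp\bigl((t-kT)\Omega_i[k]\bigr)$. Hence, with $s \coloneqq t - kT \in [0,T)$,
\begin{equation*}
  E_{ij}(t) = X_i(t)^{-1}X_j(t) = \exp\bigl(-s\,\Omega_i[k]\bigr)\, E_{ij}[k]\, \exp\bigl(s\,\Omega_j[k]\bigr).
\end{equation*}

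Next I would bound this. Using the control law~\eqref{eq:Controller}, $\Omega_i[k] = \tfrac1T\log\bigl((\prod_{p\in\mathcal N_i}E_{ip}[k]^{\w_{ip}})^{1/K}\bigr)$, which is a continuous function of the neighbouring errors $E_{ip}[k]$ that vanishes when all those errors equal $I$ (since $\log I = 0$). By hypothesis $E_{ij}[k]\to I$ for all $i,j$; in particular, for $k$ large the product $\prod_{p\in\mathcal N_i}E_{ip}[k]^{\w_{ip}}$ is near $I$, so it has no eigenvalues in $\Real^-$ and the logarithm is well-defined and continuous there, giving $\Omega_i[k]\to 0$ as $k\to\infty$ for each $i$. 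Then in the displayed formula for $E_{ij}(t)$ we have, for $s\in[0,T)$, $\|\exp(-s\Omega_i[k]) - I\| \le e^{T\|\Omega_i[k]\|} - 1 \to 0$ and likewise for the right factor, while $E_{ij}[k]\to I$; therefore $\sup_{s\in[0,T)}\|E_{ij}(kT+s) - I\| \to 0$ as $k\to\infty$. Since every $t$ large enough lies in some such interval $[kT, kT+T)$ with $k\to\infty$ as $t\to\infty$, this gives $E_{ij}(t)\to I$.

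The only mild subtlety — and the step I would treat most carefully — is the well-definedness and continuity of $\Omega_i[k]$ near the synchronized state: one must invoke the discussion in Section~\ref{sec:Preliminaries} to guarantee that, once $E_{ij}[k]$ is close enough to $I$, the product $\prod_{p\in\mathcal N_i}E_{ip}[k]^{\w_{ip}}$ avoids eigenvalues in $\Real^-$ so that $\log(\cdot)$ (and the fractional power) is continuous with $\log I = 0$. Everything else is a routine estimate using continuity of $\exp$ and the submultiplicativity of the induced norm. (If one wants the stronger statement that $E_{ij}(t)\to I$ exponentially whenever $E_{ij}[k]\to I$ exponentially, the same formula works: $\|E_{ij}(t)-I\|$ is bounded by a constant times $\max_p\|E_{ip}[k]-I\| + \max_q\|E_{jq}[k]-I\| + \|E_{ij}[k]-I\|$ for $t\in[kT,kT+T)$, and each term decays geometrically in $k$, hence exponentially in $t$.)
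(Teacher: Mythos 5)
Your proposal is correct and follows essentially the same route as the paper: write the intersample solution as $E_{ij}(kT+s)=\exp(-s\,\Omega_i[k])\,E_{ij}[k]\,\exp(s\,\Omega_j[k])$, observe $\Omega_i[k]\to 0$ from the control law, and conclude by continuity of $\exp$. Your version is in fact slightly more careful, since you make the convergence uniform over $s\in[0,T)$ and address well-definedness of the logarithm near $I$, whereas the paper's proof fixes a single $\delta$ and appeals to ``$\delta$ arbitrary.''
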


\begin{proof}
If $E_{ij}[k] \to I$, then the proposed control law~\eqref{eq:Controller} satisfies $\Omega_i[k] \to \mathbf 0_{n \times n}$. Let $0 < \delta < T$. Then
\begin{equation*}
\begin{aligned}
  &\lim_{k \to \infty}E_{ij}(kT + \delta) = \lim_{k \to \infty}\exp{\left(\delta\Omega_i[k]\right)}^{-1}E_{ij}[k]\exp{\left(\delta\Omega_j[k]\right)} \\
  &\qquad= \lim_{k \to \infty}\exp{\left(\delta\Omega_i[k]\right)}^{-1}\lim_{k \to \infty}E_{ij}[k]\lim_{k \to \infty}\exp{\left(\delta\Omega_j[k]\right)} \\
  &\qquad= I^3 = I.
\end{aligned}
\end{equation*}
Since $\delta$ is arbitrary, this implies that $E_{ij}(t) \to I$.
\end{proof}

Proposition~\ref{prop:Stability} means that asymptotically stabilizing the set where $E_{ij} = I$, for all $ i,j \in \Nat_N$, at the sample instances is sufficient for solving the synchronization problem. Thus, we can conduct all analysis in the discrete-time setting and do not rely on $T$ being sufficiently small.

Our main result is the following theorem, which we prove in Section~\ref{sec:General}.

\begin{theorem}\label{thm:GeneralGain}
For any Lie group $\G$ with connected communication graph $\mathcal G$, if the gain $K$ of each agent's controller~\eqref{eq:Controller} satisfies~\eqref{eq:K}, then the equilibrium $\{E_{ij} = I : i,j \in \Nat_N\}$ is locally uniformly exponentially stable.
\end{theorem}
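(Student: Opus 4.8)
The plan is to linearize the error dynamics~\eqref{eq:error} about the synchronized equilibrium $\{E_{ij}=I\}$ and show that, under the gain condition~\eqref{eq:K}, the linearization is Schur. First I would introduce exponential coordinates: write $E_{ij}=\exp(\eta_{ij})$ with $\eta_{ij}\in\g$ small, and collect the independent coordinates into a state vector $\eta\in\g^{N}$ (or, exploiting $E_{ij}=E_{ik}E_{kj}$, a vector built from a spanning set such as $\eta_{1j}$, $j\in\Nat_N$). Using the Baker--Campbell--Hausdorff theorem, a product $\prod_{j\in\mathcal N_i}E_{ij}^{\w_{ij}}=\exp\!\big(\sum_{j\in\mathcal N_i}\w_{ij}\eta_{ij}+O(\|\eta\|^2)\big)$, so its $1/K$-th power is $\exp\!\big(\tfrac1K\sum_{j\in\mathcal N_i}\w_{ij}\eta_{ij}+O(\|\eta\|^2)\big)$. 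Substituting into~\eqref{eq:error} and again applying BCH to the triple product gives, to first order,
\begin{equation*}
  \eta_{ij}^+ = \eta_{ij} - \frac1K\sum_{p\in\mathcal N_i}\w_{ip}\eta_{ip} + \frac1K\sum_{q\in\mathcal N_j}\w_{jq}\eta_{jq} + O(\|\eta\|^2),
\end{equation*}
i.e.\ the linearization decouples across a basis of $\g$ and, on each basis coordinate, is governed by the matrix $I_N-\tfrac1K L$ acting on the stacked agent coordinates (the error-coordinate linearization is the induced map on the quotient by $\operatorname{span}\mathbf 1_N$). This is exactly the commutative-case computation from the earlier section, now justified near the identity by BCH; I would cite that section for the explicit bookkeeping rather than repeat it.

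Next I would analyze the spectrum. Since $\mathcal G$ is connected, $L$ has a simple eigenvalue $0$ with eigenvector $\mathbf 1_N$ (which corresponds to the common-motion direction that is quotiented out in the error coordinates) and all remaining eigenvalues in the open right half-plane; write them $0=\mu_1<\operatorname{Re}\mu_2\le\cdots$. On the synchronization subspace the linearized map has eigenvalues $1-\mu_r/K$ for $r\ge 2$, and the gain condition~\eqref{eq:K} is precisely the requirement that $|1-\mu_r/K|<1$ for all such $r$ — equivalently $K$ lies in the intersection of the open disks of radius $|K|$... more concretely, that $\operatorname{Re}(\mu_r/K)\in(0,2)$ for every nonzero Laplacian eigenvalue (for real $K$ and a real Laplacian with complex eigenvalues this becomes a condition relating $\operatorname{Re}\mu_r$, $|\mu_r|$ and $K$). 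Granting~\eqref{eq:K}, the Jacobian of the error dynamics at the equilibrium is Schur.

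Finally, local uniform exponential stability of the nonlinear discrete-time system follows from Lyapunov's indirect method for discrete-time systems: a fixed point whose linearization is Schur is locally exponentially stable, and since the dynamics~\eqref{eq:error} are time-invariant the stability is automatically uniform; Proposition~\ref{prop:Stability} then lifts this to the continuous-time inter-sample behaviour if desired (though the theorem as stated is already a discrete-time claim). Two technical points need care and I expect them to be the main obstacles. First, well-posedness: the control law~\eqref{eq:Controller} requires $\prod_{j\in\mathcal N_i}E_{ij}^{\w_{ij}}$ to have no eigenvalues in $\Real^-$, so I must argue that a sufficiently small neighbourhood of $\{E_{ij}=I\}$ is forward-invariant under~\eqref{eq:dtplant3} and stays inside the region where all the logarithms and fractional powers in~\eqref{eq:error} are defined and smooth — this is where exponential convergence of the linear part is used to close the loop. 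Second, the branch/coordinate bookkeeping: because the matrix logarithm in~\eqref{eq:Controller} need not be principal and the multiplication order in the product is arbitrary (Remark~\ref{rem:Order}), I should verify that near the identity the consistent choice is the principal branch and that reordering the product only perturbs things at second order (again BCH), so none of this affects the Jacobian. Once these are in place, the spectral computation does the rest.
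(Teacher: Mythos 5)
Your proposal is correct and follows essentially the same route as the paper: the paper also uses the BCH expansion (Lemma~\ref{lem:BCH}, Corollaries~\ref{cor:Morphism} and~\ref{cor:LocalRoot}) to reduce the general case to the exact linear dynamics~\eqref{eq:LinearDynamics} established for generalized cylinders, invokes Lemma~\ref{lem:AllBut1} for the spectral condition on $I_N + (\mathbf 1_N\ell_1 - L)/K$, and disposes of the residual eigenvalue at $1$ exactly as you do, via the trivial self-error direction (your quotient by $\operatorname{span}\mathbf 1_N$ is the paper's invariant subspace $\set{\p_{11}=\mathbf 0_m}$). Your explicit appeal to the discrete-time Lyapunov indirect method, and your flagging of the well-posedness and branch/ordering issues, spell out steps the paper leaves to its ``mutatis mutandis'' reduction, but the approach is the same.
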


\section{Preliminaries}
\label{sec:Preliminaries}

\subsection{Functions of matrices}\label{sec:matfunc}
For every nonsingular matrix $X \in \Complex^{n \times n}$ there are (infinitely many) $A \in \Complex^{n \times n}$ such that $\exp{(A)} = X$, see~\cite[Theorem 1.27]{Higham2008}. Every such matrix $A$ is a non-primary logarithm of $X$, which we denote by $\log{(X)}$. If, in addition to being nonsingular, the matrix $X$ has no eigenvalues in $\Real^-$, then it has a (unique) principal logarithm.

\begin{theorem}[{\cite[Theorem 1.31]{Higham2008}}]\label{thm:higham}
Let $X \in \Complex^{n \times n}$ have no eigenvalues in $\Real^-$. There is a unique logarithm $A \in \Complex^{n \times n}$ of $X$, all of whose eigenvalues lie in the strip $\{z : -\pi < \mathrm{Im}(z) < \pi\}$. If $X \in \Real^{n \times n}$, then $A \in \Real^{n \times n}$.
\end{theorem}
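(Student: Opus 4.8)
The plan is to establish existence and uniqueness of the principal logarithm over $\Complex$ first, using the Jordan canonical form together with the principal branch of the scalar logarithm, and then to recover the real statement by a conjugation-symmetry argument.

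For existence, I would note that $\Real^-$ contains $0$, so the hypothesis forces $X$ to be nonsingular; writing $X = ZJZ^{-1}$ in Jordan form with blocks $J_p = \lambda_p I + N_p$ ($N_p$ nilpotent, $\lambda_p \notin \Real^-$), I would define $\operatorname{Log} J_p$ by the usual finite Taylor sum $\sum_{j \ge 0}\tfrac1{j!}\operatorname{Log}^{(j)}(\lambda_p) N_p^{\,j}$ built from the principal scalar logarithm $\operatorname{Log}$, which is holomorphic on $\Complex \setminus \Real^-$, maps into the strip $S \coloneqq \{z : -\pi < \mathrm{Im}(z) < \pi\}$, and satisfies $e^{\operatorname{Log} z} = z$ there. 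Setting $A \coloneqq Z\,\mathrm{diag}(\operatorname{Log} J_p)\,Z^{-1}$, the primary-matrix-function calculus gives $\exp(A) = X$, and the spectrum of $A$ is exactly the set $\{\operatorname{Log}\lambda_p\} \subset S$.

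For uniqueness, suppose $B \in \Complex^{n \times n}$ is another logarithm of $X$ whose spectrum lies in $S$. The key structural facts I would invoke are: (i) the principal logarithm $A$ is a \emph{primary} matrix function of $X$, hence a polynomial in $X$ \cite[Ch.~1]{Higham2008}; and (ii) $B$ commutes with $X = \exp(B)$, hence $B$ commutes with $A$. Commutativity then gives $\exp(A - B) = \exp(A)\exp(-B) = XX^{-1} = I$; a quick Jordan-block argument shows that $\exp(M) = I$ forces $M$ to be semisimple with spectrum in $2\pi i\,\mathbb Z$ (a nontrivial Jordan block of $M$ at $\mu$ would produce a nontrivial Jordan block of $\exp(M)$ at $e^\mu \ne 0$, and each eigenvalue $\mu$ must satisfy $e^\mu = 1$). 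Finally, since $A$ and $B$ commute they are simultaneously triangularizable, so in a common triangularizing basis the diagonal of $A - B$ consists of differences $\alpha - \beta$ of an eigenvalue of $A$ and one of $B$; each such difference is an eigenvalue of $A - B$, hence lies in $2\pi i\,\mathbb Z$, and also has imaginary part in $(-2\pi, 2\pi)$, so it must be $0$. Thus $A - B$ is semisimple with spectrum $\{0\}$, i.e.\ $A - B = 0$.

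For the real case, if $X \in \Real^{n \times n}$ and $A = \operatorname{Log} X$ is the logarithm just constructed, then $\overline{A}$ satisfies $\exp(\overline{A}) = \overline{\exp(A)} = \overline{X} = X$ and has spectrum $\overline{\mathrm{spec}(A)} \subset S$, since $S$ is symmetric about the real axis; by the uniqueness already proved, $\overline{A} = A$, i.e.\ $A \in \Real^{n \times n}$. I expect the uniqueness step to be the main obstacle: the non-obvious point is that one must exploit that $A$ is a primary matrix function — a polynomial in $X$ — to force $A$ and $B$ to commute before the simultaneous-triangularization-plus-strip argument can be run, whereas existence and the real-case reduction are comparatively routine.
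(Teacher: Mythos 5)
The paper does not prove this statement---it is quoted verbatim from Higham's book (Theorem 1.31)---so there is no in-paper argument to compare against. Your proof is correct and is essentially the standard one: existence via the Jordan-block construction of the primary matrix function $\operatorname{Log}$, uniqueness via the observation that any other logarithm $B$ commutes with $X=\exp(B)$ and hence with $A$ (a polynomial in $X$), so that $\exp(A-B)=I$ together with semisimplicity and the width-$2\pi$ strip forces $A=B$, and the real case by conjugation symmetry plus uniqueness.
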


The unique matrix $A$ from Theorem~\ref{thm:higham} is called the \textbf{principal logarithm of $X$} and is denoted $\Log(X)$. Unlike complex numbers, it is not possible to express $\log{(X)}$ as a function of $\Log{(X)}$ for arbitrary non-singular matrices. If $\|X - I\| < 1$, then
\begin{equation}\label{eq:Log}
	\Log(X) = \sum_{k=1}^\infty \frac{(-1)^{k-1}}{k} (X - I)^k.
\end{equation}

Any matrix logarithm is a right inverse of the matrix exponential, but not necessarily a left inverse. On a matrix Lie group, the principal logarithm $\Log$ is a left inverse, but only in a neighbourhood of the identity. Choose $r > 0$ such that~\eqref{eq:Log} converges on $\set{X \in \G : X = \exp{(A)}, \|A\| < r}$, e.g., $r = \Log(2)$ is a valid choice with any Lie group. Larger values of $r$ may be possible for specific Lie groups. The set
\begin{equation*}
	U \coloneqq \set{X \in \G: X = \exp{(A)}, A \in \g, \|A\| < r}
\end{equation*}
is an open neighbourhood of $I$ in $\G$ in the group topology in which $\Log : U \to \g$ provides an inverse.

Borrowing from the definition of complex powers of scalars~\cite[Chapter III, \S 6]{Lang1999} and the form of the square root of a matrix on a Lie group~\cite[Lemma 2.14]{Hall2015}, we define the $K$th root of a matrix in the following way.

\begin{definition}\label{def:KthRoot}
Let $X \in \Complex^{n \times n}$ have no eigenvalues in $\Real^-$. Given $K \in \Real$, the \textbf{principal $K$th root} of $X$ is
\begin{equation}\label{eq:Root}
	X^\frac{1}{K} \coloneqq \exp\left(\frac{1}{K}\Log(X)\right).
\end{equation}
\end{definition}

If $X \in \G$, then $X^{1/K} \in \G$, due to the Lie correspondence $\Log : \G \to \g$.

\begin{remark}
If $X^{1/K}$ is well-defined, then for $K \in \Nat$,
\begin{equation*}
	\left(X^\frac{1}{K}\right)^K = \exp\left(\sum_{i = 1}^K\frac{1}{K}\Log(X)\right) = \exp\left(\Log(X)\right) = A.
\end{equation*}
Thus, in this case, $X = X^{1/K}X^{1/K}\cdots X^{1/K}$ ($K$ times), which is the intuitive notion of a $K$th root. The somewhat indirect definition~\eqref{eq:Root} allows for $K$th roots for $K \in \Real$.
\hfill $\blacktriangle$
\end{remark}

Throughout this paper, we use an important algebraic property of the logarithm of a matrix power.
\begin{theorem}[{\cite[Theorem 11.2]{Higham2008}}]
  If $X \in \Complex^{n \times n}$ has no eigenvalues in $\Real^-$,
  then for $\alpha \in [-1,1]$, we have
  $\Log(X^\alpha) = \alpha\Log(X)$.
\end{theorem}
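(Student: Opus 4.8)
The plan is to reduce the identity to the uniqueness clause of Theorem~\ref{thm:higham}. Following the pattern of Definition~\ref{def:KthRoot}, write $X^\alpha \coloneqq \exp(\alpha\Log(X))$ for the principal power; then the asserted equality says exactly that $\Log$ undoes $\exp$ on the particular matrix $\alpha\Log(X)$. The only structural fact needed is that $\Log$ is a genuine left inverse of $\exp$ on every matrix whose spectrum lies in the open strip $\mathcal S \coloneqq \{z \in \Complex : -\pi < \mathrm{Im}(z) < \pi\}$, so the whole argument is bookkeeping about where eigenvalues sit.

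First I would record the spectral behaviour of the matrix exponential. Putting $A$ in Schur form $A = Q U Q^{*}$ with $Q$ unitary and $U$ upper triangular with diagonal entries $\mu_1, \dots, \mu_n$, one has $\exp(A) = Q\exp(U)Q^{*}$ with $\exp(U)$ upper triangular and diagonal $e^{\mu_1}, \dots, e^{\mu_n}$; hence the eigenvalues of $\exp(A)$ are precisely the numbers $e^{\mu_i}$. In particular $\exp(A)$ is always nonsingular, and $e^{\mu}$ is a negative real number if and only if $\mathrm{Im}(\mu)$ is an odd multiple of $\pi$. Consequently, if the spectrum of $A$ lies in $\mathcal S$, then $\exp(A)$ has no eigenvalues in $\Real^-$, so it has a principal logarithm, and by the uniqueness in Theorem~\ref{thm:higham} — applied to the logarithm $A$ itself, which also has spectrum in $\mathcal S$ — we conclude $\Log(\exp(A)) = A$.

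Now let $\mu$ be any eigenvalue of $\Log(X)$. By Theorem~\ref{thm:higham}, $|\mathrm{Im}(\mu)| < \pi$, hence for $\alpha \in [-1,1]$ we have $|\mathrm{Im}(\alpha\mu)| = |\alpha|\,|\mathrm{Im}(\mu)| < \pi$; that is, the spectrum of $\alpha\Log(X)$ again lies in $\mathcal S$. Applying the previous paragraph with $A = \alpha\Log(X)$ yields $\Log(\exp(\alpha\Log(X))) = \alpha\Log(X)$, i.e.\ $\Log(X^\alpha) = \alpha\Log(X)$. When $X$ is real, $\Log(X)$ is real by Theorem~\ref{thm:higham}, so $\alpha\Log(X)$ is real and coincides with the real principal logarithm of $X^\alpha$, consistent with the statement.

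I expect no real difficulty here; the only point that must be handled with care is the strictness of the inequality $|\mathrm{Im}(\mu)| < \pi$, which is precisely why the hypothesis restricts $\alpha$ to $[-1,1]$: for $|\alpha| > 1$ the matrix $\alpha\Log(X)$ can leave $\mathcal S$, $X^\alpha$ may then acquire a negative real eigenvalue so that $\Log(X^\alpha)$ is not even defined, and the identity breaks.
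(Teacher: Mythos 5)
Your argument is correct. Note that the paper does not actually prove this statement---it is imported verbatim from \cite[Theorem 11.2]{Higham2008}---so there is no in-paper proof to compare against; your route (define $X^\alpha \coloneqq \exp(\alpha\Log(X))$ as in Definition~\ref{def:KthRoot}, check via the Schur form that $\exp$ maps the open strip $\{z : -\pi < \mathrm{Im}(z) < \pi\}$ spectrally into $\Complex \setminus \Real^-$, and then invoke the uniqueness clause of Theorem~\ref{thm:higham} to conclude that $\alpha\Log(X)$ \emph{is} the principal logarithm of $X^\alpha$) is the standard argument, and the one delicate point---that $|\alpha|\le 1$ together with the strict inequality $|\mathrm{Im}(\mu)| < \pi$ keeps the spectrum of $\alpha\Log(X)$ inside the open strip---is handled explicitly and correctly.
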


\subsection{Exponential Coordinates and One-Parameter Subgroups}
\label{sec:one}
\begin{definition}[One-Parameter Subgroup]
\label{def:one}
Given a Lie group $\G$, a one-parameter subgroup is a continuous morphism of groups $\phi : \Real \to \G$.
\end{definition}

Although this terminology is standard, it is technically the image of the map $\phi$ that is a subgroup of $\G$. The subgroup $\phi(\Real) \subset \G$ is a one-dimensional manifold and there exists a unique $H \in \g$ such that $\phi(\theta) = \exp(\theta H)$ for all $\theta \in \Real$~\cite[Theorem 2.13]{Hall2015}.

To generalize the concept of one-parameter subgroups to higher dimensional manifolds, we consider \textbf{generalized cylinders}. A generalized cylinder is an $m$-dimensional manifold that is diffeomorphic to $\mathbb{T}^k \times \Real^{m - k}$. Such a diffeomorphism exists if and only if there exist $m$ commutative and everywhere-linearly-independent vector fields on the manifold~\cite[pg. 274]{Arnold1989}. If the manifold is a Lie group $\G$, then this simplifies to its Lie algebra $\g$ having a commutative basis.

Let $\G$ be such a manifold and fix a commutative basis $\mathcal{H} \coloneqq \{H_1,\ldots,H_m\}$ for its Lie algebra $\g$. Consider the one-parameter groups $\phi_i : \Real \to \G$ associated with each $H_i$. The image of $\phi(t_1,\ldots,t_m) \coloneqq \phi_1(t_1)\phi_2(t_2)\cdots\phi_m(t_m)$ is $\G$. Without loss of generality, let $\phi_i$, $i \in \Nat_k$, $0 \leq k \leq m$ have nonzero kernel, and let $\phi_i$, $i \in \{k + 1,\ldots,m\}$ have zero kernel.

Fixing such a basis $\mathcal{H}$, the $\Log$ map induces local coordinates on $\G \cap U$. Given $X \in \G$, by commutativity of $H_1,\ldots,H_m$,
\begin{equation*}
\begin{aligned}
	X &= \exp(t_1 H_1)\cdots\exp(t_m H_m) \\
	&= \exp(t_1 H_1 + \cdots + t_m H_m).
\end{aligned}
\end{equation*}
If $X \in \G \cap U$, then $\Log(X) = t_1 H_1 + \cdots + t_m
H_m$.
Then, by linear independence of $H_1,\ldots,H_m$, $t_1,\ldots,t_m$ can
be uniquely determined, yielding local coordinates
$(t_1,\ldots,t_m) \in \Real^m$. Thus, a Lie group $\G$ can be locally
identified with an open subset of the vector space $\Real^m$
containing the origin. Note that, by commutativity of $\G$, these
local coordinates coincide with the familiar exponential coordinates
of both the first and second kind.

\subsection{Properties of the composed flow}
The map $\phi : \Real^m \to \G$, defined in the previous section, is critical to our analysis throughout this paper. In this section, we establish important properties of $\phi$ when $\G$ is a generalized cylinder, we then show that these properties hold approximately for any Lie group in a neighbourhood of the identity.

By definition, $\phi$ is surjective onto its image, but it is not necessarily injective. Let $p : \Real^m \to \Real^m/\Ker(\phi)$ be the projection of $\Real^m$ onto the quotient space $\Real^m/\Ker(\phi)$. There exists a unique isomorphism of groups $\phi'$ such that the following diagram commutes~\cite[Theorem 26, Corollary 1]{MacLane1999}.
\begin{equation*}
	\xymatrix{\Real^m \ar[drr]_\phi \ar[rr]^p 	&& \Real^m/\Ker(\phi) \ar[d]^{\phi'} \\
							&& \G}
\end{equation*}

The bijection $\phi'$ yields alternative global coordinates on the quotient group $\Real^m/\Ker(\phi)$; it will be used in our characterization of equilibria. When $\G$ is a generalized cylinder, the map $\phi$ has several important properties.

\begin{proposition}\label{prop:Morphism}
If $\G$ is a generalized cylinder, then $\phi : \Real^m \to \G$ is a morphism of groups.
\end{proposition}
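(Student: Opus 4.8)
The plan is to show that $\phi(s+t) = \phi(s)\phi(t)$ for all $s, t \in \Real^m$, where $\phi(t_1,\ldots,t_m) = \phi_1(t_1)\cdots\phi_m(t_m) = \exp(t_1 H_1)\cdots\exp(t_m H_m)$. Since each $\phi_i$ is itself a one-parameter subgroup (hence a morphism $\Real \to \G$), the only thing that can go wrong is the interaction between the different factors: in general $\phi(s)\phi(t)$ interleaves the $\exp(s_i H_i)$ and $\exp(t_j H_j)$ factors in the ``wrong'' order, and one must commute them past each other to regroup. So the key fact to exploit is that the basis $\mathcal H = \{H_1,\ldots,H_m\}$ is commutative, i.e. $[H_i, H_j] = 0$ for all $i,j$, which for matrix Lie groups gives $\exp(A)\exp(B) = \exp(B)\exp(A) = \exp(A+B)$ whenever $A, B$ lie in $\Span_{\Field}\mathcal H$.

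The steps I would carry out, in order, are: (i) Record that because $\mathcal H$ is a commutative basis, any two elements of $\g$ written in this basis commute, so for each $i,j$ the subgroups $\phi_i(\Real)$ and $\phi_j(\Real)$ commute elementwise — this follows from $\exp(s_i H_i)\exp(t_j H_j) = \exp(s_i H_i + t_j H_j) = \exp(t_j H_j)\exp(s_i H_i)$. (ii) Use this to rearrange
\begin{equation*}
\phi(s)\phi(t) = \exp(s_1 H_1)\cdots\exp(s_m H_m)\exp(t_1 H_1)\cdots\exp(t_m H_m)
\end{equation*}
by sliding each $\exp(t_j H_j)$ leftward past all the $\exp(s_i H_i)$ with $i > j$, collecting it next to $\exp(s_j H_j)$; then $\exp(s_j H_j)\exp(t_j H_j) = \exp((s_j+t_j)H_j)$ since $\phi_j$ is a morphism. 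This yields $\phi(s)\phi(t) = \exp((s_1+t_1)H_1)\cdots\exp((s_m+t_m)H_m) = \phi(s+t)$. Alternatively, and more cleanly, I would simply invoke the already-established identity in the preceding subsection, $\phi(t_1,\ldots,t_m) = \exp(t_1 H_1 + \cdots + t_m H_m)$ (valid globally here by commutativity, not merely on $U$), together with the fact that $t \mapsto \exp(t_1 H_1 + \cdots + t_m H_m)$ is the restriction of the exponential to the abelian subalgebra $\Span_\Field \mathcal H = \g$, on which $\exp(A)\exp(B) = \exp(A+B)$; additivity of $\phi$ is then immediate, and $\phi(0) = I$, so $\phi$ is a group morphism. (iii) Remark that $\phi$ is continuous, being a finite product of continuous maps, so it is indeed a morphism of Lie groups.

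The main obstacle — really the only subtlety — is justifying the global use of $\exp(A)\exp(B) = \exp(A+B)$: the formula $\Log(X) = t_1 H_1 + \cdots + t_m H_m$ from the previous subsection was stated only for $X \in \G \cap U$, but the displayed identity $\exp(t_1 H_1)\cdots\exp(t_m H_m) = \exp(t_1 H_1 + \cdots + t_m H_m)$ itself holds for all $t \in \Real^m$ because it rests solely on the pairwise commutativity $[t_i H_i, t_j H_j] = 0$ and the identity $\exp(A)\exp(B)=\exp(A+B)$ for commuting matrices, neither of which requires proximity to $I$. I would make this point explicitly so the reader does not mistakenly think the argument is only local. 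Everything else — associativity, regrouping, continuity — is routine.
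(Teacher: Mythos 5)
Your proposal is correct and follows essentially the same route as the paper's proof: both verify $\phi(\mathbf 0_m) = I$ and then use pairwise commutativity of the basis $\{H_1,\ldots,H_m\}$ to regroup the exponential factors and conclude $\phi(\p_i + \p_j) = \phi(\p_i)\phi(\p_j)$. Your additional remarks on continuity and on the global (not merely local) validity of $\exp(A)\exp(B)=\exp(A+B)$ for commuting matrices are sound but not needed beyond what the paper already does.
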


\begin{proof}
It is clear that $\phi(\mathbf 0_m) = \exp(0)\cdots\exp(0) = I$.

Let $\p_i = (\p_i^{(1)},\ldots,\p_i^{(m)}) \in \Real^m$ and $\p_j = (\p_j^{(1)},\ldots,\p_j^{(m)}) \in \Real^m$, where $\phi(\p_i) = X_i$ and $\phi(\p_j) = X_j$. By commutativity of $H_1,\ldots,H_m$,
\begin{equation*}
\begin{aligned}
	\phi(\p_i + \p_j) &= \exp\left((t_i^{(1)} + t_j^{(1)})H_1\right)\cdots\exp\left((t_i^{(m)} + t_j^{(m)})H_m\right) \\
	&= \exp(t_i^{(1)}H_1)\cdots\exp(t_i^{(m)}H_m)\exp(t_j^{(1)}H_1)\cdots\exp(t_j^{(m)}H_m) \\
	&= \phi(\p_i)\phi(\p_j).
\end{aligned}
\end{equation*}
\end{proof}

\begin{lemma}\label{lem:LocalRoot}
Let $\G$ be a generalized cylinder. If $K > 0$ and $\phi(\p) \in U$, then $\phi(\p/K) = \phi(\p)^{1/K}$.
\end{lemma}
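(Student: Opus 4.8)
The plan is to unwind both sides using the definitions from Section~\ref{sec:matfunc} and the commutativity that makes $\phi$ a group morphism (Proposition~\ref{prop:Morphism}). Write $\p = (t_1,\ldots,t_m)$. By the coordinate discussion preceding this lemma, since $\phi(\p) \in U$ we have $\Log(\phi(\p)) = t_1 H_1 + \cdots + t_m H_m$, the point being that on a generalized cylinder the composed flow $\phi$ agrees with $\exp$ applied to the linear combination $\sum_i t_i H_i$, so $\phi(\p) = \exp(\sum_i t_i H_i)$ lies in $U$ precisely when this combination has norm less than $r$, and then $\Log$ genuinely inverts $\exp$ here.

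From there the computation is short. By Definition~\ref{def:KthRoot},
\begin{equation*}
  \phi(\p)^{1/K} = \exp\left(\tfrac{1}{K}\Log(\phi(\p))\right) = \exp\left(\tfrac{1}{K}\sum_{i=1}^m t_i H_i\right) = \exp\left(\sum_{i=1}^m \tfrac{t_i}{K} H_i\right).
\end{equation*}
On the other hand, $\p/K = (t_1/K,\ldots,t_m/K)$, and again by commutativity of the $H_i$,
\begin{equation*}
  \phi(\p/K) = \exp\left(\tfrac{t_1}{K}H_1\right)\cdots\exp\left(\tfrac{t_m}{K}H_m\right) = \exp\left(\sum_{i=1}^m \tfrac{t_i}{K}H_i\right),
\end{equation*}
which matches. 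This establishes the identity.

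The only subtlety — and the one step I would be careful about — is the hypothesis $K > 0$ and the assumption $\phi(\p) \in U$: these guarantee that $\phi(\p)$ has no eigenvalues in $\Real^-$ (it lies in the image of $\exp$ restricted to the ball of radius $r \le \Log 2$, so $\|\phi(\p) - I\| < 1$ and all eigenvalues are within distance $1$ of $1$), so that the principal $K$th root in Definition~\ref{def:KthRoot} is actually well-defined, and that the coordinate representation $\Log(\phi(\p)) = \sum_i t_i H_i$ is valid. I do not expect any genuine obstacle here; the content is entirely bookkeeping with the definitions, and the role of the lemma is to record that ``dividing the coordinates by $K$'' is the same as ``taking the $K$th root'' on the cylinder, a fact that will be used repeatedly when analyzing the error dynamics~\eqref{eq:error}. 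One could optionally also remark that the same argument shows $\phi(\alpha\p) = \phi(\p)^{\alpha}$ for any $\alpha \in \Real$ as long as $\phi(\p) \in U$, but only the stated case is needed.
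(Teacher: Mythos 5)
Your proof is correct and follows essentially the same route as the paper's: both collapse the product of one-parameter flows into a single exponential via commutativity of the $H_i$, identify $\Log(\phi(\p))$ with $\sum_i t_i H_i$ on $U$, and invoke Definition~\ref{def:KthRoot}. Your added remarks on why the principal $K$th root is well-defined are sound but not part of the paper's argument.
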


\begin{proof}
By the commutativity of $H_1,\ldots,H_m$,
\begin{equation*}
\begin{aligned}
	\phi\left(\frac{\p}{K}\right) &= \exp\left(\frac{t_1}{K}H_1\right)\cdots\exp\left(\frac{t_m}{K}H_m\right) \\
	&= \exp\left(\frac{1}{K}(t_1H_1 + \cdots + t_mH_m)\right) \\
	&= \exp\left(\frac{1}{K}\Log(\phi(\p))\right) \\
	&= \phi(\p)^\frac{1}{K}.
\end{aligned}
\end{equation*}
\end{proof}

\section{Equilibria on Generalized Cylinders}
Since we consider driftless kinematic models, the system is at equilibrium if, and only if, every agent's input is zero, i.e., for all $i \in \Nat_N$, $\Omega_i = \mathbf 0_{n \times n}$. We show that all equilibria are isolated and exhibit the same stability properties.

Hereinafter, we use the notation $\p_{ij} \coloneqq \phi^{-1}(E_{ij})$ and $\bigp \coloneqq \begin{bmatrix}\p_{11}^\top & \p_{12}^\top & \cdots & \p_{1N}^\top\end{bmatrix}^\top \in \Real^{Nm}$. Define $\bigp_\mathbb{T}$ and $\bigp_\Real$ as the projections under $p$ of $\bigp$ onto $\mathbb T^{Nk}$ and $\Real^{N(m - k)}$, respectively.

\begin{proposition}\label{prop:OPEquilibria}
If the controller~\eqref{eq:Controller} is well-defined, then the equilibria of~\eqref{eq:error} on a generalized cylinder are characterized by $\left[\frac{1}{K}(L \otimes I_k)\bigp_\mathbb{T}\right] = [\mathbf 0_{Nk}]$, $(L \otimes I_{m-k})\bigp_\Real = \mathbf 0_{N(m-k)}$, $[\p_{11}] = [\mathbf 0_m]$.
\end{proposition}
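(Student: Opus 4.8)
The plan is to translate the equilibrium condition into a linear condition on $\bigp$ by working in the coordinates furnished by the morphism $\phi$. First, since the model is driftless and kinematic, a configuration is an equilibrium of~\eqref{eq:error} exactly when every input vanishes, i.e.\ $\Omega_i=\mathbf 0_{n\times n}$ for all $i\in\Nat_N$; by~\eqref{eq:Controller}, and because $\Log(Y)=0$ forces $Y=\exp(\Log(Y))=I$, this is equivalent to $(\prod_{p\in\mathcal N_i}E_{ip}^{\w_{ip}})^{1/K}=I$ for every $i$. The standing hypothesis that the controller is well-defined --- each $\prod_{p\in\mathcal N_i}E_{ip}^{\w_{ip}}$ has no eigenvalue in $\Real^-$ --- makes these roots and logarithms meaningful.

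Next I would express the left-hand side in $\phi$-coordinates. Because $\G$ is a generalized cylinder, Proposition~\ref{prop:Morphism} gives that $\phi$ is a group morphism and $\G$ is abelian; together with the rule $\Log(X^\alpha)=\alpha\Log(X)$ for $\alpha\in[-1,1]$ this yields $E_{ip}^{\w_{ip}}=\phi(\w_{ip}\p_{ip})$, hence $\prod_{p\in\mathcal N_i}E_{ip}^{\w_{ip}}=\phi(\sum_{p\in\mathcal N_i}\w_{ip}\p_{ip})$, whose principal $K$th root is $\phi(\tfrac1K\sum_{p\in\mathcal N_i}\w_{ip}\p_{ip})$ by Lemma~\ref{lem:LocalRoot}. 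Invoking the cocycle identity $E_{ip}=E_{1i}^{-1}E_{1p}$ --- which, $\phi$ being a morphism, gives $\p_{ip}\equiv\p_{1p}-\p_{1i}\pmod{\Ker\phi}$ --- I would rewrite $\sum_{p\in\mathcal N_i}\w_{ip}\p_{ip}$ as the $i$-th block $-(\ell_i\otimes I_m)\bigp$ of $-(L\otimes I_m)\bigp$. The equilibrium condition then reads $\phi(-\tfrac1K(\ell_i\otimes I_m)\bigp)=I$ for every $i$, that is, $\tfrac1K(L\otimes I_m)\bigp\in(\Ker\phi)^N$.

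Finally I would resolve this congruence via the product structure $\G\cong\mathbb T^k\times\Real^{m-k}$, under which $\Ker\phi$ is a lattice carried by the $k$ toroidal coordinates and trivial on the $m-k$ Euclidean ones. Applying $p$ blockwise and splitting into the $\mathbb T^{Nk}$- and $\Real^{N(m-k)}$-parts, the Euclidean part of $\tfrac1K(L\otimes I_m)\bigp\in(\Ker\phi)^N$ becomes $(L\otimes I_{m-k})\bigp_\Real=\mathbf 0_{N(m-k)}$, while the toroidal part becomes $[\tfrac1K(L\otimes I_k)\bigp_\mathbb{T}]=[\mathbf 0_{Nk}]$. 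The remaining condition $[\p_{11}]=[\mathbf 0_m]$ is not an extra equilibrium constraint: it only records that $E_{11}=X_1^{-1}X_1=I$, so $\p_{11}\in\Ker\phi$, which is what makes $\bigp$ an admissible parametrization of a configuration. The converse direction follows by running the equivalences backwards.

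I expect the delicate step to be the lattice bookkeeping in the second paragraph. The congruences $\p_{ip}\equiv\p_{1p}-\p_{1i}$, and hence $\sum_{p\in\mathcal N_i}\w_{ip}\p_{ip}\equiv-(\ell_i\otimes I_m)\bigp$, hold only modulo $\Ker\phi$, and since the weights $\w_{ip}$ need not be integers one must ensure that scaling coset representatives does not corrupt them; the clean remedy is to fix the branch of $\phi^{-1}$ through the local inverse on the neighbourhood $U$ of Section~\ref{sec:matfunc} (equivalently, to carry the argument in the quotient group through $\phi'$). One must likewise verify that, because the principal $K$th root is built from the principal logarithm --- whose values lie in a fixed strip --- the identity $\phi(\tfrac1K\Log(\cdot))=I$ produces exactly the $\tfrac1K$-scaled lattice condition in the statement and no coarser one. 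The abelian manipulations, the $\mathbb T^k\oplus\Real^{m-k}$ splitting, and the converse are routine.
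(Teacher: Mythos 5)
Your proposal is correct and follows essentially the same route as the paper's proof: reduce equilibrium to $\bigl(\prod_{p\in\mathcal N_i}E_{ip}^{\w_{ip}}\bigr)^{1/K}=I$, pass to $\phi$-coordinates via the morphism property and the $K$th-root lemma, substitute $E_{ip}=E_{1i}^{-1}E_{1p}$ to stack the conditions into $\bigl[\tfrac{1}{K}(L\otimes I_m)\bigp\bigr]=[\mathbf 0_{Nm}]$, and split into the toroidal and Euclidean blocks using triviality of $\Ker(\phi_i)$ for $i>k$. Your phrasing of the lattice bookkeeping as congruences modulo $\Ker\phi$ is just the paper's explicit passage to the quotient group through $\phi'$, so the concern you flag is resolved exactly as you suggest.
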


\begin{proof}
The sampled dynamics of each agent are given by~\eqref{eq:dtplant3}. Therefore, the system is at equilibrium if, and only if, for all $i \in \Nat_N$, $\exp(T\Omega_i) = I$. If $\G$ is a generalized cylinder, then, by commutativity and Definition~\ref{def:KthRoot}, this condition becomes
\begin{equation}\label{eq:EquilibriumInput}
	I = \exp(T\Omega_i) = \left(\prod_{j \in \mathcal N_i}E_{ij}^{\w_{ij}}\right)^{\frac{1}{K}} = \prod_{j \in \mathcal N_i}E_{1i}^{-\frac{\w_{ij}}{K}}E_{1j}^{\frac{\w_{ij}}{K}}.
\end{equation}
In the global coordinates admitted by $\phi'$ we have
\begin{equation*}
\begin{aligned}
	\phi'^{-1}(I) &= \sum_{j \in \mathcal N_i}\left(\phi'^{-1}\left(E_{1i}^{-\frac{\w_{ij}}{K}}\right) + \phi'^{-1}\left(E_{1j}^{\frac{\w_{ij}}{K}}\right)\right) \\
	[\mathbf 0_m] &= \sum_{j \in \mathcal N_i}\left(\left[\frac{\w_{ij}}{K}\p_{1j}\right] - \left[\frac{\w_{ij}}{K}\p_{1i}\right]\right)
	= -\left[\frac{1}{K}(\ell_i \otimes I_m)\bigp\right].
\end{aligned}
\end{equation*}
%

We ``stack" the inputs for all agents $i$, yielding the equation
\begin{equation*}
	\left[\frac{1}{K}(L \otimes I_m)\bigp\right] = [\mathbf 0_{Nm}],
\end{equation*}
which can be rewritten as
\begin{equation*}
	\left[\frac{1}{K}\begin{bmatrix}L \otimes I_k & \mathbf 0_{Nk \times N(m - k)} \\
						\mathbf 0_{N(m - k) \times Nk} & L \otimes I_{m - k}\end{bmatrix}
						\begin{bmatrix}\bigp_\mathbb{T} \\ \bigp_\Real\end{bmatrix}\right] = 
						\left[\begin{bmatrix}\mathbf 0_{Nk} \\ \mathbf 0_{N(m - k)}\end{bmatrix}\right].
\end{equation*}
By assumption, $\Ker(\phi_i) = \{0\}$ for $i \in \{k+1,\ldots,m\}$, so the condition $\left[\frac{1}{K}(L \otimes I_{m - k})\right] = [\mathbf 0_{N(m - k)}]$ simplifies to equality on $\Real^{N(m - k)}$, rather than congruence on a quotient space. Lastly, since $\p_{11}$ is the error of agent $1$ with itself, $[\p_{11}] = [\mathbf 0_m]$.
\end{proof}

\begin{proposition}
On a generalized cylinder, all equilibria are isolated.
\end{proposition}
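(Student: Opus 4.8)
The plan is to pass to the global coordinates supplied by $\phi'$ and show that the equilibrium set of Proposition~\ref{prop:OPEquilibria}, seen in those coordinates, is discrete. First I would record that for a generalized cylinder $\phi$ is a local diffeomorphism: by commutativity $\phi(t_1,\dots,t_m)=\exp(t_1H_1+\cdots+t_mH_m)$, so $d\phi_{\mathbf 0_m}$ is the isomorphism $e_i\mapsto H_i$; since $\phi$ is a morphism of groups (Proposition~\ref{prop:Morphism}) and left translations on $\G$ are diffeomorphisms, $\phi$ is a local diffeomorphism everywhere, hence a covering map onto $\G$, its kernel is a lattice, and the induced bijection $\phi':\Real^m/\Ker(\phi)\to\G$ is a diffeomorphism. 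Taking $N$-fold products identifies the error space with $(\Real^m/\Ker(\phi))^N\cong\mathbb T^{Nk}\times\Real^{N(m-k)}$ smoothly, so it suffices to show that the equilibrium set is discrete in these coordinates.

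Next I would dispose of the Euclidean directions. Connectedness of $\mathcal G$ gives $\Ker(L)=\Span\{\mathbf 1_N\}$, hence $\Ker(L\otimes I_{m-k})=\mathbf 1_N\otimes\Real^{m-k}$, so $(L\otimes I_{m-k})\bigp_\Real=\mathbf 0_{N(m-k)}$ forces the $\Real$-part of $\p_{1i}$ to be independent of $i$; the condition $[\p_{11}]=[\mathbf 0_m]$, which on this block is an honest equality since the kernel there is trivial, pins the common value to $\mathbf 0_{m-k}$. Thus $\bigp_\Real=\mathbf 0_{N(m-k)}$ and the whole equilibrium set lies in the compact slice $\mathbb T^{Nk}\times\{\mathbf 0\}$; it remains to show discreteness there.

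The core is the torus part. Accounting for the principal logarithm and the gain (using the restriction~\eqref{eq:K} on $K$), the equilibrium condition of Proposition~\ref{prop:OPEquilibria} is equivalent to $\bar\bigp_\mathbb T\in\Ker(\overline{L\otimes I_k})$ together with $[\p_{11}]=[\mathbf 0_m]$, where $\overline{L\otimes I_k}:\mathbb T^{Nk}\to\mathbb T^{Nk}$ is the torus endomorphism induced by $L\otimes I_k$ — well defined precisely because $\mathcal G$ is unweighted, so $L$ is integer-valued and $L\otimes I_k$ maps the lattice $\Ker(\phi)^N$ into itself. Now $\Ker(\overline{L\otimes I_k})$ is a closed subgroup of a torus, hence has finitely many connected components, and its identity component is the image of $\Ker(L\otimes I_k)=\mathbf 1_N\otimes\Real^k$ — the ``diagonal'' subtorus — again by connectedness of $\mathcal G$. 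So $\Ker(\overline{L\otimes I_k})$ is a finite union of translates of the diagonal subtorus, and each such translate meets the subtorus $\{[\p_{11}]=[\mathbf 0_m]\}$ in a single point (setting the first block to zero determines the diagonal shift within the coset uniquely). Hence the equilibrium set is finite, so discrete, and under $(\phi')^N$ it consists of isolated equilibria of~\eqref{eq:error}.

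The step I expect to be the real obstacle is the passage from ``no continuous freedom'' to genuine isolatedness: the diagonal-intersection computation shows the equilibrium set is zero-dimensional almost for free, but to rule out a countable, accumulating family one needs the arithmetic of the unweighted Laplacian — its integrality, hence that $L\otimes I_k$ descends to a torus endomorphism whose kernel has only finitely many components — together with the bound on $K$ from~\eqref{eq:K}. A related point demanding care is that the congruence in Proposition~\ref{prop:OPEquilibria} is well defined independently of the lifts chosen in $\Real^m$, which again rests on the integrality of $L$.
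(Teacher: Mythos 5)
Your overall strategy --- pass to the coordinates induced by $\phi'$, dispose of the Euclidean directions via connectedness of $\mathcal G$, and then argue discreteness on the compact torus factor --- is a genuinely different route from the paper's. The paper never forms the quotient torus: its proof consists of observing that each $\Ker(\phi_i)$ is either trivial (for $i>k$, so the congruences there are honest equalities) or, by viewing $\phi_i$ as a flow with a periodic orbit, a discrete lattice $d_i\Int$, so that distinct representatives of any coset $[0]$ are separated by at least $d_i$; isolation is read off from this discreteness directly in $\Real^{Nm}$. Your handling of the Euclidean block is correct and in fact more detailed than the paper's.

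The torus step, which you correctly flag as the crux, has a genuine gap. For $\frac{1}{K}(L\otimes I_k)$ to descend to an endomorphism $\overline{L\otimes I_k}$ of $\mathbb T^{Nk}$ it must map the lattice $\Ker(\phi)^N$ into itself, and neither of the hypotheses you invoke delivers this: (i) the proposition sits in the paper's weighted setting --- the controller~\eqref{eq:Controller} and Proposition~\ref{prop:OPEquilibria} carry weights $w_{ij}\in[0,1]$, and the paper uses non-integer Laplacians such as~\eqref{eq:Triangular} --- so $L$ need not be integer-valued; and (ii) even for integer $L$, \eqref{eq:K} is only a lower bound $K>K_\mathrm{min}(N)$ and admits non-integer $K$ (the paper simulates $K=3.5$), so the factor $1/K$ destroys lattice preservation and the induced torus map is simply undefined. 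Without the endomorphism, the structural facts you rely on --- that the kernel is a closed subgroup of the torus equal to a finite union of translates of the diagonal subtorus --- are unavailable, and the projected solution set of $\left[\frac{1}{K}(L \otimes I_k)\bigp_\mathbb{T}\right] = [\mathbf 0_{Nk}]$ can in fact fail to be discrete on the torus when the relevant ratios are irrational. The fix is to argue upstairs in $\Real^{Nk}$, as the paper implicitly does: the preimage of the discrete set $\Ker(\phi)^N$ under $\frac{1}{K}(L\otimes I_k)$, intersected with the closed subgroup enforcing $[\p_{11}]=[\mathbf 0_m]$, is a closed subgroup of $\Real^{Nk}$ whose identity component lies in $(\mathbf 1_N\otimes\Real^k)\cap(\{\mathbf 0_k\}\times\Real^{(N-1)k})=\{\mathbf 0_{Nk}\}$, and a closed subgroup of $\Real^{Nk}$ with trivial identity component is discrete. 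This recovers isolation, though not your stronger claim of finiteness, and reduces to exactly the ingredient the paper isolates, namely the discreteness of each $\Ker(\phi_i)$.
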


\begin{proof}
By assumption, $\Ker(\phi_i) = \{0\}$, for all $i \in \{k+1,\ldots,m\}$. Thus $\Real/\Ker(\phi_i) \cong \Real$. Thus, for $j \in \Nat_N$, $[\p_{1j}^{(i)}] = [0]$ simplifies to $\p_{1j}^{(i)} = 0$.

By assumption, $\Ker(\phi_i) \neq \{0\}$, for all $i \in \Nat_k$. The
map $\phi_i$ can be viewed as a flow, thus, by~\cite[Theorem
2.12]{Bhatia1970}, there exists a $d_i > 0$ such that for every
$r_i \in [0]$, we have $r_i = q_id_i$ for some $q_i \in \Int$. Thus,
for all $j \in \Nat_N$, if $\p_{1j}^{(i)}, \bar\p_{1j}^{(i)} \in [0]$,
$\p_{1j}^{(i)} \neq \bar\p_{1j}^{(i)}$, then
$|\p_{1j}^{(i)} - \bar\p_{1j}^{(i)}| \geq d_i$.
\end{proof}

\begin{proposition}\label{prop:1PTorus}
On a generalized cylinder, every equilibrium has the same stability properties as the identity.
\end{proposition}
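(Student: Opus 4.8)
The plan is to show that every equilibrium is the image of the synchronized state $\{E_{ij}=I:i,j\in\Nat_N\}$ under a diffeomorphism of $\G^N$ that conjugates the error dynamics~\eqref{eq:error} to themselves. Since the local stability type of a fixed point (Lyapunov, asymptotic, or exponential, together with any decay rate) is invariant under conjugation by a locally bi-Lipschitz map, this immediately yields the claim. On a generalized cylinder the natural candidate for the diffeomorphism is a group translation: $\G$ is abelian, and left translation $X\mapsto gX$ on a matrix Lie group is linear, hence a global bi-Lipschitz diffeomorphism.

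Concretely, let $\{E_{ij}=\bar E_{ij}:i,j\in\Nat_N\}$ be an equilibrium (a solution of the conditions in Proposition~\ref{prop:OPEquilibria}), and define $\hat E_{ij}\coloneqq\bar E_{ij}^{-1}E_{ij}$; in the coordinates $\bigp$ this is simply the translation $\bigp\mapsto\bigp-\bar\bigp$, and it sends the equilibrium to $\{\hat E_{ij}=I\}$. I would then verify that $\hat E$ obeys exactly the recursion~\eqref{eq:error}. Write $P_i\coloneqq\prod_{p\in\mathcal N_i}E_{ip}^{\w_{ip}}$ and $\bar P_i\coloneqq\prod_{p\in\mathcal N_i}\bar E_{ip}^{\w_{ip}}$; the equilibrium condition~\eqref{eq:EquilibriumInput} says $\bar P_i^{1/K}=I$, hence also $\bar P_i^{-1/K}=I$. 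Substituting $E_{ip}=\bar E_{ip}\hat E_{ip}$ and using commutativity, one obtains $P_i=\bar P_i\left(\prod_{p\in\mathcal N_i}\hat E_{ip}^{\w_{ip}}\right)$, and then, again by commutativity, $P_i^{-1/K}=\bar P_i^{-1/K}\left(\prod_{p\in\mathcal N_i}\hat E_{ip}^{\w_{ip}}\right)^{-1/K}=\left(\prod_{p\in\mathcal N_i}\hat E_{ip}^{\w_{ip}}\right)^{-1/K}$, and likewise $P_j^{1/K}=\left(\prod_{q\in\mathcal N_j}\hat E_{jq}^{\w_{jq}}\right)^{1/K}$. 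Substituting these into~\eqref{eq:error} and cancelling $\bar E_{ij}^{-1}\bar E_{ij}=I$ leaves $\hat E_{ij}^+=\left(\prod_{p\in\mathcal N_i}\hat E_{ip}^{\w_{ip}}\right)^{-1/K}\hat E_{ij}\left(\prod_{q\in\mathcal N_j}\hat E_{jq}^{\w_{jq}}\right)^{1/K}$, which is~\eqref{eq:error} verbatim. Hence the fixed point $\{E_{ij}=\bar E_{ij}\}$ and the synchronized state have the same local stability properties, which is the assertion.

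The step I expect to be the main obstacle is justifying the ``splitting'' identities $(\bar E_{ip}\hat E_{ip})^{\w_{ip}}=\bar E_{ip}^{\w_{ip}}\hat E_{ip}^{\w_{ip}}$ and $(\bar P_i\,\hat P_i)^{-1/K}=\bar P_i^{-1/K}\hat P_i^{-1/K}$, where $\hat P_i\coloneqq\prod_{p\in\mathcal N_i}\hat E_{ip}^{\w_{ip}}$: these require not only commutativity but also that the imaginary parts of the eigenvalues of the two factors' principal logarithms sum into the strip $\{z:|\mathrm{Im}(z)|<\pi\}$. This is why one must restrict to a sufficiently small neighbourhood of the equilibrium, so that $\hat E_{ip}$ and $\hat P_i$ are close to $I$ (with principal logarithms of small norm), while invoking that $\bar E_{ip}$ and $\bar P_i$ have no eigenvalues in $\Real^-$ — exactly the standing hypothesis that the controller is well-defined — so that their principal logarithms already lie strictly inside that strip. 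In coordinates, this is the local analogue, valid near any equilibrium rather than only near $I$, of the additivity of $\phi$ established in Proposition~\ref{prop:Morphism} and the root identity of Lemma~\ref{lem:LocalRoot}. Everything else in the argument is routine.
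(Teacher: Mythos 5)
Your proposal is correct and follows essentially the same route as the paper: translate the error by the equilibrium (your $\hat E_{ij}=\bar E_{ij}^{-1}E_{ij}$ is the paper's $\bar E_{ij}=\Xi_{ij}^{-1}E_{ij}$), use commutativity to factor the neighbour products, and invoke the equilibrium condition~\eqref{eq:EquilibriumInput} to cancel the equilibrium factors, recovering~\eqref{eq:error} verbatim. Your additional remark about when the principal logarithm splits across a product of commuting factors is a genuine subtlety that the paper's proof glosses over, but it does not change the argument.
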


\begin{proof}
Let $\{\Xi_{i1},\ldots,\Xi_{iN}\} \in \G^N$ be an equilibrium. Define $\bar E_{ij} \coloneqq \Xi_{ij}^{-1}E_{ij}$. Then $\bar E_{ij} = I$ if and only if $\bar E_{ij} = \Xi_{ij}$. The error dynamics~\eqref{eq:error} can be expressed in terms of $\bar E_{ij}$:
\begin{equation*}
\begin{aligned}
  \Xi_{ij}\bar E_{ij}^+ &= \left(\prod_{p \in \mathcal N_i}(\Xi_{1i}\bar E_{1i})^{-\w_{ip}}(\Xi_{1p}\bar E_{1p})^{\w_{ip}}\right)^{-\frac{1}{K}}\Xi_{ij}\bar E_{ij} 
   \times \left(\prod_{q \in \mathcal N_j}(\Xi_{1j}\bar E_{1j})^{-\w_{jq}}(\Xi_{1q}\bar E_{1q})^{\w_{jq}}\right)^{\frac{1}{K}} \\
  \bar E_{ij}^+ &= \left(\prod_{p \in \mathcal N_i}\bar E_{1i}^{-\w_{ip}}\bar E_{1p}^{\w_{ip}}\right)^{-\frac{1}{K}}\bar E_{ij}\left(\prod_{q \in \mathcal N_j}\bar E_{1j}^{-\w_{jq}}\bar E_{1q}^{\w_{jq}}\right)^{\frac{1}{K}} 
   \left(\prod_{p \in \mathcal N_i}\Xi_{1i}^{-\w_{ip}}\Xi_{1p}^{\w_{ip}}\right)^{-\frac{1}{K}}\left(\prod_{q \in \mathcal N_j}\Xi_{1j}^{-\w_{jq}}\Xi_{1q}^{\w_{jq}}\right)^{\frac{1}{K}} \\
  &= \left(\prod_{p \in \mathcal N_i}\bar E_{ip}^{\w_{ip}}\right)^{-\frac{1}{K}}\bar E_{ij}\left(\prod_{q \in \mathcal N_j}\bar E_{jq}^{\w_{jq}}\right)^{\frac{1}{K}} 
   \left(\prod_{p \in \mathcal N_i}\Xi_{ip}^{\w_{ip}}\right)^{-\frac{1}{K}}\left(\prod_{q \in \mathcal N_j}\Xi_{jq}^{\w_{jq}}\right)^{\frac{1}{K}}.
\end{aligned}
\end{equation*}

By~\eqref{eq:EquilibriumInput}, the $\Xi_{ij}$ product terms equal
identity, thus the $\bar E_{ij}$ dynamics have the same form as the
dynamics~\eqref{eq:error} of $E_{ij}$, and therefore have the same
qualitative behaviour.
\end{proof}

Proposition~\ref{prop:1PTorus} says that the dynamics near every equilibrium ``look the same''. Thus, by analyzing only the equilibrium at identity, we characterize the behaviour near all equilibria.

\section{Synchronization on Generalized Cylinders}

In this section, we consider the case where $\G$ is a generalized
cylinder. This means that $X_i \in \G$, $i \in \Nat_N$, which implies
$E_{ij} \in \G$, for all $i,j \in \Nat_N$. This is done only to
simplify discussion. The results of this section hold under the weaker
assumption that the errors $E_{ij}$ lie on a generalized cylinder.

\begin{proposition}\label{prop:invar}
Any generalized cylinder $\G$, on which~\eqref{eq:Controller} is well-defined for all forward time, is positively invariant for~\eqref{eq:error}.
\end{proposition}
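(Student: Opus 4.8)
The plan is to show that if $\G$ is a generalized cylinder on which the controller~\eqref{eq:Controller} is well-defined for all forward time, then starting from errors $E_{ij} \in \G$ we have $E_{ij}^+ \in \G$ for all $i,j \in \Nat_N$. The point is essentially algebraic: the update map~\eqref{eq:error} is built entirely out of group operations (products, inverses) and the principal $K$th root, so I need only check that each of these keeps us inside $\G$.

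First I would argue that the $K$th roots appearing in~\eqref{eq:error} stay in $\G$. By Proposition~\ref{prop:Morphism}, $\phi : \Real^m \to \G$ is a morphism of groups, so $\prod_{p \in \mathcal N_i}E_{ip}^{\w_{ip}}$ is an integer/real power product of elements of $\G$ and hence lies in $\G$ (for non-integer weights $\w_{ip}$ this again uses Definition~\ref{def:KthRoot} together with the Lie correspondence $\Log : \G \to \g$, $\exp : \g \to \G$). Because the controller is assumed well-defined for all forward time, this product has no eigenvalues in $\Real^-$, so its principal $1/K$ root exists; by Definition~\ref{def:KthRoot}, $\bigl(\prod_{p}E_{ip}^{\w_{ip}}\bigr)^{1/K} = \exp\bigl(\tfrac{1}{K}\Log(\prod_p E_{ip}^{\w_{ip}})\bigr)$, and since $\Log$ maps $\G$ into $\g$ and $\exp$ maps $\g$ into $\G$, this root belongs to $\G$. (Equivalently, one may invoke Lemma~\ref{lem:LocalRoot}: $\phi(\p/K) = \phi(\p)^{1/K}$, which exhibits the root as $\phi$ of a point in $\Real^m$, hence in $\G$.) The same argument applies with $i$ replaced by $j$.

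Then I would simply read off~\eqref{eq:error}: $E_{ij}^+$ is a product of the element $\bigl(\prod_{p \in \mathcal N_i}E_{ip}^{\w_{ip}}\bigr)^{-1/K} \in \G$, the element $E_{ij} \in \G$, and the element $\bigl(\prod_{q \in \mathcal N_j}E_{jq}^{\w_{jq}}\bigr)^{1/K} \in \G$. Since $\G$ is a group, the product of three of its elements (one of them an inverse, which again lies in $\G$) is in $\G$. Hence $E_{ij}^+ \in \G$ for every pair $i,j$, and the stacked error state remains in $\G^{N^2}$ (or $\G^N$, using the $E_{1j}$ coordinates). Iterating over all forward sampling instants gives positive invariance.

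The only subtlety — and the part worth stating carefully rather than the ``main obstacle'', since there is no deep difficulty here — is the well-definedness hypothesis: one must invoke it at each step $k$ to guarantee that $\prod_{j \in \mathcal N_i}E_{ij}^{\w_{ij}}$ has no eigenvalues in $\Real^-$ so that the principal $K$th root in~\eqref{eq:dtplant3}/\eqref{eq:error} makes sense, and this is exactly what ``well-defined for all forward time'' provides. Given that, the proof is a one-line consequence of $\G$ being closed under products, inverses, and (via $\exp$, $\Log$) principal roots.
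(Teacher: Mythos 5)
Your proposal is correct and follows essentially the same route as the paper's proof: close under group products and inverses, note that the well-definedness hypothesis guarantees the principal $K$th root exists and lands in $\G$ via $\exp\circ\frac{1}{K}\Log$, and induct on the time index. Your extra care with the non-integer weights $\w_{ip}$ (again via Definition~\ref{def:KthRoot} and the Lie correspondence) is a detail the paper's own proof glosses over, but it does not change the argument.
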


\begin{proof}
Let $k \in \Int$ be arbitrary and suppose that for all $i, j \in \Nat_N$, $E_{ij}[k] \in \G$. Let $q \in \Nat_N$ be arbitrary. By elementary group theory $\prod_{j \in \mathcal{N}_q} E_{qj}[k] \in \G$ and therefore, by hypothesis, its $K$th root is well-defined. Thus, by definition of $\log$, $\exp{(T\Omega_q[k])} \in \G$. Since $E_{ij}^+ = \exp(-T\Omega_i)E_{ij}\exp(T\Omega_j)$,  we have $E_{ij}[k+1] \in \G$. Induction on the time index proves positive invariance of $\G$.
\end{proof}

%

Using the exponential coordinates from Section~\ref{sec:one} we identify each relative error $E_{ij} \in \G \cap U$ with its exponential coordinates $\p_{ij} \in \Real^m$. We henceforth impose that the synchronization errors and their products over neighbour sets be close to the identity.

\begin{assumption}\label{ass:Close}
For all $i,j \in \Nat_N$, we have $E_{ij} \in U$ and $\prod_{j \in \mathcal N_i}E_{ij}^{w_{ij}} \in U$. \hfill $\blacktriangleleft$
\end{assumption}

Let $\G$ be a generalized cylinder and suppose $E_{ij}[0] \in \G$, $i,j \in \Nat_N$.  By Proposition~\ref{prop:invar}, for all $k \geq 0$, $E_{ij}[k] \in \G$.

It follows from its definition that $\phi$ is a local diffeomorphism in a neighbourhood of the identity element. We first apply the identity that for all $i,j \in \Nat_N$, $E_{ij} = E_{1i}^{-1}E_{1j}$ to~\eqref{eq:error}:
\begin{equation}\label{eq:ErrorAll1}
	E_{ij}^+ = \left(\prod_{p \in \mathcal N_i}E_{1i}^{-w_{ip}}E_{1p}^{w_{ip}}\right)^{-\frac{1}{K}}E_{ij}\left(\prod_{q \in \mathcal N_j}E_{1j}^{-w_{jq}}E_{1q}^{w_{jq}}\right)^{\frac{1}{K}}.
\end{equation}
Applying Proposition~\ref{prop:Morphism} and Lemma~\ref{lem:LocalRoot} to~\eqref{eq:ErrorAll1}, we have
%
%
\begin{equation*}\label{eq:AngleDynamics2}
\begin{aligned}
	\p_{ij}^+ &= \p_{ij} - \frac{1}{K}\sum_{p \in \mathcal N_i}\w_{ip}(\p_{1p} - \p_{1i}) + \frac{1}{K}\sum_{q \in \mathcal N_j}\w_{jq}(\p_{1q} - \p_{1j}) \\
	&= \p_{ij} - \frac{1}{K}\Bigg(\sum_{p \in \mathcal N_i}\w_{ip}\p_{1p} - \Bigg(\sum_{p \in \mathcal N_i}\w_{ip}\Bigg)\p_{1i}\Bigg) 
		+ \frac{1}{K}\Bigg(\sum_{q \in \mathcal N_j}\w_{jq}\p_{1q} - \Bigg(\sum_{q \in \mathcal N_j}\w_{jq}\Bigg)\p_{1j}\Bigg) \\
	&= \p_{ij} + \frac{1}{K}(\ell_i \otimes I_m)\bigp - \frac{1}{K}(\ell_j \otimes I_m)\bigp \\
	&= \p_{ij} + \frac{1}{K}((\ell_i - \ell_j) \otimes I_m)\bigp.
\end{aligned}
\end{equation*}
Setting $i = 1$ and ``stacking" the last line for all $j$, we obtain
\begin{equation}\label{eq:LinearDynamics}
	\bigp^+ = \left(\left(I_N + \frac{1}{K}(\mathbf 1_N\ell_1 - L)\right) \otimes I_m\right) \bigp.
\end{equation}
Thus, the local error dynamics are linear. It is interesting to note that the form of~\eqref{eq:LinearDynamics} implies that the dynamics on each one-parameter subgroup are decoupled. The eigenvalues of the state matrix in~\eqref{eq:LinearDynamics} are the $m$-times-repeated eigenvalues of $I + (\mathbf 1_N\ell_1 - L)/K$~\cite[Chapter 12, \S 5]{Bellman1960}.The linear dynamics~\eqref{eq:LinearDynamics} are (exponentially) stable if and only if the matrix $I + (\mathbf 1_N\ell_1 - L)/K$ is Schur. We must therefore establish conditions on the gain $K$ such that all eigenvalues of $I + (\mathbf 1_N\ell_1 - L)/K$ are in the open unit disc.

The Laplacian $L$ of the graph $\mathcal G$ is positive semidefinite, with a zero eigenvalue of algebraic multiplicity equal to the number of connected components in $\mathcal G$~\cite[Lemma 13.1.1]{GodRoy01}; the eigenvector associated with the $0$ eigenvalue is $\mathbf 1_N$.

\begin{lemma}\label{lem:Spectrum1L1}
The spectrum of $\mathbf 1_N\ell_1 - L$ equals $\sigma(-L)$.
\end{lemma}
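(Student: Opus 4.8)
The plan is to exhibit an invertible matrix that conjugates $\mathbf 1_N\ell_1 - L$ to a block-triangular matrix whose diagonal blocks are $0$ (a $1\times 1$ block) and $-L$ restricted to a complementary subspace, so that the spectrum is read off directly. The natural choice is built from the structure of $\ell_1$: since $\ell_1$ is the first row of the Laplacian, $\ell_1 \mathbf 1_N = 0$, so $\mathbf 1_N \ell_1$ is a rank-one nilpotent-like perturbation whose column space is $\operatorname{span}\{\mathbf 1_N\}$ and whose row space is $\operatorname{span}\{\ell_1^\top\}$. First I would record the two facts I will use: $L\mathbf 1_N = \mathbf 0_N$ and $\ell_1 \mathbf 1_N = 0$ (the row sums of $L$ vanish). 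These immediately give $(\mathbf 1_N\ell_1 - L)\mathbf 1_N = \mathbf 1_N(\ell_1\mathbf 1_N) - L\mathbf 1_N = \mathbf 0_N$, so $\mathbf 1_N$ is an eigenvector of $\mathbf 1_N\ell_1 - L$ with eigenvalue $0$, matching the known $0$-eigenvector of $-L$.

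Next I would pick a basis adapted to the splitting $\Real^N = \operatorname{span}\{\mathbf 1_N\}\oplus W$, where $W$ is any complement; a convenient concrete choice is $W = \{x : e_1^\top x = 0\}$ (first coordinate zero) or the standard ``difference'' coordinates $x_j - x_1$. Let $P = \begin{bmatrix}\mathbf 1_N & V\end{bmatrix}$ with columns of $V$ spanning $W$. Computing $P^{-1}(\mathbf 1_N\ell_1 - L)P$: the first column is $\mathbf 0_N$ by the eigenvector computation above; for the remaining columns, write $(\mathbf 1_N\ell_1 - L)V = \mathbf 1_N(\ell_1 V) - LV$, and observe that $\mathbf 1_N(\ell_1 V)$ has all its mass in the $\mathbf 1_N$-direction, hence contributes only to the top block-row of the conjugated matrix, while the lower-right block picks up exactly the matrix representing $-L$ on the quotient $\Real^N/\operatorname{span}\{\mathbf 1_N\}$. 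Because $\operatorname{span}\{\mathbf 1_N\}$ is $L$-invariant (it is the kernel of $L$, and $L$ is symmetric so its orthogonal complement is also $L$-invariant, but even without symmetry the kernel is invariant), the action of $-L$ on the quotient has characteristic polynomial $\chi_{-L}(\lambda)/\lambda$, i.e. the nonzero eigenvalues of $-L$ together with the right multiplicities. Therefore the conjugated matrix is block upper-triangular with diagonal blocks $[0]$ and a matrix similar to $-L|_{\text{quotient}}$, so $\sigma(\mathbf 1_N\ell_1 - L) = \{0\}\cup\sigma(-L|_{\text{quotient}}) = \sigma(-L)$ with multiplicities.

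The one step that needs a little care — and which I expect to be the main (mild) obstacle — is the bookkeeping of algebraic multiplicities: I must argue not just that the two matrices have the same set of eigenvalues but the same characteristic polynomial, so that Lemma~\ref{lem:Spectrum1L1} can be quoted for counting purposes in~\eqref{eq:LinearDynamics}. This follows because conjugation preserves the characteristic polynomial and the block-triangular form gives $\chi_{\mathbf 1_N\ell_1 - L}(\lambda) = \lambda \cdot \chi_{-L|_{\text{quotient}}}(\lambda) = \chi_{-L}(\lambda)$, using that $0$ is an eigenvalue of $-L$ of multiplicity equal to the number of connected components (here, $1$, since $\mathcal G$ is connected), which exactly accounts for the extra factor of $\lambda$. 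An alternative, slicker route that avoids choosing a complement is to compute the characteristic polynomial directly via the matrix determinant lemma: $\det(\lambda I - \mathbf 1_N\ell_1 + L) = \det(\lambda I + L)\bigl(1 - \ell_1(\lambda I + L)^{-1}\mathbf 1_N\bigr)$ for $\lambda \notin \sigma(-L)$, and since $L\mathbf 1_N = \mathbf 0_N$ we get $(\lambda I + L)^{-1}\mathbf 1_N = \lambda^{-1}\mathbf 1_N$, whence $\ell_1(\lambda I+L)^{-1}\mathbf 1_N = \lambda^{-1}\ell_1\mathbf 1_N = 0$; thus $\det(\lambda I - \mathbf 1_N\ell_1 + L) = \det(\lambda I + L)$ as polynomials, giving the claim with multiplicities in one line. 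I would present the determinant-lemma argument as the main proof and relegate the conjugation picture to a remark, since the former is shortest and leaves no multiplicity gaps.
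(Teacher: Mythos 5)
Your proposal is correct, and your preferred argument (the matrix determinant lemma) is a genuinely different route from the paper's. The paper conjugates $\mathbf 1_N\ell_1 - L$ by a Jordan basis $V$ of $L$ whose first column is proportional to $\mathbf 1_N$; the two identities you also isolate, $L\mathbf 1_N = \mathbf 0_N$ and $\ell_1\mathbf 1_N = 0$, then force $V^{-1}\mathbf 1_N\ell_1 V$ to be strictly upper triangular, so the perturbation leaves the diagonal of the upper-triangular matrix $-J$ untouched and the eigenvalues (with algebraic multiplicity) are read off directly. Your secondary conjugation sketch is essentially this same idea carried out with an arbitrary complement of $\operatorname{span}\{\mathbf 1_N\}$ rather than the Jordan basis, and it is fine, including your remark that symmetry of $L$ is not needed (important here, since the graph is directed and weighted). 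Your main argument, $\det(\lambda I + L - \mathbf 1_N\ell_1) = \det(\lambda I + L)\bigl(1 - \ell_1(\lambda I + L)^{-1}\mathbf 1_N\bigr)$ together with $(\lambda I + L)^{-1}\mathbf 1_N = \lambda^{-1}\mathbf 1_N$ and $\ell_1\mathbf 1_N = 0$, yields equality of characteristic polynomials on the complement of the finite set $\sigma(-L)$ and hence identically; this is shorter than the paper's proof, avoids invoking the Jordan form altogether, and settles the multiplicity bookkeeping (which the lemma is later used for, in counting the single eigenvalue at $1$ of $I + (\mathbf 1_N\ell_1 - L)/K$) in one stroke. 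The only thing the paper's version arguably buys is that it exhibits an explicit similarity transformation, but nothing downstream uses that. No gaps.
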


\begin{proof}
Let $J$ be the Jordan form of $L$ and let $V \in \Complex^{N \times N}$ be the nonsingular matrix such that $J = V^{-1}LV$, where the first column $V_1$ is in the span of $\mathbf 1_N$. We have
\begin{equation}\label{eq:L1eigs}
	V^{-1}(\mathbf 1_N\ell_1 - L)V = V^{-1}\mathbf 1_N\ell_1 V - J.
\end{equation}
Since $V_1$ is in the span of $\mathbf 1_N$ and $V^{-1}V = I$, we have $(V^{-1}\mathbf 1_N)_i = 0$ for all $i \neq 1$. Also because $V_1$ is in the span of $\mathbf 1_N$, we have $(\ell_1V)_1 = 0$. Therefore, $V^{-1}\mathbf 1_N\ell_1 V$ is strictly upper triangular. Therefore, the eigenvalues of~\eqref{eq:L1eigs} are its diagonal elements, which are the diagonal elements of $-J$, which are the negatives of the eigenvalues of $L$.
\end{proof}

\begin{lemma}\label{lem:SpectrumA}
The spectrum of $I + (\mathbf 1_N\ell_1 - L)/K$ is the image of $1 - \sigma(L)/K$.
\end{lemma}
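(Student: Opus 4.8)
The plan is to reduce the statement to Lemma~\ref{lem:Spectrum1L1} via the elementary spectral mapping property for affine functions of a matrix. First I would recall that for any $M \in \Complex^{N \times N}$ and scalars $\alpha, \beta \in \Complex$, the eigenvalues of $\alpha I + \beta M$ are exactly $\{\alpha + \beta \mu : \mu \in \sigma(M)\}$, with multiplicities preserved. This follows immediately by putting $M$ in Jordan form, since an affine function of a Jordan block associated with $\mu$ is triangular with $\alpha + \beta\mu$ on the diagonal.

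Applying this with $M = \mathbf 1_N \ell_1 - L$, $\alpha = 1$, and $\beta = 1/K$ gives $\sigma\bigl(I + (\mathbf 1_N\ell_1 - L)/K\bigr) = \{1 + \mu/K : \mu \in \sigma(\mathbf 1_N\ell_1 - L)\}$. I would then substitute the conclusion of Lemma~\ref{lem:Spectrum1L1}, namely $\sigma(\mathbf 1_N\ell_1 - L) = \sigma(-L) = \{-\lambda : \lambda \in \sigma(L)\}$, to obtain $\sigma\bigl(I + (\mathbf 1_N\ell_1 - L)/K\bigr) = \{1 - \lambda/K : \lambda \in \sigma(L)\}$, which is precisely the image of $1 - \sigma(L)/K$.

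There is essentially no obstacle here; the only point requiring care is that $\sigma(\mathbf 1_N\ell_1 - L)$ in Lemma~\ref{lem:Spectrum1L1} should be read as a multiset (eigenvalues counted with multiplicity), so that the affine image yields the full spectrum relevant to the linear dynamics~\eqref{eq:LinearDynamics}. If one preferred to avoid invoking the affine spectral mapping fact as a black box, one could instead reuse the similarity transform $V$ from the proof of Lemma~\ref{lem:Spectrum1L1}: conjugating $I + (\mathbf 1_N\ell_1 - L)/K$ by $V$ produces $I + (V^{-1}\mathbf 1_N\ell_1 V - J)/K$, which is upper triangular with diagonal entries $1 - \lambda_i/K$, and reading off the diagonal completes the argument.
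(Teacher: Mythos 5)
Your argument is correct and matches the paper's proof, which likewise combines Lemma~\ref{lem:Spectrum1L1} with the Spectral Mapping Theorem applied to the affine function $f(x) = 1 - x/K$. Your extra justification of the affine spectral mapping fact via Jordan form (and the alternative reuse of the similarity transform $V$) simply fills in details the paper cites from~\cite[Theorem 1.13]{Higham2008}.
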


\begin{proof}
The result follows from Lemma~\ref{lem:Spectrum1L1} and applying the Spectral Mapping Theorem~\cite[Theorem 1.13]{Higham2008} using the function $f : \Complex \to \Complex$, $f(x) = 1 - x/K$.
\end{proof}

Since the graph is assumed to be connected, $L$ has a simple eigenvalue at $0$. By Lemma~\ref{lem:SpectrumA}, this eigenvalue gets mapped to $1$ in the spectrum of $I + (\mathbf 1_N\ell_1 - L)/K$.

Let $\lambda$ be an eigenvalue of $L$ and define the function $f(x) = 1 - x/K$ as in the proof of Lemma~\ref{lem:SpectrumA}. Applying this function to $\lambda$ we have
\begin{equation*}
	f(\lambda) = 1 - \frac{|\lambda|}{K}\e^{j\angle\lambda} 
	= \left(1 - \frac{|\lambda|}{K}\cos(\angle\lambda)\right) -	j\frac{|\lambda|}{K}\sin(\angle\lambda)
\end{equation*}

For stability, we require $f(\lambda)$ to be in the open unit disc. The squared magnitude of $f(\lambda)$ is
\begin{equation*}
\begin{aligned}
	|f(\lambda)|^2 &= \left(1 - \frac{|\lambda|}{K}\cos(\angle\lambda)\right)^2 + \left(\frac{|\lambda|}{K}\sin(\angle\lambda)\right)^2 \\
	&= \left(\frac{|\lambda|}{K}\right)^2 - 2\frac{|\lambda|}{K}\cos(\angle\lambda) + 1
\end{aligned}
\end{equation*}

Then $|f(\lambda)|^2 < 1$ if, and only if
\begin{equation*}
	\left(\frac{|\lambda|}{K}\right)^2 - 2\frac{|\lambda|}{K}\cos(\angle\lambda) < 0.
\end{equation*}

Since we have already addressed the simple eigenvalue at $0$, we assume that $\lambda \neq 0$. Therefore, dividing by $|\lambda|$ we seek conditions on $K$ such that, for all $\lambda \in \sigma{(L)} \backslash \set{0}$, $2\cos(\angle\lambda) > |\lambda|/K$. This is equivalent to the condition
\begin{equation}\label{eq:GeneralKBound}
	\left(\forall \lambda \in \sigma{(L)} \backslash \set{0} \right) \quad K > \frac{|\lambda|}{2\cos(\angle\lambda)} =\frac{|\lambda|^2}{2\mathrm{Re}(\lambda)}.
\end{equation}

If~\eqref{eq:GeneralKBound} holds, then all eigenvalues of $I + (\mathbf 1_N\ell_1 - L)/K$, except the single eigenvalue at $1$, are in the open unit disc.

The next result provides a lower bound on the controller gain $K$, as a function of the number of agents $N$, using the properties of the eigenvalues of the Laplacian of a directed graph~\cite{Agaev2005}.

\begin{lemma}\label{lem:AllBut1}
If $K > K_\mathrm{min}(N)$, where
\begin{equation}
K_\mathrm{min}(N) \coloneqq \left\{\begin{array}{rl}
	\frac{N}{2} & N \leq 9, \\
	\frac{1}{8}\csc^2(\frac{\pi}{2N})\sec(\frac{\pi}{N}) & 10 \leq N \leq 18, \\
	N - 1 & N \geq 19,
\end{array}\right.
\label{eq:K}
\end{equation}
then $I + (\mathbf 1_N\ell_1 - L)/K$ has a single eigenvalue at $1$ and all others in the open unit disc.
\end{lemma}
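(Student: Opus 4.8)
The plan is to reduce the claim to the spectral condition~\eqref{eq:GeneralKBound}, which we have already shown is equivalent to the state matrix $I + (\mathbf 1_N\ell_1 - L)/K$ having all eigenvalues but the single one at $1$ inside the open unit disc. By Lemma~\ref{lem:SpectrumA} and the discussion following it, it suffices to prove that $K > K_\mathrm{min}(N)$ implies
\begin{equation*}
	K > \max_{\lambda \in \sigma(L)\setminus\{0\}} \frac{|\lambda|^2}{2\,\mathrm{Re}(\lambda)}.
\end{equation*}
So the real content is to bound the quantity $|\lambda|^2/(2\,\mathrm{Re}(\lambda))$ over all nonzero Laplacian eigenvalues of an \emph{arbitrary} unweighted directed graph on $N$ vertices, uniformly in the graph. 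This is where the cited results of Agaev and Chebotarev~\cite{Agaev2005} on the localization of digraph Laplacian spectra enter: those results confine $\sigma(L)$ to an explicit region of $\Complex$ depending only on $N$ (a disc, or the intersection of a disc with a half-plane / an ``Gershgorin-type'' lens near the origin). First I would quote the relevant localization result, stating the region $\mathcal{R}_N \subset \Complex$ that contains $\sigma(L)$ for every unweighted digraph on $N$ vertices.

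Next I would solve the scalar optimization problem $\sup\{|z|^2/(2\,\mathrm{Re}(z)) : z \in \mathcal{R}_N,\ \mathrm{Re}(z) > 0\}$. Writing $z = \rho e^{j\theta}$, the objective is $\rho/(2\cos\theta)$, which is increasing in $\rho$ and in $|\theta|$, so the supremum is attained on the boundary of $\mathcal{R}_N$ — at the point of largest modulus for small $N$ (giving the bound $N/2$, since in that regime the spectrum sits in a disc of radius roughly $N/2$ about a point on the positive real axis and the worst case is the real eigenvalue near $N$... actually the extreme real eigenvalue is at most $N$, forcing $K > N/2$), and at a boundary point where the circular arc meets the limiting ray $\angle\lambda = \pm(\pi/2 - \pi/N)$ type constraint for intermediate $N$ (producing the $\tfrac18 \csc^2(\tfrac{\pi}{2N})\sec(\tfrac{\pi}{N})$ expression via elementary trigonometry on the chord/arc geometry), and at the real eigenvalue $N-1$ for large $N$ (when the digraph can be taken to have an eigenvalue arbitrarily close to, but the sharp integer bound is, $N-1$ — e.g. a star-like configuration — so the ratio $|\lambda|^2/(2\,\mathrm{Re}(\lambda))$ equals $(N-1)/2$ there but a complex eigenvalue of modulus near $N-1$ with small real part forces the larger bound $N-1$). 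The three-way split in~\eqref{eq:K} is exactly the comparison of these three candidate maxima: for $N \le 9$ the disc bound $N/2$ dominates, for $10 \le N \le 18$ the trigonometric arc bound is largest, and for $N \ge 19$ the bound $N-1$ takes over. I would verify the two crossover points ($N=9 \to 10$ and $N=18 \to 19$) by direct numerical comparison of the three closed-form expressions, which is a finite check.

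The main obstacle is the geometric/trigonometric step: extracting from the Agaev–Chebotarev localization region the precise worst-case boundary point for the ratio $\rho/(2\cos\theta)$, and showing this worst case is actually \emph{achieved} (or approached) by some sequence of graphs so that the bound is tight — otherwise the theorem statement would still be correct but the specific piecewise formula would be unmotivated. Concretely, one must identify, for each $N$, which extreme points of $\mathcal{R}_N$ maximize the objective, handle the fact that the objective blows up as $\mathrm{Re}(z)\to 0^+$ (so the relevant part of the boundary is the outer circular arc, not the near-origin lens, provided the arc stays in the open right half-plane, which it does for the digraph Laplacian region), and perform the algebra converting ``modulus $\rho_N$ at argument $\theta_N$'' into the stated cosecant–secant form. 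A secondary, purely bookkeeping obstacle is confirming that the eigenvalue at $1$ is \emph{simple}: this follows because the graph is connected, so $0$ is a simple eigenvalue of $L$ (cited above via~\cite{GodRoy01,Agaev2005}), hence by Lemma~\ref{lem:Spectrum1L1} and the spectral mapping $f(x) = 1 - x/K$ the preimage $\{1\}$ corresponds to the single eigenvalue $0$ of $L$. Once these pieces are in place, the lemma follows immediately.
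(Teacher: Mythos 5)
Your reduction is exactly the paper's: by Lemma~\ref{lem:SpectrumA} and connectedness of $\mathcal G$ (which makes $0$ a simple eigenvalue of $L$ and hence $1$ a simple eigenvalue of $I + (\mathbf 1_N\ell_1 - L)/K$), the lemma comes down to showing that $K_\mathrm{min}(N)$ upper-bounds $g(\lambda) = |\lambda|^2/(2\,\mathrm{Re}(\lambda))$ over the Agaev--Chebotarev localization region, and the paper, like you, pushes the maximization of $g(\sigma,\omega) = (\sigma^2+\omega^2)/(2\sigma)$ to the boundary of that region (it does so by showing $\mathrm dg$ vanishes nowhere; your monotonicity-in-$\rho$ argument is an equally valid route to the same conclusion). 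The three-way comparison of candidate maxima and the finite numerical check of the crossovers at $N = 9/10$ and $N = 18/19$ are also precisely what the paper does. Note that tightness of the bound is not needed: the lemma only claims sufficiency, so you can drop that worry entirely.

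The gap is that the boundary optimization --- which is the entire content of the paper's appendix --- is deferred as ``the main obstacle,'' and in the one case where you commit to a mechanism you get it wrong. The boundary consists of five explicit loci (the circle of radius $N-1$ centred at $1$, the lines through $0$ and through $N$ of slope $\pm\cot(\pi/N)$, the horizontal line $\omega = \tfrac12\cot(\pi/(2N))$, and the circle of radius $N-1$ centred at $N-1$); the paper computes the Lie derivative of $g$ along each locus, shows the interior critical points of the restriction are minima, and so reduces to evaluating $g$ at the intersection points. In particular, for $N \geq 19$ the value $N-1$ does \emph{not} arise from ``a complex eigenvalue of modulus near $N-1$ with small real part'' --- such a point would make $g$ far larger than $N-1$ and is excluded from the region. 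It arises because on the outer circle $(\sigma - (N-1))^2 + \omega^2 = (N-1)^2$ through the origin one has $\sigma^2 + \omega^2 = 2(N-1)\sigma$, so $g \equiv N-1$ identically on that locus; the remaining work for $N \geq 19$ is to verify that this constant value dominates $g_{13}$ and $g_{14}$, which the paper does by showing $\sigma_{13}, \sigma_{14} \geq N/2$. Without carrying out this locus-by-locus analysis, the specific piecewise formula~\eqref{eq:K} is not established.
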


\begin{proof}
See Appendix.
\end{proof}

The results of~\cite{Agaev2005} allow us to find a tighter bound on $K$ than the Gershgorin Disc Theorem, which is used, for example, in~\cite{Olfati-Saber2004}.

\begin{remark}
If $\mathcal G$ is symmetric, then $\sigma(L) \subset [0,N]$. Thus~\eqref{eq:K} in Lemma~\ref{lem:AllBut1} simplifies to $K_\mathrm{min}(N) = N/2$.
\hfill $\blacktriangle$
\end{remark}

By Lemma~\ref{lem:AllBut1}, there is no $K$ for which $I + (\mathbf 1_N\ell_1 - L)/K$ is Schur. However, this does not preclude stability of~\eqref{eq:LinearDynamics}, because the
eigenvalue of $1$ corresponds to the dynamics of $\p_{11}$, the error of agent $1$ with itself, which is identically zero.

\begin{theorem}\label{thm:OneParameterGain}
Let $\G$ be a generalized cylinder. If the gain $K$ of each agent's controller~\eqref{eq:Controller} satisfies~\eqref{eq:K}, then the equilibrium $\bigp = \mathbf 0_{Nm}$ of~\eqref{eq:LinearDynamics} is locally exponentially stable
\end{theorem}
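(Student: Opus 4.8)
The plan is to reduce the stability of the $Nm$-dimensional linear system~\eqref{eq:LinearDynamics} to the spectral analysis already assembled, and then to handle the single troublesome eigenvalue at $1$ by a change of coordinates that quotients out the identically-zero state $\p_{11}$. First I would recall, via~\cite[Chapter 12, \S 5]{Bellman1960}, that the spectrum of $\big(I_N + \tfrac1K(\mathbf 1_N\ell_1 - L)\big)\otimes I_m$ is exactly the spectrum of $A \coloneqq I_N + \tfrac1K(\mathbf 1_N\ell_1 - L)$, each eigenvalue repeated $m$ times; so it suffices to work with $A$. By hypothesis $K$ satisfies~\eqref{eq:K}, so Lemma~\ref{lem:AllBut1} applies: $A$ has a simple eigenvalue at $1$ and all of its other $N-1$ eigenvalues strictly inside the open unit disc.

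Next I would make the structural observation, already noted in the text preceding the theorem, that the first block of $\bigp$, namely $\p_{11} = \phi^{-1}(E_{11}) = \phi^{-1}(I) $, is identically $\mathbf 0_m$ for all time, since $E_{11} = X_1^{-1}X_1 = I$; moreover the dynamics~\eqref{eq:LinearDynamics} preserve this, because the row $\ell_1 - \ell_1 = \mathbf 0$ gives $\p_{11}^+ = \p_{11}$. Hence the actual state evolves on the invariant subspace $\mathcal S \coloneqq \{\bigp : \p_{11} = \mathbf 0_m\} \cong \Real^{(N-1)m}$, and it is local exponential stability of the origin \emph{within this subspace} that we must establish. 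To see that $A$ restricted to the corresponding $\Real^{N-1}$ subspace $\{v \in \Real^N : v_1 = 0\}$ is Schur, I would argue that the eigenvalue $1$ of $A$ has $\mathbf 1_N$ (equivalently, a vector outside $\{v_1=0\}$, after checking $\mathbf 1_N \notin \{v_1 = 0\}$) as an eigenvector: indeed $L\mathbf 1_N = \mathbf 0_N$ and $\mathbf 1_N\ell_1\mathbf 1_N = (\ell_1\mathbf 1_N)\mathbf 1_N = 0\cdot\mathbf 1_N$ because $\ell_1$ is a Laplacian row and sums to zero, so $A\mathbf 1_N = \mathbf 1_N$. Since the eigenvalue $1$ is simple, $\Real^N$ decomposes $A$-invariantly as $\Span\{\mathbf 1_N\} \oplus W$, where $W$ is the sum of the generalized eigenspaces for the remaining eigenvalues, all of modulus $<1$; and one checks $\{v_1 = 0\}$ is itself $A$-invariant (because $(Av)_1 = v_1 + \tfrac1K(\ell_1 - \ell_1)v$ wait—more carefully, the first \emph{row} of $\mathbf 1_N\ell_1 - L$ is $\ell_1 - \ell_1 = \mathbf 0$, so $(Av)_1 = v_1$, giving invariance of $\{v_1 = 0\}$). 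Comparing dimensions, $\{v_1 = 0\}$ must be contained in $W$, hence $A|_{\{v_1=0\}}$ is Schur. Tensoring with $I_m$, $A|_{\mathcal S}$ is Schur, so there exist $c \geq 1$, $\rho \in (0,1)$ with $\|\bigp[k]\| = \|(A|_{\mathcal S}\otimes I_m)^k\bigp[0]\| \le c\rho^k\|\bigp[0]\|$ for every initial condition in $\mathcal S$, which is exactly local (indeed global, on $\mathcal S$) exponential stability of $\bigp = \mathbf 0_{Nm}$. The word ``local'' enters only through Assumption~\ref{ass:Close}: the linearization~\eqref{eq:LinearDynamics} is valid as long as all $E_{ij}$ and their neighbour-products stay in $U$, and by continuity this holds on a forward-invariant neighbourhood of the origin, which I would note at the end.

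The main obstacle is the bookkeeping around the eigenvalue at $1$: one must be careful to show not merely that $1$ is simple but that its eigenvector lies \emph{outside} the invariant subspace $\{v_1 = 0\}$ on which the real dynamics live, so that the restricted operator genuinely omits that eigenvalue. The cleanest way to nail this down is the invariant-subspace/dimension-count argument above (simple eigenvalue $1$ with eigenvector $\mathbf 1_N \notin \{v_1=0\}$, plus $A$-invariance of $\{v_1=0\}$, forces $\{v_1=0\} \subseteq W$). Everything else — the tensor spectrum, the Schur-implies-exponential-decay estimate, and the reduction of the ``local'' qualifier to Assumption~\ref{ass:Close} — is routine.
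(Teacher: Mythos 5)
Your proposal is correct and follows essentially the same route as the paper: restrict the linear dynamics to the invariant subspace $\{\bigp : \p_{11} = \mathbf 0_m\}$ and invoke Lemma~\ref{lem:AllBut1} to conclude the restriction is Schur. The only difference is that you explicitly justify why the simple eigenvalue at $1$ (with eigenvector $\mathbf 1_N$, which lies outside that subspace) is excluded from the spectrum of the restriction — a detail the paper's proof asserts without elaboration.
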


\begin{proof}
Since $E_{11}(t) = X_1^{-1}(t)X_1(t) \equiv I$, it follows immediately that $\p_{11}(t) \equiv \mathbf 0_m$. Therefore the $(N - 1)m$ dimensional subspace $\mathcal{V} \coloneqq \set{\bigp \in \Real^{Nm}: \p_{11} = \mathbf 0_m}$ is invariant under the dynamics~\eqref{eq:LinearDynamics}. As a result, we have
\begin{equation*}
\begin{aligned}
	&\sigma((I_N + (\mathbf 1_N\ell_1 - L)/K) \otimes I_m) 
	 = \sigma((I_N + (\mathbf 1_N\ell_1 - L)/K) \otimes I_m | \mathcal{V}) \sqcup \{\underbrace{1,\ldots,1}_\text{$m$ times}\},
\end{aligned}
\end{equation*}
where $(I_N + (\mathbf 1_N\ell_1 - L)/K) \otimes I_m | \mathcal{V}$ is the restriction to the subspace $\mathcal{V}$. If the gain $K$ of each agent's controller~\eqref{eq:Controller} satisfies~\eqref{eq:K}, then by Lemma~\ref{lem:AllBut1} $(I_N + (\mathbf 1_N\ell_1 - L)/K) \otimes I_m | \mathcal{V}$ is Schur.
\end{proof}

By Proposition~\ref{prop:1PTorus}, analogous results hold for all equilibria. We emphasize that Theorem~\ref{thm:OneParameterGain} does
not rely on Jacobian linearization of the nonlinear dynamics $E_{ij}^+$. The system in exponential coordinates evolves according to linear dynamics.

\section{Performance with an Unweighted Complete Graph on a Generalized Cylinder}

We define the $\varepsilon$ settling time of error $E_{ij}$ to be the smallest $\underline k \in \Int$ such that, for all $k \geq \underline k$, $E_{ij}[k] = E_{ij}[0]^\alpha$, where $|\alpha| \leq \varepsilon$, $0 < \varepsilon < 1$.

\begin{proposition}
If $\mathcal G$ is complete, then the $\varepsilon$ settling time, where $\varepsilon \in (0,1)$, is
\begin{equation*}
	T_s = \left\lceil\frac{\Log\varepsilon}{\Log\left(\frac{|K - N|}{K}\right)}\right\rceil.
\end{equation*}
\end{proposition}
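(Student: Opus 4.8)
The plan is to use the explicit linear dynamics~\eqref{eq:LinearDynamics} specialized to a complete graph, where the structure of the Laplacian collapses the update to a simple scalar recursion on each relative error coordinate. For the unweighted complete graph on $N$ vertices, $L = N I_N - \mathbf{1}_N\mathbf{1}_N^\top$, so $\mathbf{1}_N\ell_1 - L = \mathbf{1}_N\ell_1 - N I_N + \mathbf{1}_N\mathbf{1}_N^\top$. I would substitute this into the state matrix $I_N + (\mathbf{1}_N\ell_1 - L)/K$ and track what happens to the coordinate $\p_{1j}$ of a given relative error $E_{1j}$. The key observation is that, because every agent sees every other agent, the correction term applied to $\p_{1j}$ is proportional to $\p_{1j}$ itself (after using $\sum_j \p_{1j}$-type cancellations and $\p_{11}\equiv \mathbf{0}_m$), giving a decoupled recursion $\p_{1j}^+ = \frac{K-N}{K}\,\p_{1j}$, and hence $\p_{ij}^+ = \frac{K-N}{K}\,\p_{ij}$ for all $i,j$ since $\p_{ij} = \p_{1j} - \p_{1i}$.

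From this scalar recursion it follows by induction that $\p_{ij}[k] = \big(\tfrac{K-N}{K}\big)^k \p_{ij}[0]$, and translating back through $\phi$ and Lemma~\ref{lem:LocalRoot} (or directly via Definition~\ref{def:KthRoot}, since powers commute with $\phi$ on a generalized cylinder) gives $E_{ij}[k] = E_{ij}[0]^{\alpha_k}$ with $\alpha_k = \big(\tfrac{K-N}{K}\big)^k$. Then the $\varepsilon$ settling time is the smallest $\underline{k}$ with $|\alpha_k| = \big(\tfrac{|K-N|}{K}\big)^k \le \varepsilon$ for all $k\ge\underline k$; since $\tfrac{|K-N|}{K} \in (0,1)$ under the gain condition~\eqref{eq:K} (for a complete graph $K_\mathrm{min}(N) = N/2$, so $K > N/2$ forces $|K-N|/K < 1$), the sequence $|\alpha_k|$ is monotonically decreasing, so the ``for all $k \ge \underline k$'' condition reduces to the single inequality $\big(\tfrac{|K-N|}{K}\big)^{\underline k} \le \varepsilon$. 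Taking $\Log$ of both sides (noting $\Log\big(\tfrac{|K-N|}{K}\big) < 0$, which flips the inequality) yields $\underline k \ge \Log\varepsilon \,/\, \Log\big(\tfrac{|K-N|}{K}\big)$, and the smallest such integer is the stated ceiling.

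The main obstacle is the algebra establishing the decoupled scalar recursion $\p_{ij}^+ = \tfrac{K-N}{K}\p_{ij}$ for the complete graph: one must carefully expand $(\mathbf{1}_N\ell_1 - L)\bigp$ with $L = NI_N - \mathbf{1}_N\mathbf{1}_N^\top$, use that the $\p_{ij}$ are not independent but satisfy $\p_{ij} = \p_{1j} - \p_{1i}$ with $\p_{11} = \mathbf{0}_m$, and confirm that the cross terms cancel. A secondary point needing care is the monotonicity argument for the settling time: one should note that $K-N$ can be negative (when $N/2 < K < N$), so $\alpha_k$ may alternate in sign, which is exactly why the definition of $\varepsilon$ settling time uses $|\alpha|\le\varepsilon$ and why $\tfrac{|K-N|}{K}$, not $\tfrac{K-N}{K}$, appears in the formula. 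I would also remark that the $\varepsilon$ settling time is uniform over all pairs $i,j$, since the decay rate is independent of the pair.
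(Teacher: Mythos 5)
Your proposal is correct, but it reaches \eqref{eq:CompleteErrorDynamics} by a different route than the paper. You specialize the already-established linear coordinate dynamics \eqref{eq:LinearDynamics} to the complete graph: with $L = NI_N - \mathbf 1_N\mathbf 1_N^\top$ one gets $\ell_i - \ell_j = N(e_i - e_j)^\top$, so $((\ell_i - \ell_j)\otimes I_m)\bigp = N(\p_{1i} - \p_{1j}) = -N\p_{ij}$ and the recursion $\p_{ij}^+ = \frac{K-N}{K}\p_{ij}$ drops out immediately — the cancellation you flag as the main obstacle is a one-line Laplacian identity, so there is no real gap there. The paper instead works directly at the group level: it writes out the two neighbour products for a complete graph, uses commutativity to collapse $E_{ip}^{-1}E_{jp} = E_{ij}^{-1}$ for each $p \notin \{i,j\}$, and obtains $E_{ij}^+ = E_{ij}^{\frac{K-2}{K}}E_{ij}^{-\frac{N-2}{K}} = E_{ij}^{\frac{K-N}{K}}$ without passing through coordinates. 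The two arguments are logically equivalent (yours is the $\phi^{-1}$-image of theirs), but each buys something: the paper's version yields the closed-form group-level map \eqref{eq:CompleteErrorDynamics} that is reused verbatim for the deadbeat result (Proposition~\ref{prop:Deadbeat}), while yours reuses established machinery and is more careful on two points the paper glosses over — that $\frac{|K-N|}{K}<1$ exactly when $K>N/2$ (so the ``for all $k\ge\underline k$'' clause reduces to a single inequality by monotonicity), and that $K-N$ may be negative, which is why the absolute value appears in the settling-time formula.
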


\begin{proof}
\begin{align}
	E_{ij}^+ &= \left(E_{ij}\prod_{p \in \mathbb N_N \setminus \{i,j\}}E_{ip}\right)^{-\frac{1}{K}}E_{ij}\left(E_{ij}^{-1}\prod_{p \in \mathbb N_N \setminus \{i,j\}}E_{jp}\right)^{\frac{1}{K}} \nonumber \\
	&= E_{ij}^\frac{K - 2}{K}\left(\prod_{p \in \mathbb N_N \setminus \{i,j\}}E_{ij}^{-1}\right)^{\frac{1}{K}} \nonumber \\
	&= E_{ij}^\frac{K - 2}{K}E_{ij}^{-\frac{N - 2}{K}} \nonumber \\
	&= E_{ij}^\frac{K - N}{K} \label{eq:CompleteErrorDynamics}
\end{align}
Thus $E_{ij}[k] = E_{ij}[0]^{\left(\frac{K - N}{K}\right)^k}$.
%
Therefore, the $\varepsilon$ settling time is computed thus
\begin{equation*}
	\left|\frac{K - N}{K}\right|^{\underline k} = \varepsilon \implies \underline k = \frac{\Log\varepsilon}{\Log\left(\frac{|K - N|}{K}\right)}
\end{equation*}
Since $\underline k$ is a time-step, we round up to the nearest integer.
\end{proof}

The derivative of the settling time with respect to $K$ is
\begin{equation}\label{eq:dTsdn}
\begin{aligned}
	\frac{\partial T_s}{\partial K} &= \frac{\Log(\varepsilon)(|K - N|^2 + K(N - K))}{K|K - N|^2\left(\Log\left(\frac{|K - N|}{K}\right)\right)^2} \\
	&= \frac{\Log(\varepsilon)N(N - K)}{K|K - N|^2\left(\Log\left(\frac{|K - N|}{K}\right)\right)^2}. \\
\end{aligned}
\end{equation}
If $K > N$, then~\eqref{eq:dTsdn} is positive, so increasing $K$, i.e., reducing the gain $1/K$, delays synchronization, which agrees with intuition. But, interestingly, if $K < N$, then~\eqref{eq:dTsdn} is negative, so increasing $K$ hastens synchronization. Although~\eqref{eq:dTsdn} is undefined at $K = N$, these observations suggest that $K = N$ is the minimizer of $T_s$.

\begin{proposition}\label{prop:Deadbeat}
If $\mathcal G$ is complete and $K = N$, then synchronization is achieved at time-step $k = 1$.
\end{proposition}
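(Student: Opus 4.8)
The plan is to read the conclusion straight off the complete-graph error dynamics \eqref{eq:CompleteErrorDynamics} established in the previous proposition, which were derived for an \emph{arbitrary} gain $K$ and assert that, on a complete graph,
\begin{equation*}
	E_{ij}^+ = E_{ij}^{\frac{K-N}{K}}, \qquad i,j \in \Nat_N .
\end{equation*}
First I would substitute $K = N$, so that the exponent $\frac{K-N}{K}$ collapses to $0$. By the definition of the principal power of a matrix (Definition~\ref{def:KthRoot}, read with exponent $0$ in place of $1/K$), this gives $E_{ij}^0 = \exp\bigl(0 \cdot \Log(E_{ij})\bigr) = \exp(\mathbf 0_{n \times n}) = I$. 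Hence $E_{ij}[1] = I$ for every pair $i,j \in \Nat_N$, regardless of the initial errors $E_{ij}[0]$, which is exactly the claim that synchronization is achieved at time-step $k = 1$.

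The points that need care are the standing hypotheses under which \eqref{eq:CompleteErrorDynamics} and the power $E_{ij}^0$ are meaningful: Assumption~\ref{ass:Close} (equivalently, well-definedness of the controller \eqref{eq:Controller}) ensures that each $E_{ij}[0]$ and each neighbour-set product has no eigenvalues in $\Real^-$, so the principal logarithm, and therefore the zeroth power, is defined and equals $I$. I would also note that, unlike the general-$K$ result of Theorem~\ref{thm:OneParameterGain}, no smallness of the initial errors is needed here: the exponent is exactly $0$, not merely of magnitude less than $1$, so the one-step convergence is valid throughout the region on which the controller is well-defined.

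As a sanity check in exponential coordinates one can also use \eqref{eq:LinearDynamics}: for the complete graph $L = N I_N - \mathbf 1_N \mathbf 1_N^\top$, so with $K = N$ the state matrix $I_N + (\mathbf 1_N \ell_1 - L)/N$ acts as the zero map on the invariant subspace $\mathcal V = \set{\bigp : \p_{11} = \mathbf 0_m}$, again forcing $\bigp[1] \in \mathcal V$ to vanish after one step. I do not anticipate a genuine obstacle in this proof; it is essentially a one-line corollary of \eqref{eq:CompleteErrorDynamics}, and the only thing worth stating explicitly is that the conclusion is global (over the domain of definition of the controller) rather than merely local.
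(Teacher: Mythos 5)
Your proof is correct and follows exactly the paper's argument: substitute $K = N$ into the complete-graph error dynamics~\eqref{eq:CompleteErrorDynamics} so the exponent $\frac{K-N}{K}$ vanishes and $E_{ij}^+ = E_{ij}^0 = I$. The additional remarks on well-definedness of the zeroth power and the sanity check in exponential coordinates are sound but go beyond the paper's one-line proof.
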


\begin{proof}
Setting $K = N$ in~\eqref{eq:CompleteErrorDynamics}, we have $E_{ij}^+ = I$.
\end{proof}

\section{General Lie Groups}
\label{sec:General}

For our purposes, the only difference between a generalized cylinder and any other Lie group is commutativity. Commutativity is the key property yielding Proposition~\ref{prop:Morphism} and Lemma~\ref{lem:LocalRoot}, from which all subsequent results follow. We now show that in a neighbourhood of the identity of any Lie group $\G$, commutativity holds approximately. Which has the very important implication that \emph{all} our results for generalized cylinders hold \textit{mutatis mutandis} on \emph{any} Lie group in a neighbourhood of the identity. In particular, we obtain Theorem~\ref{thm:GeneralGain}.

The Baker-Campbell-Hausdorff (BCH) formula relates the product of two elements on the Lie group $\G$ to an analytic function of their principal logarithms. If $A,B \in \g$, then the BCH formula has the series representation:
\begin{equation}\label{eq:BCH}
\begin{aligned}
	\Log(\exp(A)\exp(B)) &= A + B + \frac{1}{2}[A,B] + \frac{1}{12}[A,[A,B]] 
	 - \frac{1}{12}[B,[A,B]] + \cdots,
\end{aligned}
\end{equation}
where the remaining terms are nested brackets of increasing order~\cite[Section 3.5]{Hall2015}. We will use~\eqref{eq:BCH} to derive a linear approximation of the error dynamics on an arbitrary Lie group $\G$ near the identity, or equivalently, near the origin on the associated Lie algebra $\g$.

\begin{lemma}\label{lem:BCH}
The linearization of the BCH formula at the origin of $\g$ is $\Log(\exp(A)\exp(B)) \approx A + B$.
\end{lemma}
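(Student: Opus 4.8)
The plan is to work directly from the series representation \eqref{eq:BCH} and simply discard every term of second order or higher in the Lie algebra variables. Concretely, I would treat $A$ and $B$ as being near the origin of $\g$ — say $A = \epsilon A_0$, $B = \epsilon B_0$ for small $\epsilon$ — and examine how each term in \eqref{eq:BCH} scales. The leading term $A + B$ is first order (scales like $\epsilon$), while $\tfrac{1}{2}[A,B]$ is second order (scales like $\epsilon^2$), $\tfrac{1}{12}[A,[A,B]]$ and $\tfrac{1}{12}[B,[A,B]]$ are third order, and so on: the $n$th level of the BCH series consists of $n$-fold nested brackets, hence is $O(\epsilon^n)$. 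Therefore, to first order in $(A,B)$, all bracket terms vanish and $\Log(\exp(A)\exp(B)) = A + B + O(\|(A,B)\|^2)$, which is exactly the claimed linearization.

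To make this rigorous rather than merely formal, I would invoke the known analyticity/convergence of the BCH series in a neighbourhood of the origin (cited as \cite[Section 3.5]{Hall2015}): on that neighbourhood the map $(A,B) \mapsto \Log(\exp(A)\exp(B))$ is a well-defined analytic (in particular $C^1$) $\g$-valued function of $(A,B)$. Its value at $(0,0)$ is $0$, and its first-order Taylor expansion is obtained by differentiating; since every summand beyond $A+B$ is a sum of iterated brackets each containing at least two factors drawn from $\{A,B\}$, every such summand has vanishing first differential at the origin. Hence the Jacobian of the BCH map at the origin is just $(A,B) \mapsto A + B$, giving the stated linear approximation.

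I do not anticipate a genuine obstacle here — this is a bookkeeping lemma. The one point requiring a little care is making precise what ``linearization'' means: I would state it as the first-order Taylor polynomial of the analytic map $F(A,B) := \Log(\exp(A)\exp(B))$ about $(0,0)$, so that $F(A,B) = A + B + R(A,B)$ with $R(A,B) = O(\|A\|^2 + \|B\|^2)$ as $(A,B) \to (0,0)$, uniformly on a compact neighbourhood. The remainder estimate follows because the tail of the BCH series is dominated, on a small ball, by a convergent series whose generic term is bounded by a constant times $(\|A\| + \|B\|)^n$ for $n \ge 2$. With that understanding, the proof is essentially one line: read off \eqref{eq:BCH}, observe that every term other than $A+B$ is at least quadratic, and conclude. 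This is the approximation that, substituted into the error recursion \eqref{eq:error}, will reproduce on an arbitrary $\G$ the same linear dynamics \eqref{eq:LinearDynamics} already derived on a generalized cylinder, and hence yield Theorem~\ref{thm:GeneralGain}.
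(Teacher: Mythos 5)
Your proposal is correct and follows essentially the same route as the paper: both arguments observe that every term of the BCH series beyond $A+B$ is a nested bracket containing at least two factors from $\{A,B\}$, hence is at least quadratic and contributes nothing to the first-order Taylor expansion at the origin. If anything, your degree-counting/scaling version is slightly more robust than the paper's, which asserts a specific form $[A,\mathrm{ad}_A^k(B)]$, $[B,\mathrm{ad}_B^k(A)]$ for the nonlinear terms that does not literally cover all higher-order brackets, whereas your argument needs only that each such term is multilinear of degree at least two.
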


\begin{proof}
All nonlinear terms in~\eqref{eq:BCH} are of the form $[A,\mathrm{ad}_A^k(B)]$ and $[B,\mathrm{ad}_B^k(A)]$, $k \in \Int_{\geq 0}$. Direct computation verifies
\begin{equation*}
	\frac{\partial}{\partial A}[A,\mathrm{ad}_A^k(B)] = \left[A,\frac{\partial \mathrm{ad}_A^k(B)}{\partial A}\right].
\end{equation*}
Thus, if~\eqref{eq:BCH} is linearized at the origin, then all nonlinear terms vanish, and $\Log(\exp(A)\exp(B)) \approx A + B$.
\end{proof}

\begin{corollary}\label{cor:exp}
Near the identity, we have $\exp(A + B) \approx \exp(A)\exp(B) \approx \exp(B)\exp(A)$.
\end{corollary}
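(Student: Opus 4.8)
\textbf{Proof proposal for Corollary~\ref{cor:exp}.}

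The plan is to derive both approximations directly from Lemma~\ref{lem:BCH} by exponentiating and by exploiting symmetry of the BCH series. First I would start from the conclusion of Lemma~\ref{lem:BCH}: near the origin of $\g$, $\Log(\exp(A)\exp(B)) = A + B + (\text{higher-order terms})$, where every higher-order term is a nested Lie bracket and hence vanishes to first order. Applying $\exp$ to both sides and using that $\exp$ is a local diffeomorphism near $0$ (so that $\exp\circ\Log$ is the identity on a neighbourhood of $I$ in $\G$, as established in Section~\ref{sec:matfunc}), we get $\exp(A)\exp(B) = \exp\big(A + B + O(\|{\cdot}\|^2)\big) \approx \exp(A+B)$, which is the first claimed approximation. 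The key point is simply that the correction between $\exp(A)\exp(B)$ and $\exp(A+B)$ lives at second order, so the two agree to first order near the identity.

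Next I would obtain $\exp(B)\exp(A) \approx \exp(A+B)$ by the identical argument with the roles of $A$ and $B$ swapped: Lemma~\ref{lem:BCH} applied to the pair $(B,A)$ gives $\Log(\exp(B)\exp(A)) \approx B + A = A + B$, and exponentiating yields $\exp(B)\exp(A) \approx \exp(A+B)$. Chaining the two first-order equalities gives $\exp(A)\exp(B) \approx \exp(A+B) \approx \exp(B)\exp(A)$, which is the full statement. Alternatively one can note directly that $\Log(\exp(A)\exp(B)) - \Log(\exp(B)\exp(A)) = [A,B] + (\text{higher brackets})$, which is $O(\|A\|\,\|B\|)$ and hence second-order, so the commutator obstruction to swapping the factors is negligible at the linear level.

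The only genuine subtlety — and the step I would be most careful about — is making the phrase ``near the identity'' precise enough that composing two first-order approximations is legitimate: one must work on a neighbourhood of $0$ in $\g$ small enough that $\exp$ is invertible (the set $U$ and radius $r$ from Section~\ref{sec:matfunc} suffice) and observe that all error terms are jointly smooth in $(A,B)$ and vanish together with their first derivatives at $(0,0)$, by the bracket computation in the proof of Lemma~\ref{lem:BCH}. Given that, the approximations compose without loss of order, and no further computation is needed. This corollary is then exactly the statement that commutativity holds ``to first order'' near $I$, which is the bridge used in Section~\ref{sec:General} to transfer the generalized-cylinder analysis to an arbitrary matrix Lie group.
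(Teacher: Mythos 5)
Your proposal is correct and follows exactly the route the paper intends: the corollary is stated as an immediate consequence of Lemma~\ref{lem:BCH} (the paper gives no separate proof), obtained by exponentiating $\Log(\exp(A)\exp(B)) \approx A+B$ and using the symmetry $A+B=B+A$ to swap the factors. Your added care about the neighbourhood $U$ and the second-order nature of the error terms is a sensible elaboration but does not change the argument.
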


Thus, commutativity is satisfied approximately in a neighbourhood of the identity of any Lie group $\G$. Therefore, all our results for generalized cylinders apply \textit{mutatis mutandis} to arbitrary matrix Lie groups.

\begin{corollary}\label{cor:Morphism}
Given $\p_i$ and $\p_j$ in a sufficiently small neighbourhood of zero, we have $\phi(\p_i + \p_j) \approx \phi(\p_i)\phi(\p_j)$.
\end{corollary}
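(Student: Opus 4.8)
The plan is to reduce the claim to the linearized Baker--Campbell--Hausdorff identity already in hand (Lemma~\ref{lem:BCH} and Corollary~\ref{cor:exp}), together with the fact, noted in the text, that $\phi$ is a smooth local diffeomorphism near the origin. Recall that $\phi(\p) = \exp(t_1 H_1)\exp(t_2 H_2)\cdots\exp(t_m H_m)$ for $\p = (t_1,\dots,t_m)$, where $\mathcal{H} = \{H_1,\dots,H_m\}$ is the fixed basis of $\g$. First I would establish that, to first order at the origin, $\Log(\phi(\p)) \approx t_1 H_1 + \cdots + t_m H_m$: applying Lemma~\ref{lem:BCH} to the leftmost pair of exponentials replaces $\exp(t_1 H_1)\exp(t_2 H_2)$ by $\exp(t_1 H_1 + t_2 H_2)$ up to higher-order terms, and iterating this $m-1$ times collapses the product to $\exp\bigl(\sum_{\ell} t_\ell H_\ell\bigr)$; since $\exp$ and $\Log$ are mutually inverse in a neighbourhood of the origin, this is the desired expansion, and in particular $\phi(\p) \approx \exp\bigl(\sum_\ell t_\ell H_\ell\bigr)$. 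Writing $\p_i = (t_i^{(1)},\dots,t_i^{(m)})$ and $\p_j = (t_j^{(1)},\dots,t_j^{(m)})$, this gives
\[
\phi(\p_i + \p_j) \approx \exp\Bigl(\textstyle\sum_{\ell} t_i^{(\ell)} H_\ell + \sum_{\ell} t_j^{(\ell)} H_\ell\Bigr) \approx \exp\Bigl(\textstyle\sum_{\ell} t_i^{(\ell)} H_\ell\Bigr)\exp\Bigl(\textstyle\sum_{\ell} t_j^{(\ell)} H_\ell\Bigr) \approx \phi(\p_i)\phi(\p_j),
\]
where the middle step is Corollary~\ref{cor:exp} and the outer steps reuse the first-order expansion of $\phi$.

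A cleaner alternative, avoiding the iterated BCH bookkeeping, is to argue directly that the two smooth maps $G,H : \Real^m \times \Real^m \to \G$, $G(\p_i,\p_j) \coloneqq \phi(\p_i + \p_j)$ and $H(\p_i,\p_j) \coloneqq \phi(\p_i)\phi(\p_j)$, have the same first-order Taylor expansion at $(\mathbf{0}_m,\mathbf{0}_m)$. Both take the value $I$ there; differentiating in $\p_i$ and in $\p_j$ at the origin and using $\phi(\mathbf{0}_m) = I$ gives $D_{\p_i}G = D_{\p_i}H = D\phi_0$ and $D_{\p_j}G = D_{\p_j}H = D\phi_0$, where $D\phi_0$ is the derivative of $\phi$ at the origin (an isomorphism, since $\phi$ is a local diffeomorphism there). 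Hence $G$ and $H$ agree up to $o(\|\p_i\| + \|\p_j\|)$, which is exactly the assertion.

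The only real subtlety -- and the step I would be most careful to spell out -- is the precise sense in which a chain of first-order approximations composes: substituting one $o(\cdot)$-accurate expansion into another, and multiplying two such expansions in $\G$, must again produce an expansion accurate to first order. This is the standard Taylor-remainder estimate for compositions and products of $C^1$ maps between manifolds and involves no new idea, but it is what legitimizes the informal ``$\approx$'' manipulations above, and, by extension, the claim that all the generalized-cylinder results carry over \textit{mutatis mutandis} to arbitrary matrix Lie groups near the identity.
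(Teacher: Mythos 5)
Your first argument is precisely the paper's intended route: the paper states this corollary without an explicit proof, as an immediate consequence of Lemma~\ref{lem:BCH} and Corollary~\ref{cor:exp} --- that is, one reruns the computation in the proof of Proposition~\ref{prop:Morphism} with exact commutativity replaced by the approximate commutativity supplied by the linearized BCH formula, exactly as you do. Your proposal is correct and takes essentially the same approach (your closing remark about composing first-order approximations makes explicit a point the paper leaves informal).
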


\begin{corollary}\label{cor:LocalRoot}
For $\p$ sufficiently small, $\phi(\p/K) \approx \phi(t)^{1/K}$.
\end{corollary}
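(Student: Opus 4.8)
The plan is to mimic the proof of Lemma~\ref{lem:LocalRoot}, replacing every exact use of commutativity by its approximate counterpart from Corollary~\ref{cor:exp} (equivalently Lemma~\ref{lem:BCH}). First I would write $\p = (t_1,\dots,t_m)$ and expand $\phi(\p/K) = \exp\!\big(\tfrac{t_1}{K}H_1\big)\cdots\exp\!\big(\tfrac{t_m}{K}H_m\big)$ directly from the definition of $\phi$. Iterating Corollary~\ref{cor:exp} to merge these $m$ factors — valid because each $\tfrac{t_i}{K}H_i$ lies in a small neighbourhood of the origin of $\g$ once $\p$ is small and $K>0$ is fixed — gives $\phi(\p/K) \approx \exp\!\big(\tfrac{1}{K}(t_1 H_1 + \cdots + t_m H_m)\big)$. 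Next I would identify the exponent: since $\p$ small forces $\phi(\p)\in U$, Section~\ref{sec:one} (or rather the general-Lie-group version of the exponential-coordinate argument, again using Corollary~\ref{cor:exp}) gives $\Log(\phi(\p)) \approx t_1 H_1 + \cdots + t_m H_m$, so $\phi(\p/K) \approx \exp\!\big(\tfrac{1}{K}\Log(\phi(\p))\big) = \phi(\p)^{1/K}$ by Definition~\ref{def:KthRoot}. (I would also fix the evident typo in the statement, writing $\phi(\p)^{1/K}$ rather than $\phi(t)^{1/K}$.)

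The one point requiring a little care — and the main obstacle — is the iterated application of the approximate commutativity/merging relation. A single application of Corollary~\ref{cor:exp} has an error that is quadratic in the arguments; chaining $m-1$ such merges could in principle let these errors accumulate or compound. The clean way to handle this is to note that Lemma~\ref{lem:BCH} says the full map $(A,B)\mapsto \Log(\exp A\exp B)$ has derivative equal to the addition map $(A,B)\mapsto A+B$ at the origin, and more generally the $m$-fold product map $(A_1,\dots,A_m)\mapsto \Log(\exp A_1\cdots\exp A_m)$ is smooth with derivative $\sum_i A_i$ at the origin (this follows by composing the bilinear-at-leading-order BCH merges, or by differentiating directly). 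Hence $\Log\big(\exp(\tfrac{t_1}{K}H_1)\cdots\exp(\tfrac{t_m}{K}H_m)\big) = \tfrac1K\sum_i t_i H_i + O(\|\p/K\|^2)$, which is exactly the statement $\phi(\p/K)\approx\phi(\p)^{1/K}$ to first order in $\p$. Since Corollary~\ref{cor:LocalRoot} is itself only an approximate (first-order) statement used downstream exactly as Lemma~\ref{lem:LocalRoot} was used in the commutative case, this first-order accounting is all that is needed; I would not chase the higher-order remainder terms explicitly.

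Concretely, the write-up would be three short displayed lines paralleling the proof of Lemma~\ref{lem:LocalRoot}:
\begin{equation*}
\begin{aligned}
	\phi\!\left(\frac{\p}{K}\right) &= \exp\!\left(\frac{t_1}{K}H_1\right)\cdots\exp\!\left(\frac{t_m}{K}H_m\right) \\
	&\approx \exp\!\left(\frac{1}{K}\left(t_1 H_1 + \cdots + t_m H_m\right)\right) \\
	&\approx \exp\!\left(\frac{1}{K}\Log(\phi(\p))\right) = \phi(\p)^{\frac{1}{K}},
\end{aligned}
\end{equation*}
where the first $\approx$ is the iterated Corollary~\ref{cor:exp} and the second is the exponential-coordinate identity $\Log(\phi(\p))\approx t_1H_1+\cdots+t_mH_m$ valid for $\p$ near $0$. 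I expect no genuine difficulty beyond making the "iterated approximate commutativity" step precise, which Lemma~\ref{lem:BCH} already licenses at the first-order level that the corollary asserts.
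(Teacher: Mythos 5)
Your proposal is correct and follows essentially the same route the paper intends: the paper leaves Corollary~\ref{cor:LocalRoot} unproved as an immediate consequence of Corollary~\ref{cor:exp}, i.e., one simply repeats the three-line computation in the proof of Lemma~\ref{lem:LocalRoot} with the exact commutativity steps replaced by their BCH first-order approximations, exactly as you write. Your additional remarks on the iterated merging of exponentials and the typo $\phi(t)^{1/K}$ versus $\phi(\p)^{1/K}$ are sound but do not change the argument.
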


We state the analogues of key results for generalized cylinders. Their proofs, as well as the proof of Theorem~\ref{thm:GeneralGain}, are straightforward applications of Corollaries~\ref{cor:Morphism} and~\ref{cor:LocalRoot}.

\begin{proposition}
The equilibrium $\{E_{ij} = I : i,j \in \Nat_N\}$ is isolated.
\end{proposition}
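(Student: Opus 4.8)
The plan is to mirror, in the approximate/linearized setting, the argument that established isolation of equilibria on a generalized cylinder. Recall that there we combined three ingredients: (i) Proposition~\ref{prop:OPEquilibria}, which identified the equilibrium set as the solution set of a linear system governed by $L \otimes I_m$ on the appropriate quotient of $\Real^{Nm}$; (ii) the fact that $\Ker(\phi_i) = \{0\}$ for the ``$\Real$'' directions, so the quotient collapses to genuine equality; and (iii) for the ``$\mathbb{T}$'' directions, the discreteness of $\Ker(\phi_i) \subset \Real$, which forces distinct equilibrium representatives to be separated by at least $d_i > 0$. On a general Lie group near the identity, there are no torus directions to worry about inside the neighbourhood $U$: the map $\phi$ restricted to a small ball is a diffeomorphism onto its image in $\G \cap U$, so the quotient issue disappears entirely and we only need the analogue of (i) and (ii).

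First I would invoke Corollaries~\ref{cor:Morphism} and~\ref{cor:LocalRoot} to transcribe the computation in the proof of Proposition~\ref{prop:OPEquilibria} into the general setting. The equilibrium condition is still $\exp(T\Omega_i) = I$ for all $i$, i.e.\ $\bigl(\prod_{j \in \mathcal N_i} E_{ij}^{w_{ij}}\bigr)^{1/K} = I$; passing to exponential coordinates via $\phi$ and using $\phi(\p_i + \p_j) \approx \phi(\p_i)\phi(\p_j)$ and $\phi(\p/K)\approx\phi(\p)^{1/K}$, this becomes, to first order in a neighbourhood of the origin, $(\ell_i \otimes I_m)\bigp = \mathbf 0_m$ for each $i$, together with $\p_{11} = \mathbf 0_m$. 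Stacking yields $(L \otimes I_m)\bigp = \mathbf 0_{Nm}$ with $\p_{11} = \mathbf 0_m$. Since $\G$ is connected and $\mathcal G$ is connected, $L$ has a simple zero eigenvalue with eigenvector $\mathbf 1_N$, so $\Ker(L \otimes I_m) = \mathrm{span}\{\mathbf 1_N\} \otimes \Real^m$, which is exactly the ``all errors equal'' subspace; intersecting with $\{\p_{11} = \mathbf 0_m\}$ picks out the single point $\bigp = \mathbf 0_{Nm}$, i.e.\ $E_{ij} = I$ for all $i,j$. Thus within a sufficiently small neighbourhood of the identity in $\G^N$, the synchronized state is the only equilibrium, which is precisely the assertion.

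I expect the main obstacle to be bookkeeping the approximation: the equilibrium condition $\exp(T\Omega_i) = I$ is exact, but the passage to the linear relation $(L \otimes I_m)\bigp = \mathbf 0$ uses the first-order approximations of Corollaries~\ref{cor:Morphism}--\ref{cor:LocalRoot}, whose error terms are quadratic in the $\p_{ij}$. To make ``isolated'' rigorous one should argue that the true nonlinear equilibrium map $F(\bigp) = 0$ has derivative at $\bigp = \mathbf 0_{Nm}$ equal to $\tfrac{1}{K}(L \otimes I_m)$ restricted to the invariant subspace $\mathcal V = \{\p_{11} = \mathbf 0_m\}$, which is nonsingular there (Lemma~\ref{lem:AllBut1} shows the corresponding eigenvalues avoid $1$, equivalently $L$ restricted off $\mathbf 1_N$ is nonsingular), and then apply the inverse function theorem to conclude $\bigp = \mathbf 0_{Nm}$ is the unique zero of $F$ in a small ball. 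This is the ``straightforward application'' the surrounding text alludes to; the only care needed is to note that $\phi$ is a genuine local diffeomorphism near $0$ so that these exponential coordinates are well-defined and the Jacobian computation is legitimate.
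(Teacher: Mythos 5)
Your argument is correct and follows the route the paper intends: the paper leaves this proof implicit (``straightforward applications of Corollaries~\ref{cor:Morphism} and~\ref{cor:LocalRoot}''), and you carry out exactly that transcription of the generalized-cylinder argument, correctly observing that near the identity the quotient/torus bookkeeping disappears and that $\Ker(L \otimes I_m) = \mathrm{span}\{\mathbf 1_N\}\otimes\Real^m$ intersected with $\{\p_{11} = \mathbf 0_m\}$ is trivial. Your closing remark on upgrading the first-order identities to a genuine isolation statement via nonsingularity of the Jacobian on $\mathcal V$ and the inverse function theorem is a useful piece of rigour that the paper does not spell out.
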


\begin{proposition}\label{prop:GeneralStabilityIdentity}
Every equilibrium has the same stability properties as the identity.
\end{proposition}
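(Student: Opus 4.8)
The plan is to rerun the argument of Proposition~\ref{prop:1PTorus} essentially verbatim, with the exact identities of Proposition~\ref{prop:Morphism} and Lemma~\ref{lem:LocalRoot} replaced by their first-order counterparts, Corollaries~\ref{cor:Morphism} and~\ref{cor:LocalRoot}, so that the conclusion persists at the level of linearizations. Fix an equilibrium $\{\Xi_{i1},\ldots,\Xi_{iN}\} \in \G^N$ lying in the neighbourhood of the identity on which Assumption~\ref{ass:Close} holds and Corollaries~\ref{cor:Morphism} and~\ref{cor:LocalRoot} are valid, and set $\xi_{ij} \coloneqq \phi^{-1}(\Xi_{ij})$. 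First I would introduce the shifted errors $\bar E_{ij} \coloneqq \Xi_{ij}^{-1}E_{ij}$, so that the network sits at the chosen equilibrium precisely when every $\bar E_{ij} = I$; by Corollary~\ref{cor:Morphism} this is, to first order, the affine change of exponential coordinates $\bar\p_{ij} \approx \p_{ij} - \xi_{ij}$.

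Next I would substitute $E_{1j} = \Xi_{1j}\bar E_{1j}$ into~\eqref{eq:ErrorAll1} and push everything through $\phi^{-1}$ exactly as in the derivation of~\eqref{eq:LinearDynamics}, now invoking Corollaries~\ref{cor:Morphism} and~\ref{cor:LocalRoot} wherever the generalized-cylinder derivation invoked Proposition~\ref{prop:Morphism} and Lemma~\ref{lem:LocalRoot}. The product over $\mathcal N_i$ splits additively and the $1/K$ power pulls through, giving $T\Omega_i \approx -\tfrac1K(\ell_i \otimes I_m)\bar{\bigp} + \tfrac1K\sum_{p \in \mathcal N_i}w_{ip}(\xi_{1p} - \xi_{1i})$. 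The constant term here is $\Log$ of $\exp(T\Omega_i)$ evaluated at the equilibrium, and an equilibrium of the driftless system~\eqref{eq:dtplant2} is characterized by $\exp(T\Omega_i) = I$ for all $i$ (cf.~\eqref{eq:EquilibriumInput}); since we work inside the neighbourhood $U$ on which $\Log$ is a genuine left inverse of $\exp$, that constant term vanishes. Feeding $T\Omega_i \approx -\tfrac1K(\ell_i \otimes I_m)\bar{\bigp}$ back into $\bar E_{ij}^+ = \Xi_{ij}^{-1}\exp(-T\Omega_i)\Xi_{ij}\,\bar E_{ij}\,\exp(T\Omega_j)$ and noting that $\mathrm{Ad}_{\Xi_{ij}^{-1}}$ differs from the identity by a term of order $\|\xi\|$ acting on the already first-order quantity $T\Omega_i$, one finds that the linearization of the $\bar E_{ij}$ dynamics at $\bar E_{ij} = I$ is exactly the linear system~\eqref{eq:LinearDynamics} governing the $E_{ij}$ dynamics near the identity.

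It follows that every equilibrium shares its linearization with the identity, so by Lemma~\ref{lem:AllBut1} together with the invariance of the subspace $\{\bar{\bigp} : \bar\p_{11} = \mathbf 0_m\}$ --- the argument already used in the proof of Theorem~\ref{thm:OneParameterGain}, equivalently Theorem~\ref{thm:GeneralGain} --- this linearization is Schur on the relevant invariant subspace whenever $K$ satisfies~\eqref{eq:K}; the principle of linearized stability then yields local uniform exponential stability of each equilibrium with the same rate, so the stability properties are identical to those of the identity. The hard part is the second-order bookkeeping: one must verify that the constant term is cancelled \emph{exactly} by the equilibrium condition (it is, because $\exp(T\Omega_i) = I$ holds with equality at a true equilibrium) and that the conjugation by $\Xi_{ij}$, together with its adjoint twist, contributes only at quadratic order in the combined smallness of $\bar{\bigp}$ and $\xi$, so that neither perturbs the Jacobian. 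Everything else is the same computation that produced~\eqref{eq:LinearDynamics}.
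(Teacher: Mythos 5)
Your proposal is correct and follows essentially the same route as the paper: shift the error by the equilibrium via $\bar E_{ij} = \Xi_{ij}^{-1}E_{ij}$, replace the exact cylinder identities (Proposition~\ref{prop:Morphism}, Lemma~\ref{lem:LocalRoot}) by their BCH-based approximate counterparts (Corollaries~\ref{cor:Morphism} and~\ref{cor:LocalRoot}), and use the equilibrium condition $\exp(T\Omega_i)=I$ to eliminate the $\Xi$ terms, recovering the dynamics of the identity-centred error. You carry this out in exponential coordinates and are more explicit than the paper about the second-order bookkeeping (the vanishing constant term and the $\mathrm{Ad}_{\Xi_{ij}^{-1}}$ correction), but the underlying argument is the one the paper gives.
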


To illustrate how the proofs of these analogues differ, we present the difference between the proofs of Propositions~\ref{prop:1PTorus} and~\ref{prop:GeneralStabilityIdentity}.

\begin{proof}
The proof differs from that of Proposition~\ref{prop:1PTorus} in only one line of the arithmetic. By Corollaries~\ref{cor:exp} and~\ref{cor:LocalRoot},
\begin{equation*}
\begin{aligned}
	\bar E_{ij}^+ &\approx \left(\prod_{p \in \mathcal N_i}\bar E_{1i}^{-\w_{ip}}\bar E_{1p}^{\w_{ip}}\right)^{-\frac{1}{K}}\bar E_{ij}\left(\prod_{q \in \mathcal N_j}\bar E_{1j}^{-\w_{jq}}\bar E_{1q}^{\w_{jq}}\right)^{\frac{1}{K}} 
	 \left(\prod_{p \in \mathcal N_i}\Xi_{1i}^{-\w_{ip}}\Xi_{1p}^{\w_{ip}}\right)^{-\frac{1}{K}}\left(\prod_{q \in \mathcal N_j}\Xi_{1j}^{-\w_{jq}}\Xi_{1q}^{\w_{jq}}\right)^{\frac{1}{K}}.
\end{aligned}
\end{equation*}

The rest of the proof is identical.
\end{proof}

\section{Simulations}
\label{sec:Simulations}

\subsection{Comparison with Kuramoto Network on
  \texorpdfstring{$\SO{2}$}{SO(2)}}

The Lie group $\SO{2}$ is one dimensional, thus, it is a one-parameter subgroup of $\SO{n}$ for any $n \geq 2$. $\SO{2}$ is the group of rotations in the plane, which can be interpreted locally as a position on the circumference of a circle. Given an element $R \in \SO{2}$, its local coordinate $\p \in \Real$ is often called the ``phase'' or ``angle''. The Kuramoto oscillator is a popular model of synchronization of networks of oscillators. We can view a Kuramoto network of $N$ agents as a control system, where agent $i$ has phase $\theta_i \in \Real$ with dynamics
\begin{equation}\label{eq:Kuramoto}
	\dot\theta_i = u_i, \qquad u_i = -\sum_{j \in \mathcal N_i}a_{ij}\sin(\theta_i - \theta_j),
\end{equation}
where $a_{ij} \in \Real$ is the coupling strength between agents $i$ and $j$. System~\eqref{eq:Kuramoto} can be modelled as a system on a Lie group in the form of~\eqref{eq:ctplant}, where

\begin{equation*}
	R_i = \phi(\theta) = \begin{bmatrix}\cos(\theta) & -\sin(\theta) \\ \sin(\theta) & \cos(\theta)\end{bmatrix}, \quad
	\dot R_i = R_i\begin{bmatrix}0 & -1 \\ 1 & 0\end{bmatrix}u_i.
\end{equation*}

We simulate using $N = 3$ and $a_{ij} = 1$ for all $i,j \in \Nat_N$. It can be shown that with this choice of parameters, that~\eqref{eq:Kuramoto} achieves phase synchronization~\cite{Dorfler2014}. Sampling with period $T = 0.1$, we see in Figure~\ref{fig:KuramotoT01} that synchronization is preserved under sampling. But in Figure~\ref{fig:KuramotoT08}, we see that sampling with period $T = 0.8$, that sampling destroys synchronization.

 \begin{figure}[h!]
\centering
\includegraphics[width=0.9\textwidth]{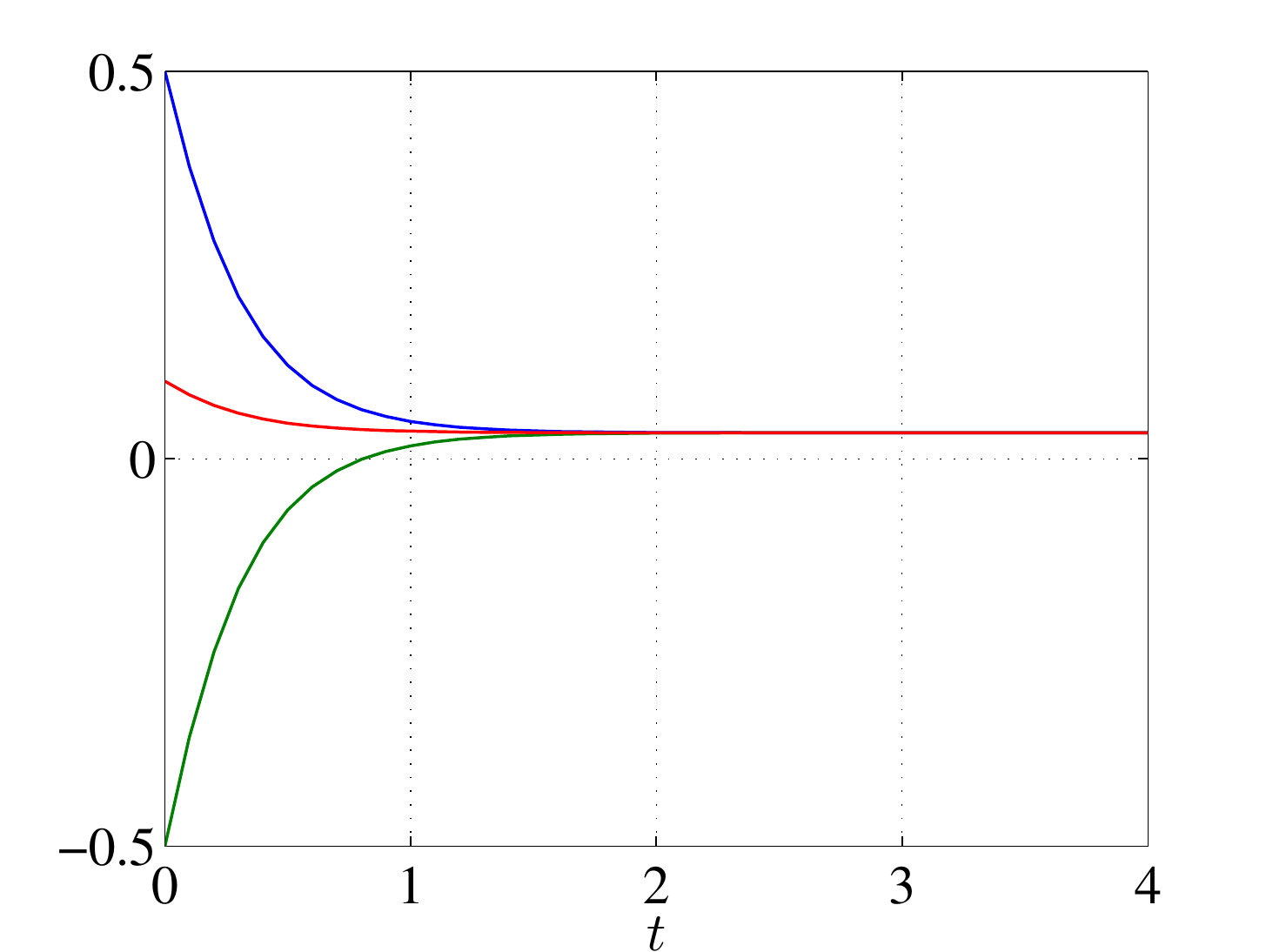}
\caption{Phases of sampled Kuramoto network with $T = 0.1$.}
\label{fig:KuramotoT01}
\end{figure}

 \begin{figure}[h!]
\centering
\includegraphics[width=0.9\textwidth]{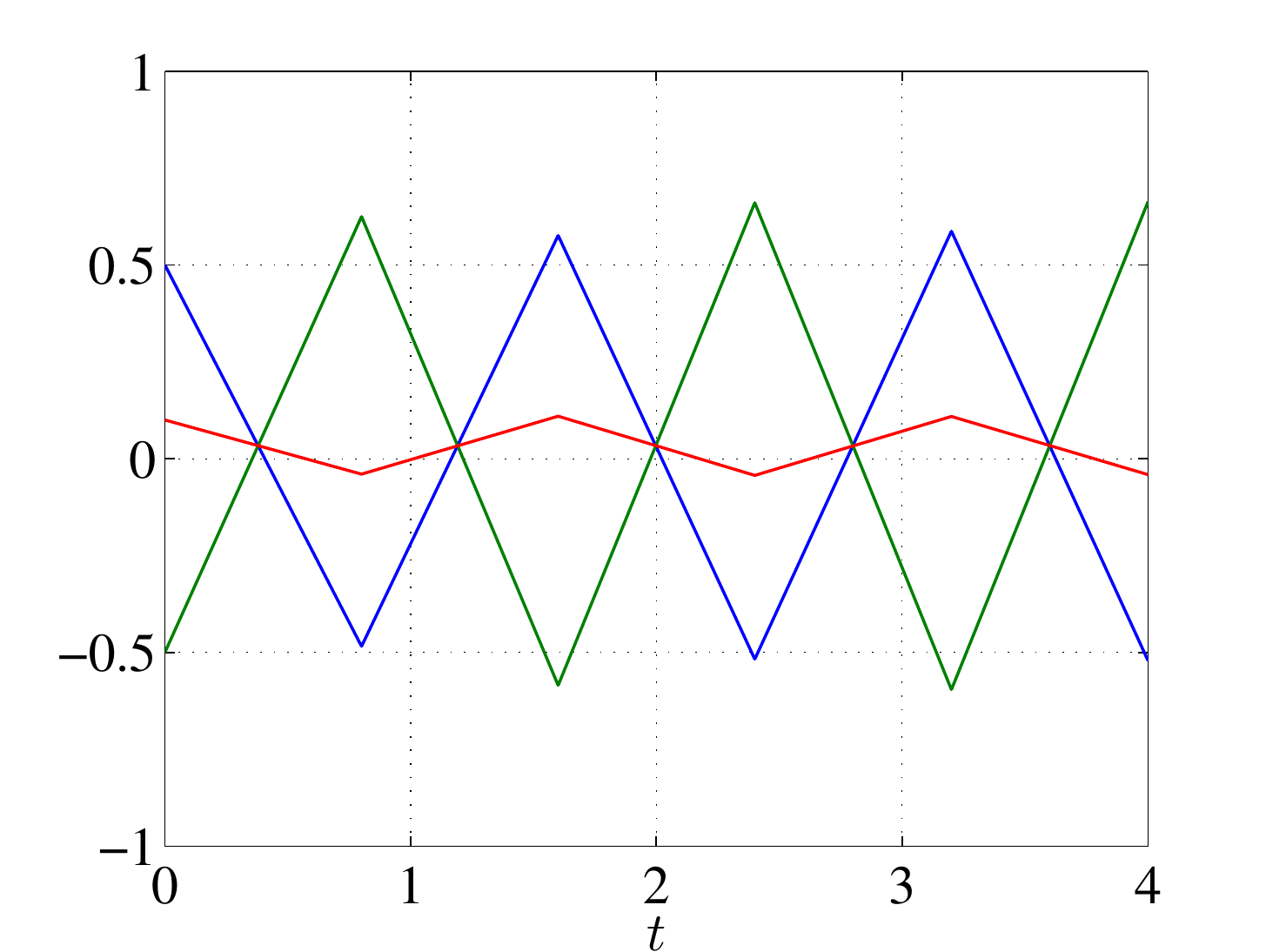}
\caption{Phases of sampled Kuramoto network with $T = 0.8$.}
\label{fig:KuramotoT08}
\end{figure}

We simulate this network again using the proposed controller with $K = 2$ and $T = 0.8$.

\begin{figure}[h!]
\centering
\includegraphics[width=0.9\textwidth]{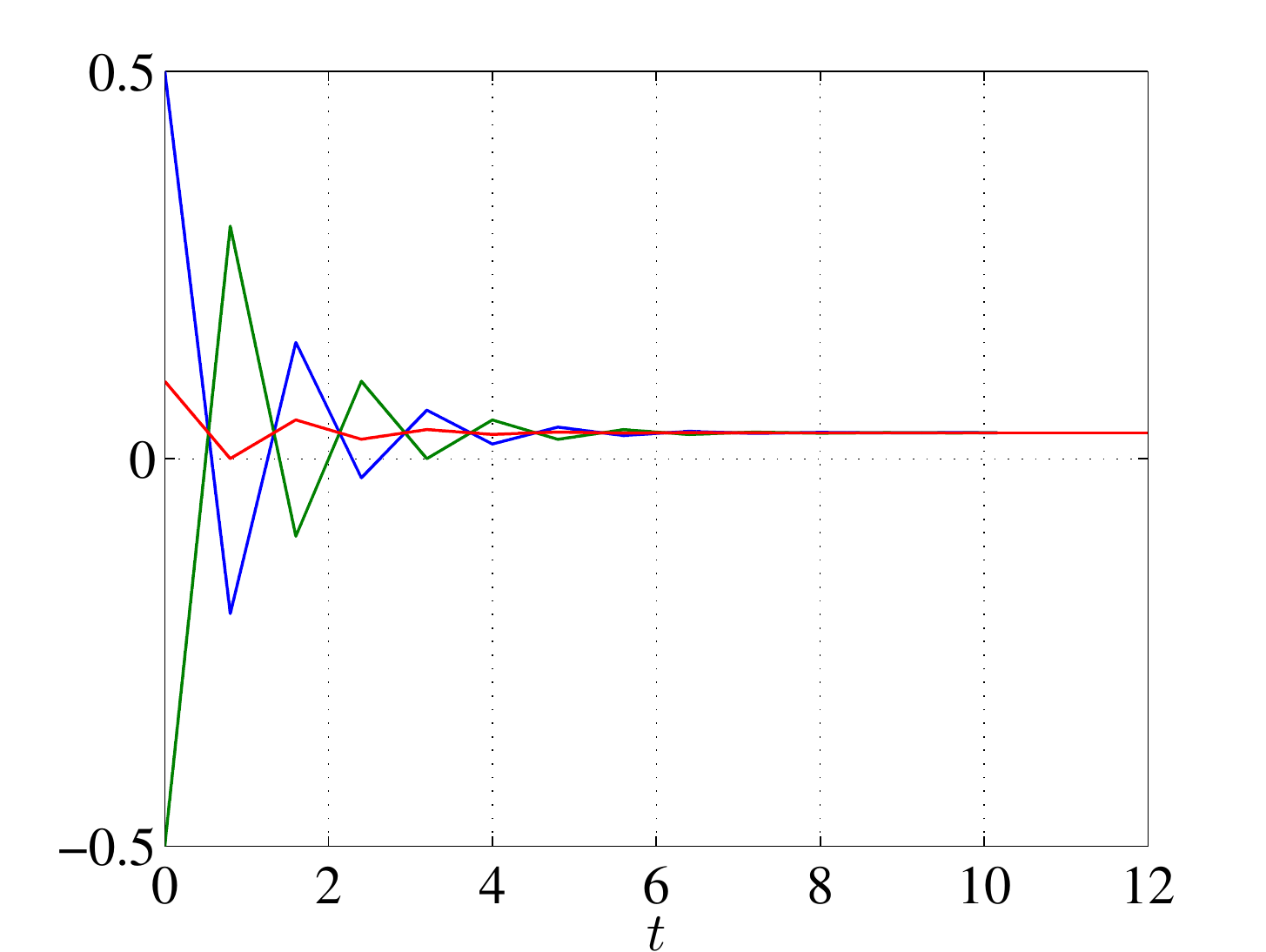}
\caption{Phases using proposed controller with $T = 0.8$.}
\label{fig:KuramotoSO2T08}
\end{figure}

As seen in Figure~\ref{fig:KuramotoSO2T08}, synchronization is achieved at $T = 0.8$, whereas it was lost using the na\"ively discretized Kuramoto coupling.

\subsection{Deadbeat Performance on \texorpdfstring{$\SO{2}$}{SO(2)}}
To illustrate Proposition~\ref{prop:Deadbeat}, we simulate a network with a complete connectivity graph on $\SO{2}$ with $K = N = 40$, $T = 1$ and initial phases evenly spaced from $-\pi/(N + 1)$ to $\pi/(N + 1)$:
\begin{equation*}
	\theta_i \coloneqq -\frac{\pi}{41} + i\frac{2\pi}{1599}.
\end{equation*}

As seen in Figure~\ref{fig:SO2FortyDeadbeat}, all error phases are driven to zero in a single time-step.

\begin{figure}[h!]
\centering
\includegraphics[width=0.9\textwidth]{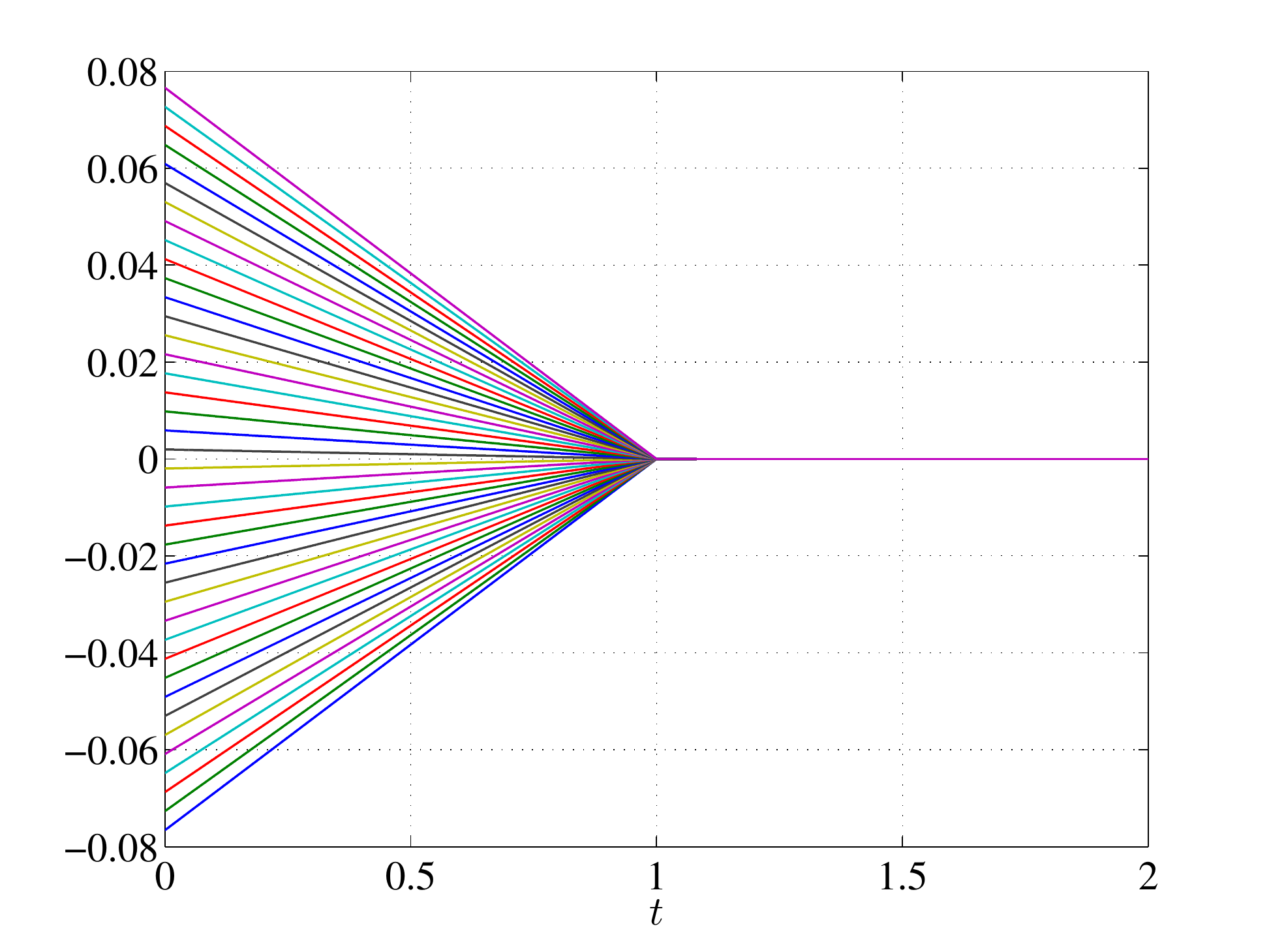}
\caption{Error phases of network on $\SO{2}$ with $T = 1$, $K = N = 40$, and $\mathcal G$ complete.}
\label{fig:SO2FortyDeadbeat}
\end{figure}

\subsection{Network on \texorpdfstring{$\SU{2}$}{SU(2)}}
We simulate a network on $\SU{2}$ to demonstrate Theorem~\ref{thm:GeneralGain} on a complex, non-commutative Lie group. We simulate a network with $N = 6$, $K = 3.5$, and graph Laplacian
%
\begin{equation}\label{eq:Triangular}
	L = \begin{bmatrix}
		0.5 & -0.1 & -0.1 & -0.1 & -0.1 & -0.1 \\
		0 & 0.8 & -0.2 & -0.2 & -0.2 & -0.2 \\
		0 & 0 & 0.9 & -0.3 & -0.3 & -0.3 \\
		0 & 0 & 0 & 0.8 & -0.4 & -0.4 \\
		0 & 0 & 0 & 0 & 0.5 & -0.5 \\
		0 & 0 & 0 & 0 & 0 & 0 \\
	\end{bmatrix}
\end{equation}

The Pauli matrices constitute the canonical basis of $\mathfrak{su}(2)$:
\begin{equation*}
	\sigma_1 = \begin{bmatrix}0 & j \\ j & 0\end{bmatrix}, \quad
	\sigma_2 = \begin{bmatrix}0 & -1 \\ 1 & 0\end{bmatrix}, \quad
	\sigma_3 = \begin{bmatrix}j & 0 \\ 0 & -j\end{bmatrix}.
\end{equation*}
We use the Pauli matrices to generate the initial conditions:
\begin{equation*}
	U_i(0) \coloneqq \exp(a_i\sigma_1 + b_i\sigma_2 + c_i\sigma_3),
\end{equation*}
where $a_i \coloneqq -0.32 + i0.6/(N-1)$, $b_i \coloneqq -0.06 +  i0.3/(N - 1)$, $c_i \coloneqq -0.42 + i0.6/(N - 1)$.

For visualization, we plot the Euclidean norms of $\|E_{1j} - I\|$, $j \in \{2,\ldots,N\}$. As seen in Figure~\ref{fig:SU2Triangular}, the errors tend to identity, thus synchronization is achieved.

\begin{figure}[h!]
\centering
\includegraphics[width=0.9\textwidth]{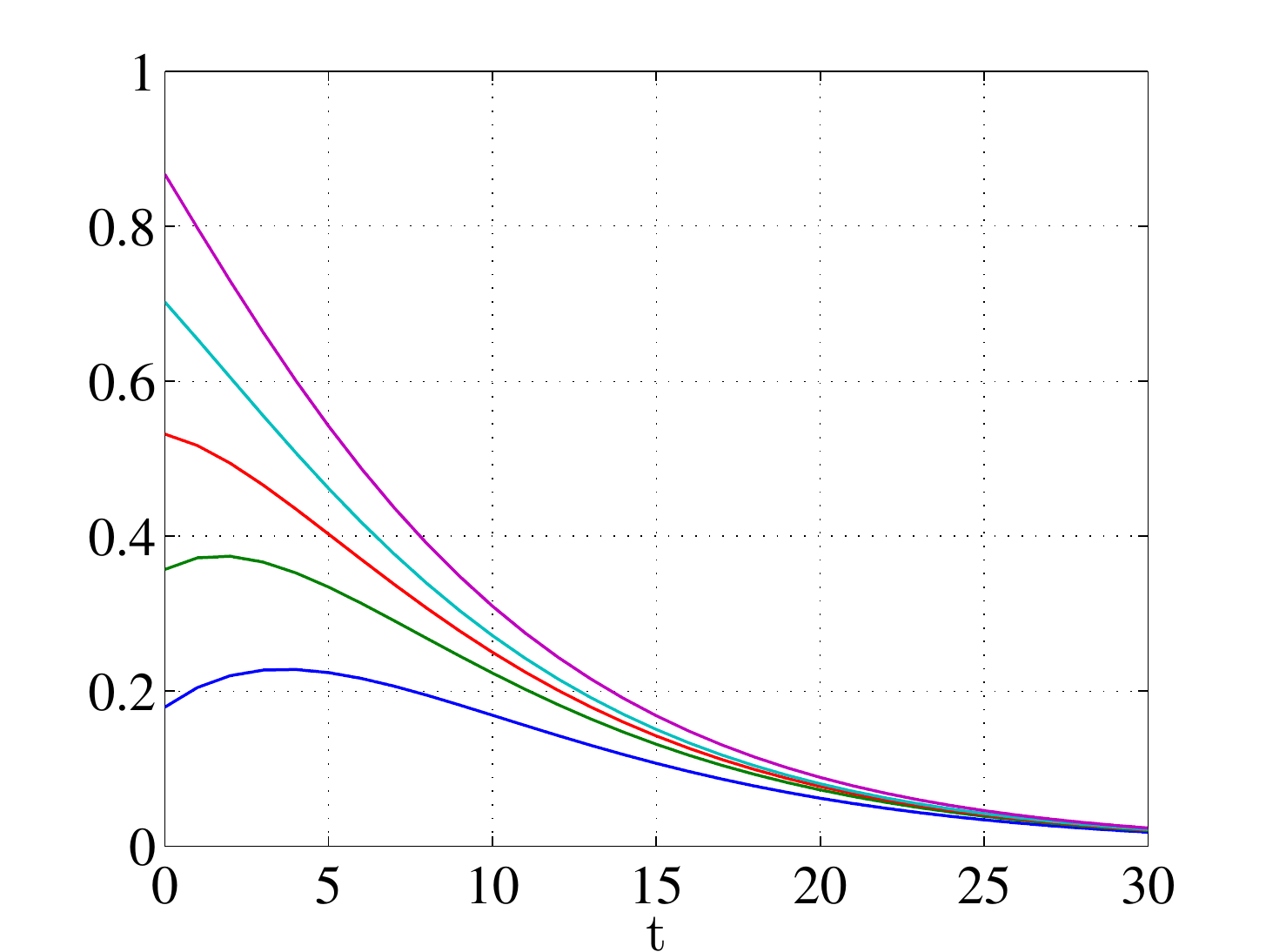}
\caption{$\|E_{1j} - I\|$, $j \in \{2,\ldots,6\}$ for a network on $\SU{2}$ with $N = 6$, $K = 3.5$, $T = 1$, and Laplacian~\eqref{eq:Triangular}.}
\label{fig:SU2Triangular}
\end{figure}

\FloatBarrier
\section{Future Research}
Future work includes extending our results to agents with dynamic
models and relaxing the assumption that the agents are fully
actuated. The latter could first be addressed by assuming that the Lie
algebra generated by the input vector fields equals $\g$. It would
also be of interest to extend our results to time-varying connectivity
graphs.


\bibliographystyle{IEEEtran}
\bibliography{arXiv}


\section*{Appendix}
\begin{proof}[Proof of Lemma~\ref{lem:AllBut1}]
Write $\lambda \in \Complex$ in Cartesian form $\lambda \coloneqq \sigma + j\omega$, $\sigma, \omega \in \Real$. The eigenvalues of the Laplacian of a directed graph lie in the closed interior of the region in $\Complex$, whose boundary is defined the parametrized curves: $c_i(\sigma) \coloneqq \sigma + j\omega_i(\sigma)$, $i \in \Nat_5$, and their complex conjugates $\bar c_i$, $i \in \{2,3,4\}$~\cite{Agaev2005}, where the $\omega_i$ are defined by the loci:
\begin{enumerate}
\item $(\sigma - 1)^2 + \omega_1(\sigma)^2 = (N - 1)^2$,
\item $\omega_2(\sigma) = \cot\left(\frac{\pi}{N}\right)\sigma$,
\item $\omega_3(\sigma) = \frac{1}{2}\cot\left(\frac{\pi}{2N}\right)$,
\item $\omega_4(\sigma) = \cot\left(\frac{\pi}{N}\right)(N - \sigma)$,
\item $(\sigma + 1 - N)^2 + \omega_5(\sigma)^2 = (N - 1)^2$.
\end{enumerate}
If $N = 2$, then this region reduces to the interval $[0, N]$, so $K > N/2$ implies $|f(\lambda)| < 1$ for all $\lambda \in \sigma(L)\setminus\{0\}$. If $N = 3$, then this region reduces to the rhombus with vertices $0$, $N$, and $\pm j\frac{N}{2\sqrt{3}}$. For $4 \leq N \leq 18$, the region is a hexagon, as illustrated in Figure~\ref{fig:Hexagon}. If $N \geq 19$, then the region appears as in Figure~\ref{fig:Region}.

%
 \begin{figure}[h!]
\centering
\includegraphics[width=0.9\textwidth]{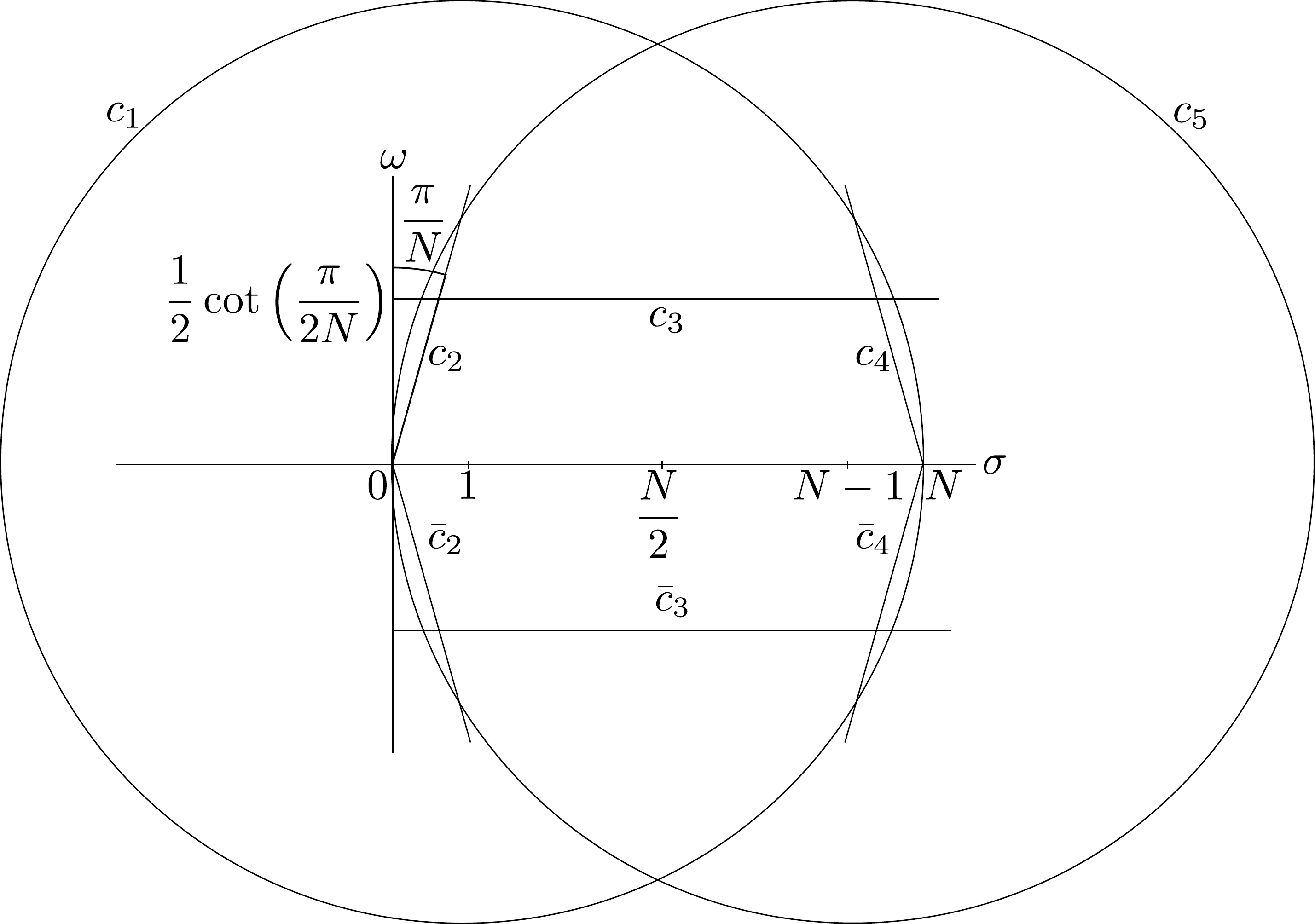}
\caption{Region containing the spectrum of the Laplacian for
  $4 \leq N \leq 18$. Figure is drawn for $N = 7$.}
\label{fig:Hexagon}
\end{figure}

 \begin{figure}[h!]
\centering
\includegraphics[width=0.9\textwidth]{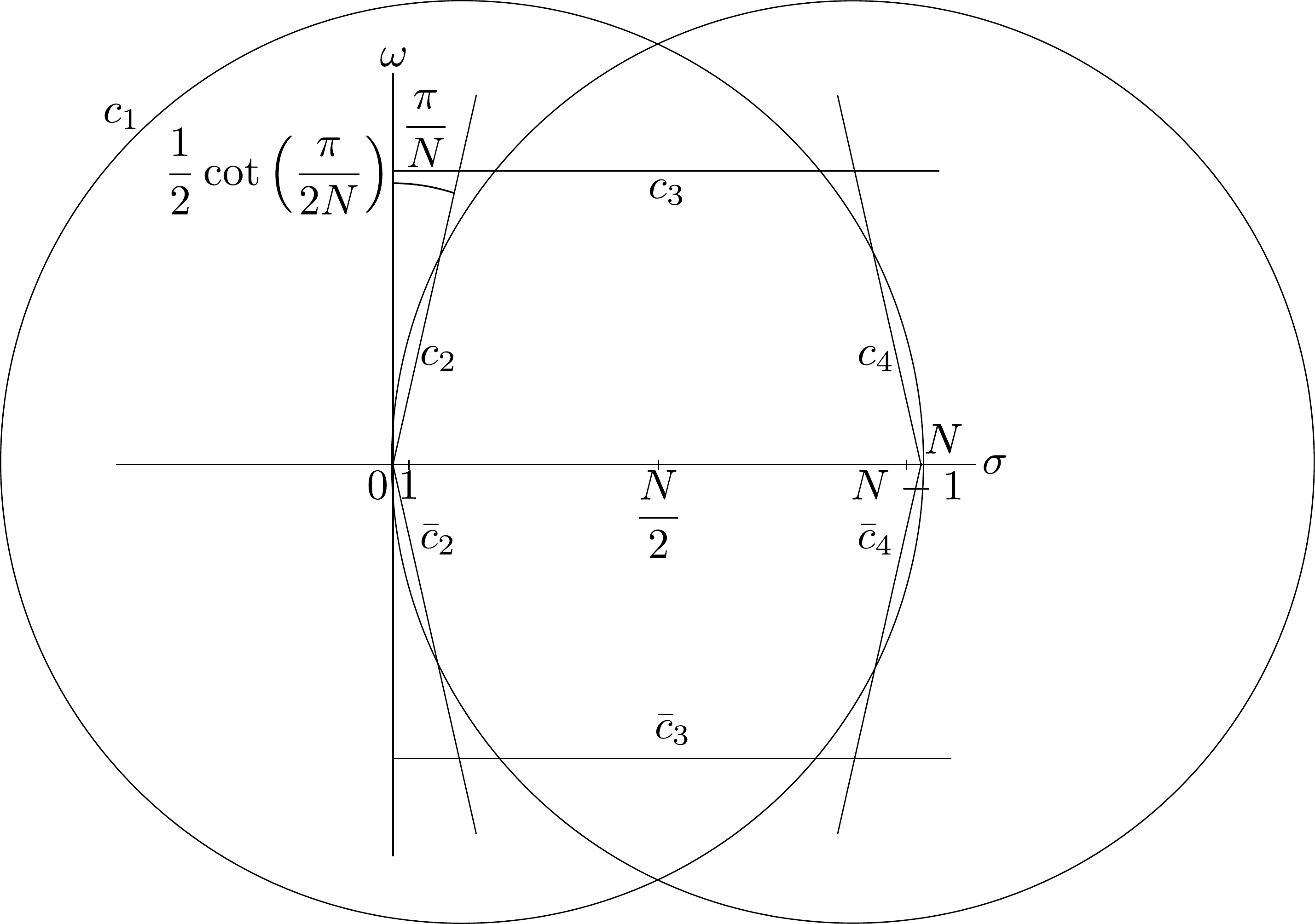}
\caption{Region containing the spectrum of the Laplacian for
  $N \geq 19$. Figure is drawn for $N = 30$.}
\label{fig:Region}
\end{figure}

To lower bound $K$ using only the number of agents $N$, we maximize the lower bound on $K$ in~\eqref{eq:GeneralKBound}, denoted by $g = 0.5|\lambda|^2/\mathrm{Re}(\lambda) = 0.5(\sigma^2 + \omega^2)/\sigma$, over this region. Since the region is a compact set, the maximum of $g$ is attained at either a critical point or at a point on the boundary of this set. The differential of $g$ is
\begin{equation*}
\begin{aligned}
\mathrm dg &= \begin{bmatrix}\frac{\partial g}{\partial \sigma} & \frac{\partial g}{\partial \omega}\end{bmatrix} \\
&= \begin{bmatrix}\frac{4\sigma^2 - 2(\sigma^2 + \omega^2)}{4\sigma^2} & \frac{\omega}{\sigma}\end{bmatrix} \\
&= \begin{bmatrix}\frac{1}{2}\left(1 - \frac{\omega^2}{\sigma^2}\right) & \frac{\omega}{\sigma}\end{bmatrix},
\end{aligned}
\end{equation*}
which vanishes nowhere, thus $g$ has no critical points. Therefore, $g$ attains its maximum at a point on the boundary. We parametrize the boundary of the region by $\sigma$, and maximize $g$ on this compact, one-dimensional set. Since the Laplacian is a real matrix, its eigenvalues appear in complex conjugate pairs, so it suffices to consider the upper half complex plane.

Let $\sigma_{ij}$ denote the value of $\sigma$ at which locus $i$ intersects locus $j$. Solving $\omega_i(\sigma) = \omega_j(\sigma)$ for $\sigma$, we find:
\begin{itemize}
\item $\sigma_{35} = N - 1 - \sqrt{(N - 1)^2 - (\frac{1}{2}\cot(\frac{\pi}{2N}))^2}$,
\item $\sigma_{13} = 1 + \sqrt{(N - 1)^2 - (\frac{1}{2}\cot(\frac{\pi}{2N}))^2}$,
\item $\sigma_{14} = (N - 1)\cos(\frac{2\pi}{N}) + 1$ or $N$,
\item $\sigma_{25} = (N - 1)(1 - \cos(\frac{2\pi}{N}))$ or $0$,
\item $\sigma_{23} = \frac{1}{2}\left(1 + \sec\left(\frac{\pi}{N}\right)\right)$,
\item $\sigma_{34} = N - \frac{1}{2}\left(1 + \sec\left(\frac{\pi}{N}\right)\right)$,
\item $\sigma_{24} = N/2$.
\end{itemize}
This boundary is illustrated for $4 \leq N \leq 18$ in Figure~\ref{fig:HexagonBoundary}, and $N \geq 19$ in Figure~\ref{fig:Boundary}.

\begin{figure}[h!]
\centering
\includegraphics[width=0.9\textwidth]{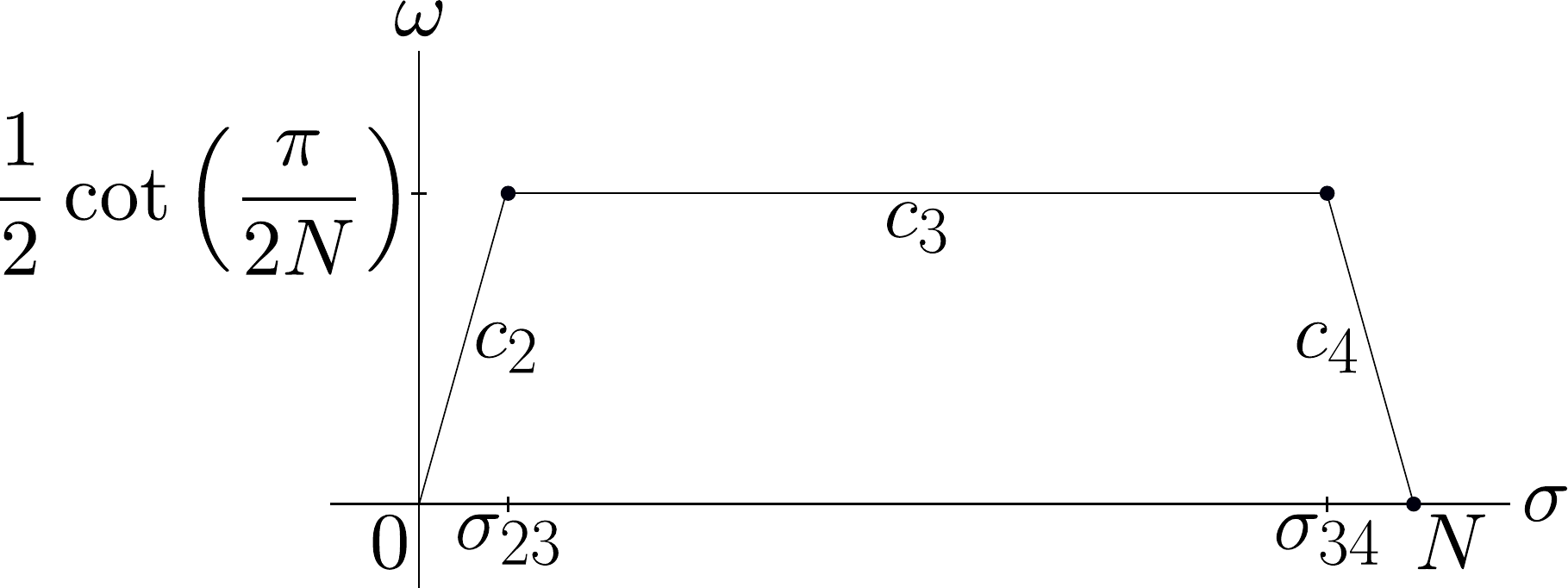}
\caption{Boundary of the region containing the eigenvalues of the Laplacian for $4 \leq N \leq 18$, $\omega \geq 0$.}
\label{fig:HexagonBoundary}
\end{figure}

\begin{figure}[h!]
\centering
\includegraphics[width=0.9\textwidth]{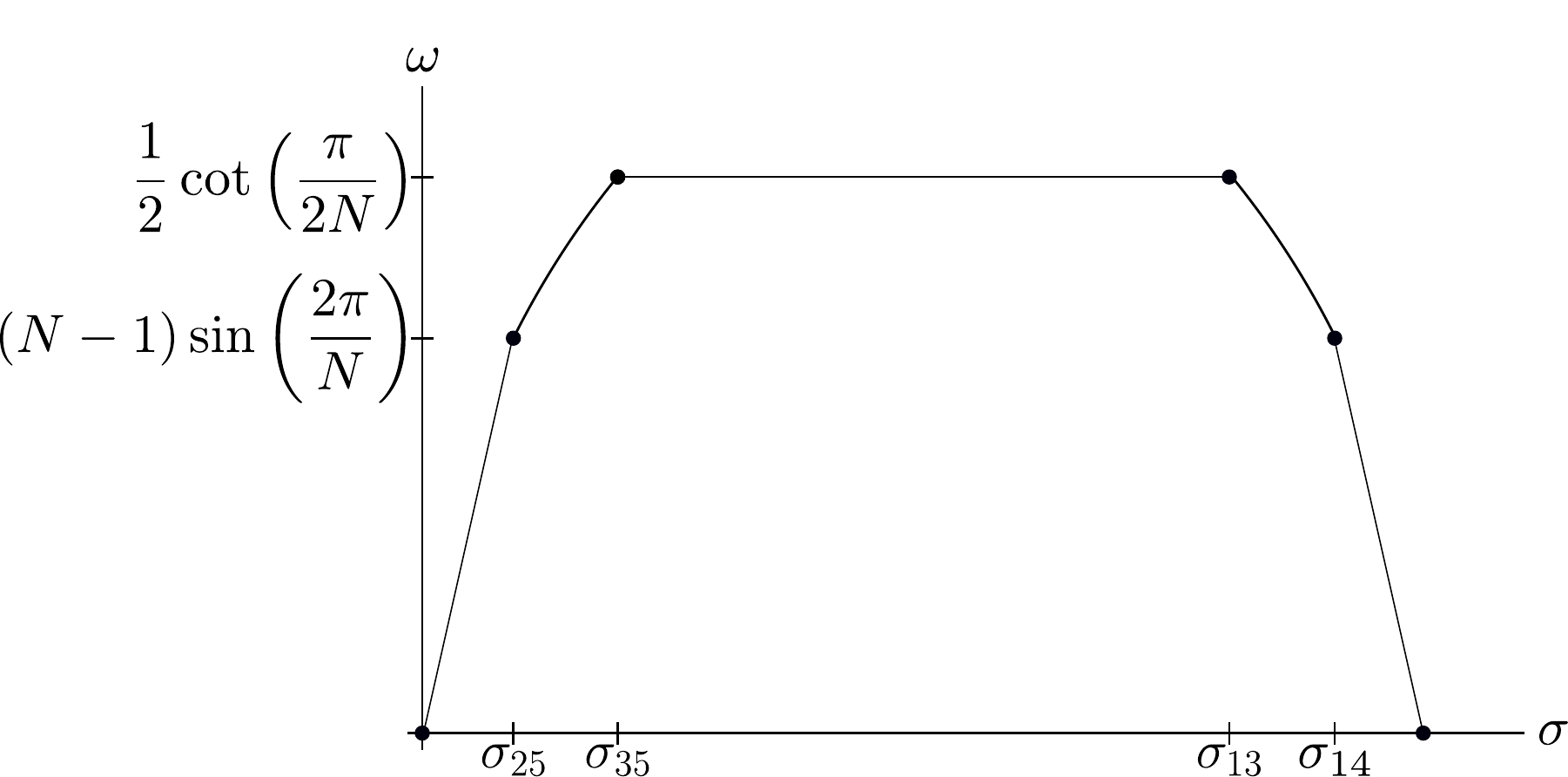}
\caption{Boundary of the region containing the eigenvalues of the Laplacian for $N \geq 19$, $\omega \geq 0$.}
\label{fig:Boundary}
\end{figure}

The Lie derivatives of $g$ in the direction of $c_i$ are:
\begin{itemize}
\item $L_{c_1}g(\sigma) = -\frac{N(N - 2)}{2\sigma^2}$,
\item $L_{c_2}g(\sigma) = \frac{1}{2}(1 + \cot^2(\frac{\pi}{N}))$,
\item $L_{c_3}g(\sigma) = \frac{1}{2} - \frac{\cot^2(\frac{\pi}{2N})}{8\sigma^2})$,
\item $L_{c_4}g(\sigma) = \frac{\csc^2\left(\frac{\pi}{N}\right)}{2} - \frac{N^2\cot^2\left(\frac{\pi}{N}\right)}{2 \sigma^2}$,
\item $L_{c_5}g(\sigma) = 0$.
\end{itemize}

Let $\sigma_i^\star$ denote the value of $\sigma$ at which $L_{c_i}g$ vanishes, which are the critical points of the restriction of $g$ to the boundary. We have
\begin{itemize}
\item $L_{c_1}g(\sigma) = 0$ if and only if $N = 0$ or $N = 2$,
\item $L_{c_2}g(\sigma) \neq 0$ for all $\sigma \in \Real$, $N \in \Nat$,
\item $\sigma_3^\star = \frac{1}{2}\cot(\frac{\pi}{2N})$,
\item $\sigma_4^\star = N\cos(\frac{\pi}{N})$,
\item $L_{c_5}g(\sigma) =0$ identically.
\end{itemize}

We determine whether $\sigma_i^\star$ is a local maximum or minimum by examining the second Lie derivative of $g$ evaluated at $\sigma_i^\star$:
\begin{equation*}
\begin{aligned}
&L^2_{c_i}g(\sigma) = \mathrm dL_{c_i}g(\sigma)\frac{\mathrm dc_i}{\mathrm d\sigma} \\
&\quad= \begin{bmatrix}\frac{\partial^2g}{\partial\sigma^2} + \frac{\partial^2g}{\partial\omega\partial\sigma}\frac{d\omega_i}{d\sigma} + \frac{\partial g}{\partial \omega}\frac{d^2\omega_i}{d\sigma^2} & \frac{\partial^2g}{\partial \sigma\partial\omega} + \frac{\partial^2g}{\partial\omega^2}\frac{d\omega_i}{d\sigma}\end{bmatrix}\begin{bmatrix}1 \\ \frac{d\omega_i}{d\sigma}\end{bmatrix} \\
&\quad= \frac{\omega^2}{\sigma^3} - 2\frac{\omega}{\sigma^2}\frac{d\omega_i}{d\sigma} + \frac{\omega}{\sigma}\frac{d^2\omega_i}{d\sigma^2} + \frac{1}{\sigma}\left(\frac{d\omega_i}{d\sigma}\right)^2.
\end{aligned}
\end{equation*}

For $i = 3$ and $i = 4$ we have
\begin{itemize}
	\item $L^2_{c_3}g(\sigma) = \cot^2(\frac{\pi}{2N})/(4\sigma^3) > 0$ for all $\sigma \geq 0$, $N \geq 2$,
	\item $L^2_{c_4}g(\sigma) = N^2\cot^2(\frac{\pi}{2})/\sigma^3 >0$ for all $\sigma \geq 0$, $N \geq 2$.
\end{itemize}
Thus, the critical points on loci $3$ and $4$ are minima, thus the maxima of $g$ on these loci restricted to the boundary are attained at the intersection points, which we now characterize.

Let $g_i \in \Real_{\geq 0}$ be the value of $g$ at $c_i(\sigma_i^\star)$, let $g_{ij} \in \Real_{\geq 0}$ be the value of $g$ at $c_i(\sigma_{ij}) = c_j(\sigma_{ij})$, and let $g_N$ be the value of $g$ at $\sigma = N$, $\omega = 0$. Since the value of $g$ is constant on locus $5$, we do not consider the intersection points of locus $5$ with loci $2$ or $3$. We have:
\begin{itemize}
\item $g_5 = N - 1$,
\item $g_{13} = 1 + \frac{N(N - 2)}{2(1 + \sqrt{(N - 1)^2 - (\frac{1}{2}\cot(\frac{\pi}{N}))^2})}$,
\item $g_{14} = 1 + \frac{N(N - 2)}{2(1 + (N - 1)\cos(\frac{2\pi}{N}))}$,
\item $g_{23} = \frac{1}{8}\csc^2(\frac{\pi}{2N})\sec(\frac{\pi}{N})$,
\item $g_{34} = \frac{\cot^2\left(\frac{\pi}{2 N}\right) + {\left(2N - 1 - \sec\left(\frac{\pi}{N}\right)\right)}^2}{4\left(2N - 1 - \sec\left(\frac{\pi}{N}\right)\right)}$,
\item $g_{24} = N(\cot^2(\frac{\pi}{N})+ 1)/4$,
\item $g_N = N/2$.
\end{itemize}

Finally, we identify the maximum value of $g$ on the boundary. For $N = 3$, the boundary is defined by only loci $2$ and $4$. $N = 3$ is also the only case in which loci $2$ and $4$ intersect. In can be shown numerically that for $N = 3$, that $g_N$ maximizes $g$. It can be verified numerically that if $4 \leq N \leq 9$, then $g_N$ is the maximum of $g$, and if $10 \leq N \leq 18$, then $g_{23}$ is the maximum of $g$, which proves the first two cases in~\eqref{eq:K}.

 \begin{figure}[h!]
\centering
\includegraphics[width=0.9\textwidth]{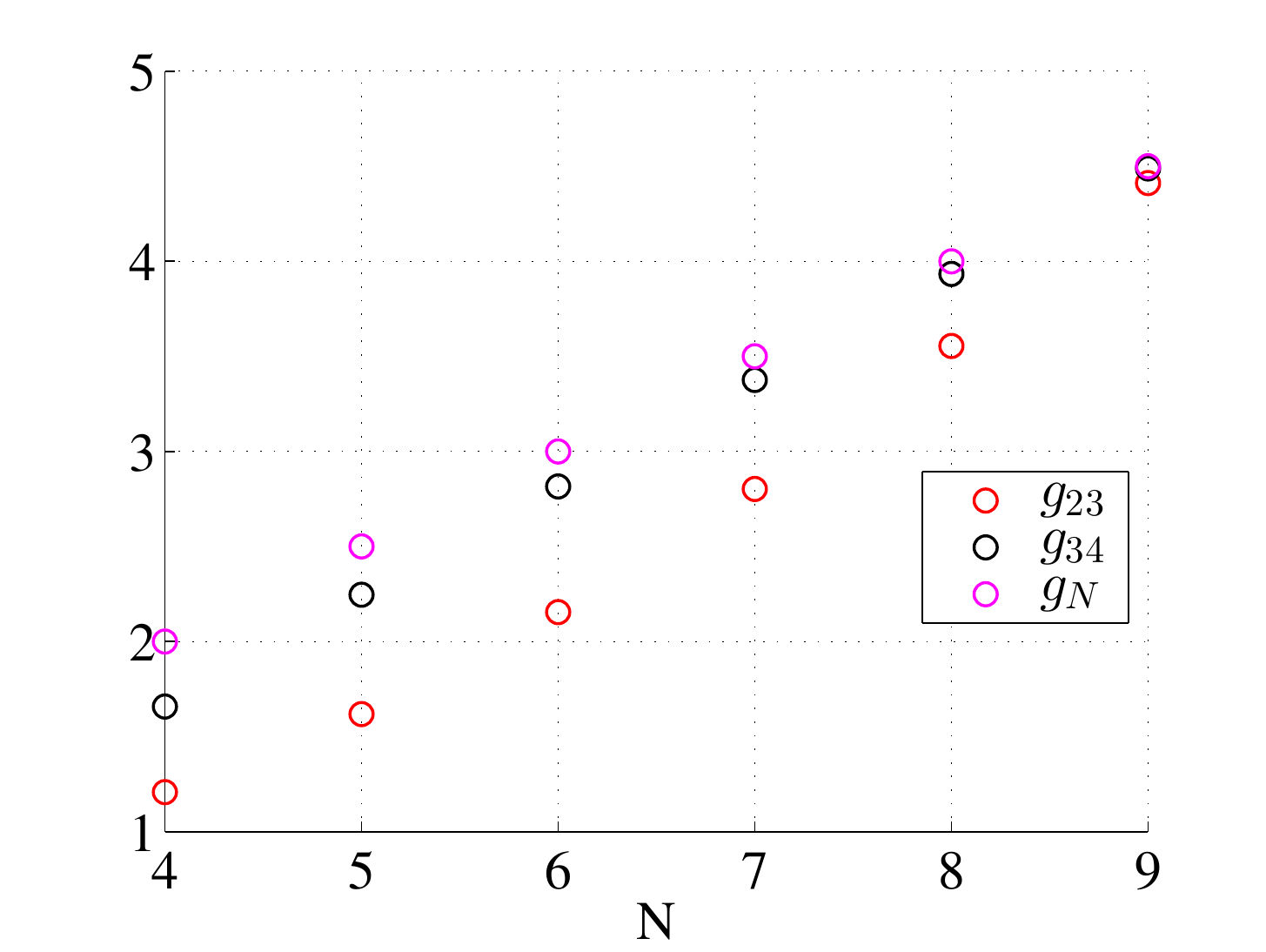}
\caption{Values of $g(N)$ for $4 \leq N \leq 9$.}
\label{fig:FourToNine}
\end{figure}

 \begin{figure}[h!]
\centering
\includegraphics[width=0.9\textwidth]{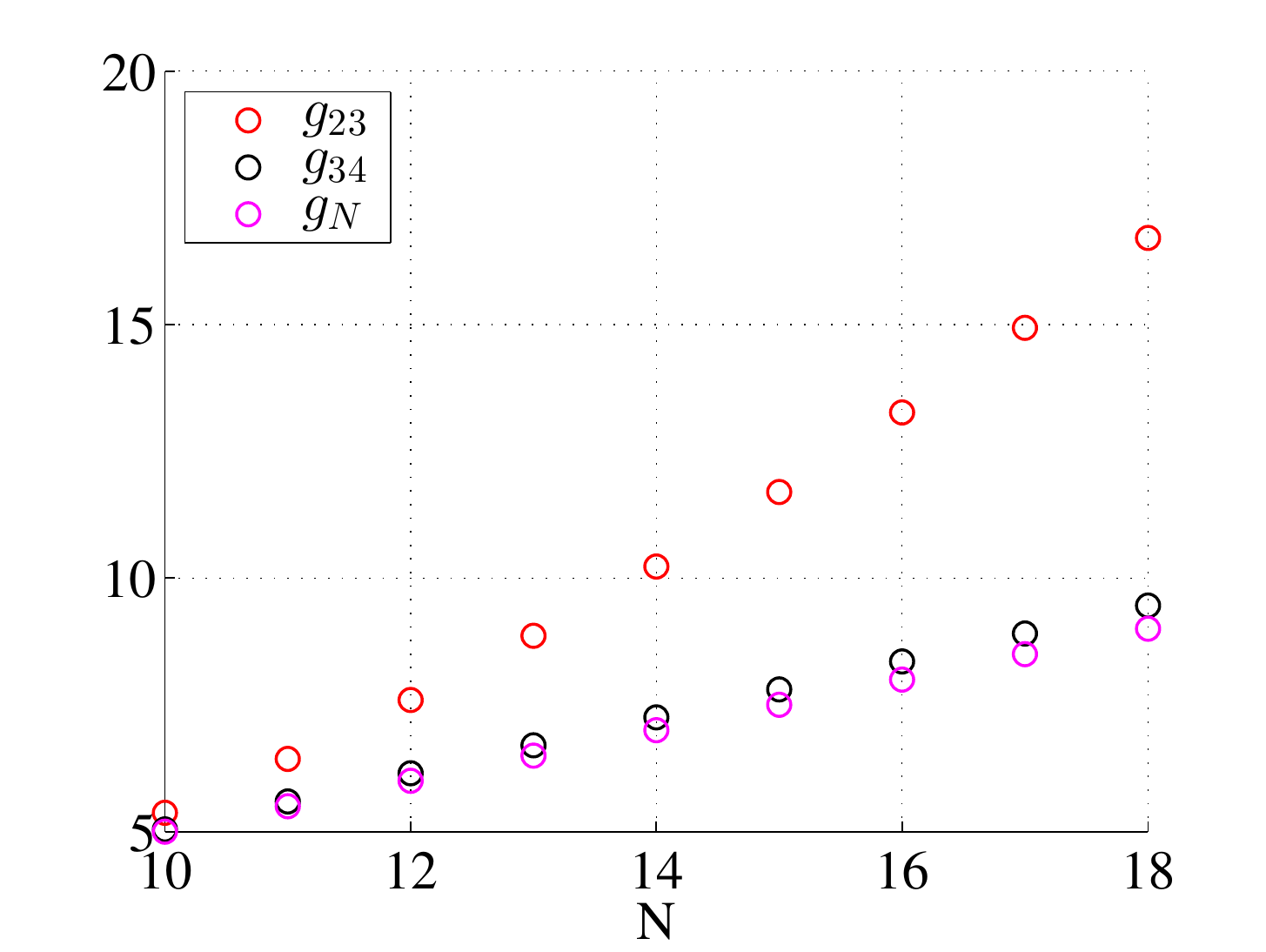}
\caption{Values of $g(N)$ for $10 \leq N \leq 18$.}
\label{fig:TenToEighteen}
\end{figure}

For $N \geq 19$, we now establish that $\mathrm{max}\{g_5, g_{13}, g_{14}\} = g_5$. Notice that $g_{13}$ and $g_{14}$ can be expressed:
\begin{equation*}
	g_{13} = 1 + \frac{N(N - 2)}{2\sigma_{13}}, \qquad g_{14} = 1 + \frac{N(N - 2)}{2\sigma_{14}}.
\end{equation*}
From their definitions, $g_5 \geq g_{13}$ if and only if
\begin{equation*}
	N - 1 \geq 1 + \frac{N(N - 2)}{2\sigma_{13}}
	\implies \sigma_{13} \geq \frac{N}{2}.
\end{equation*}
Similarly, we find that $g_5 \geq g_{14}$ if and only if $\sigma_{14} > N/2$. By the geometry of the region as discussed in~\cite{Agaev2005} and illustrated in Figure~\ref{fig:Region}, these inequalities hold for all $N \geq 19$.
%
%
\end{proof}

\end{document}